   \def\myfigpng#1#2{\pdfimage height #2 {#1.png}}
   \def\myfig#1#2{\pdfimage height #2 {#1.pdf}}
   \def\href#1#2{#2}
   \def\myfigpng#1#2{\includegraphics[height=#2]{#1.eps}}
   \def\myfig#1#2{\includegraphics[height=#2]{#1.eps}}
\def\tbar{|\hspace*{-0.15em}|\hspace*{-0.15em}|}
\def\Ahatstar{{}^{*}\!\!\hat{A}}
\def\text#1{{\rm #1}}
\def\calg#1{{\mathcal #1}}
\def\bbbb#1{{\mathbb #1}}
\def\PLTMG{{\sc PLTMG}}
\def\MC{{\sc MC}}
\def\MCLite{{\sc MCLite}}
\def\FETK{{\sc FETK}}
\def\Re{{\mathbb R}}
\newtheorem{theorem}{Theorem}[section]
\newtheorem{lemma}{Lemma}[section]
\newtheorem{definition}{Definition}[section]
\newtheorem{remark}{Remark}[section]
\newtheorem{algorithm}{Algorithm}
\newenvironment{proof}
{\begin{trivlist} \item[\hskip\labelsep{\it Proof.}]}
{\vbox{\hrule\hbox{\vrule height1.4ex\hskip1.4ex\vrule}\hrule}\end{trivlist}}
\begin{document}

\title[Generating Initial Data in GR using Adaptive FEM]
      {Generating Initial Data in General Relativity
       using Adaptive Finite Element Methods}
\author{B.\ Aksoylu$^1$, D.\ Bernstein$^2$, S.D.\ Bond$^3$, M.\ Holst$^4$}

\address{$^1$ Department of Mathematics and Center for Computation and 
  Technology, Louisiana State University, Baton Rouge, LA 70803, USA\\
  $^2$ Silicon Clocks, 39141 Civic Center Dr., Suite 450 Fremont, CA 94538, USA\\
  $^3$ Department of Computer Science,
  University of Illinois at Urbana-Champaign, Urbana, IL, 61801, USA\\
  $^4$ Department of Mathematics,
  University of California at San Diego, La Jolla, CA 92093, USA}

\ead{$^1$ burak@cct.lsu.edu, $^2$ david.h.bernstein@gmail.com, $^3$ sdbond@uiuc.edu,
  $^4$ mholst@sobolev.ucsd.edu}

\begin{abstract}
The conformal formulation of the Einstein constraint equations is
first reviewed, and we then consider the design, analysis, and implementation 
of adaptive multilevel finite element-type numerical methods for the 
resulting coupled nonlinear elliptic system.
We derive weak formulations of the coupled constraints, and review some
new developments in the solution theory for the constraints
in the cases of constant mean extrinsic curvature (CMC) data,
near-CMC data, and arbitrarily prescribed mean extrinsic curvature data.
We then outline some recent results on {\em a priori} and {\em a posteriori} 
error estimates for a broad class of Galerkin-type approximation methods 
for this system which includes techniques such as finite element, wavelet,
and spectral methods.
We then use these estimates to construct an adaptive finite 
element method (AFEM) for solving this system numerically, and
outline some new convergence and optimality results.
We then describe in some detail an implementation of the methods using 
the \FETK{} software package, which is an adaptive multilevel 
finite element code designed to solve nonlinear elliptic and
parabolic systems on Riemannian manifolds.
We finish by describing a simplex mesh generation algorithm for compact 
binary objects, and then look at a detailed example showing the use of 
\FETK{} for numerical solution of the constraints.
\end{abstract}


\maketitle

{\tiny\tableofcontents}
\pagestyle{headings}
\markboth{Generating Initial Data in GR using Adaptive FEM}
         {Generating Initial Data in GR using Adaptive FEM}

\section{Introduction}

One of the primary goals of numerical relativity is to compute solutions
of the full Einstein equations and to compare the results of such
calculations to the expected data from the current or next generation of 
interferometric gravitational wave observatories such as LIGO.
Such observatories are predicted to make observations primarily of
``burst'' events, mainly supernovae and collisions between compact
objects, e.g., black holes and neutron stars.
Computer simulation of such events in three space dimensions is a 
challenging task; a sequence of reviews that give a fairly complete
overview of the state of the field of numerical relativity at the time 
they appeared are~\cite{Seid96,CoTe99,Lehn01}.
While there is currently tremendous activity in this area of computational
science, even the mathematical properties of the Einstein system are still 
not completely understood; in fact, even the question of which is the most 
appropriate mathematical formulation of the constrained evolution system 
for purposes of accurate longtime numerical simulation is still being hotly 
debated (cf.~\cite{LiSc03,NOR04}).

As is well known, solutions to the Einstein equations are constrained in
a manner similar to Maxwell's equations, in that the initial data for a
particular spacetime must satisfy a set of purely spatial equations which are
then preserved throughout the evolution (cf.~\cite{MTW70}).
As in electromagnetism, these constraint equations may be put in the form
of a formally elliptic system; however for the Einstein equations, the 
resulting system is a set of four coupled nonlinear equations which are 
generally non-trivial to solve numerically; moreover, the solution theory 
is only partially understood at the moment (cf.~\cite{IsMo96,BaIs03,IsMu03}).
Until recently, most results were developed only in the case of 
constant mean extrinsic curvature (CMC) data, leading to a decoupling
of the constraints that allows them to be analyzed independently.
A complete characterization of CMC solutions on closed manifolds
appears in~\cite{Isen95}; see also the survey~\cite{rBjI04} and the
recent work on weak solutions~\cite{dM05,dM06}.
Of the very few non-CMC results available for the coupled system are
those for closed manifolds in~\cite{IsMo96}; these and all other 
known non-CMC results~\cite{pAaCjI07,jInOM04} hold only in the case that 
the mean extrinsic curvature is nearly constant (referred to as near-CMC).
Some very recent results on weak and strong solutions to the constaints
in the setting of both CMC and near-CMC solutions on compact manifolds 
with boundary appear in~\cite{HKN07a}.
Related results on weak and strong solutions to the constaints
in the setting of CMC, near-CMC, and truly non-CMC (far-from-CMC)
solutions on closed manifolds appear in~\cite{HNT07a,HNT07b}.

Numerous efforts to develop effective numerical techniques for the
constraint equations, and corresponding high-performance implementations,
have been undertaken over the last twenty years; the previously
mentioned numerical relativity reviews~\cite{Seid96,CoTe99,Lehn01} give a 
combined overview of the different discretization and solver technology 
developed to date.
A recent review that focuses entirely on the constraints is~\cite{Pfei04};
this work reviews the conformal decomposition technique and its more recent
incarnations, and also presents an overview of the current state of 
numerical techniques for the constraints in one of the conformal forms.
While most previous work has involved finite difference and spectral
techniques, both non-adaptive and adaptive, there have been previous 
applications of finite element (including adaptive) techniques to the 
scalar Hamiltonian constraint; these 
include~\cite{Mann83,Mann93,AMP97,Mukh96,HSS97}.
An initial approach to the coupled system using adaptive
finite element methods appears in~\cite{BeHo96}.
A complete theoretical analysis of adaptive finite element methods (AFEM)
for a general class of geometric PDE appears in~\cite{Hols2001a,HoTs07a},
and a complete analysis of AFEM for the Einstein constraints appears
in~\cite{HoTs07b}, including proofs of convergence and optimality.

While yielding many useful numerical results and new insights into the 
Einstein equations, most of the approaches previously used for the
constraints suffer from one or more of the following limitations:
\begin{itemize}
\item The $3$--metric must be conformally flat;
\item The extrinsic curvature tensor must be traceless or vanish altogether;
\item The momentum constraint must have an exact solution;
\item The problem must be spherically symmetric or axisymmetric;
\item The domain must be covered by a single coordinate system which
      may contain singularities;
\item The conformal metric must have a scalar curvature which can
      be easily computed analytically;
\item The gauge conditions and/or source terms must be chosen so that the 
      constraints decouple, become linear, or both; 
\item There is very little control of the approximation error, or even
      a rigorously derived {\em a priori} error estimate;
\item One must have access to a large parallel computer in order
      to obtain reasonably accurate solutions on domains of physical interest.
\end{itemize}
In this paper, we describe a general approach based on adaptive finite
element methods that allows one to avoid most of the limitations above.

More precisely, we develop a class of adaptive finite element methods 
for producing high-quality numerical approximations to solutions of the 
constraint equations in the setting of general domain topologies and general
non-constant mean curvature coupling of the two constraint equations.
Our approach is based on the adaptive approximation framework developed
in~\cite{Hols2001a,HoTs07a,HoTs07b}, which is a theoretical framework and 
corresponding implementation for adaptive multilevel finite element methods 
for producing high-quality reliable approximations to solutions of 
nonlinear elliptic systems on manifolds.
As in earlier work, we employ a 3+1 splitting of spacetime and use the
York conformal decomposition formalism to produce a covariant nonlinear
elliptic system on a 3-manifold.
This elliptic system is then written in weak form, which offers various
advantages for developing solution theory, approximation theory, 
and numerical methods.
In particular, this allows us to establish {\em a priori} error estimates
for a broad class of Galerkin-type methods which include not only
finite element methods, but also wavelet-based methods as well as
spectral methods.
Such estimates provide a base approximation theory for establishing
convergence of the underlying discretization approach.
We also derive {\em a posteriori} error estimates which can be
used to drive adaptive techniques such as local mesh refinement.
We outline a class of simplex-based adaptive algorithms for 
weakly-formulated nonlinear elliptic PDE systems, involving 
error estimate-driven local mesh refinement.

We then describe in some detail a particular implementation of the adaptive 
techniques in the software package \FETK{}~\cite{Hols2001a}, 
which is an adaptive 
multilevel finite element code based on simplex elements.
This software is designed to produce provably accurate numerical solutions 
to a large class of nonlinear covariant elliptic systems of tensor
equations on 2- and 3-manifolds in an optimal or nearly-optimal way.
It employs {\em a posteriori} error estimation, adaptive simplex
subdivision, unstructured algebraic multilevel methods, global inexact Newton
methods, and numerical continuation methods for the highly accurate numerical
solution of nonlinear covariant elliptic systems on 2- and 3-manifolds.
The \FETK{} implementation has several unusual features (described in 
Section~\ref{sec:fetk}) which make it ideally suited for solving problems
such as the Einstein constraint equations in an adaptive way.
Applications of \FETK{} to problems in other areas such as biology
and elasticity can be found in~\cite{HBW99,BHW99,BaHo98a}.

\subsection*{Outline of the paper}
We review the classical York conformal decomposition
in Section~\ref{sec:york_decomp}.
In Section~\ref{sec:weak_forms} we give a basic framework for
deriving weak formulations.
In Section~\ref{sec:sobolev_spaces} we briefly outline the
notation used for the relevant function spaces.
In Section~\ref{sec:elliptic_model} we go over a simple weak
formulation example.
In Section~\ref{sec:weakForm_gr} we derive an appropriate
symmetric weak formulation of the coupled constraint equations, 
and summarize a number of basic theoretical results.
In Section~\ref{sec:linearization_gr} we also derive the
linearized bilinear form of the nonlinear weak form for use with
stability analysis or Newton-like numerical methods.
A brief introduction to finite element methods for
nonlinear elliptic systems is presented in Section~\ref{sec:fem_details}.
Adaptive methods are described in Section~\ref{sec:fem_adaptivity},
and residual-type error indicators are derived in 
Section~\ref{sec:fem_residual}.
A derivation of the {\em a posteriori} error indicator for the constraints
is give in Section~\ref{sec:fem_residual_gr}.
We give two give an overview of {\em a priori} error estimates 
from~\cite{Hols2001a,HoTs07a,HoTs07b}
for general Galerkin approximations to solutions equations such as the 
momentum and Hamiltonian constraints in
Section~\ref{sec:fem_apriori}.
The numerical methods employed by \FETK{} are described in detail in
Section~\ref{sec:fetk}, including the finite element discretization,
the residual-based {\em a posteriori} error estimator, the adaptive
simplex bisection strategy, the algebraic multilevel solver, and the
Newton-based continuation procedure for the solution of the nonlinear
algebraic equations which arise.
Section~\ref{sec:fetk} describes a mesh generation algorithm for modeling
compact binary objects,
outlines an algorithm for computing conformal Killing vectors, 
describes the numerical approximation of the ADM mass, and
gives an example showing the use of \FETK{}
for solution of the coupled constraints in the setting of a binary
compact object collision.
The results are summarized in Section~\ref{sec:conclusions}.

\section{The Constraint Equations in General Relativity}
  \label{sec:theory}

\subsection{Notation}
  \label{sec:notation}

Let $(\calg{M},\gamma_{ab})$ be a connected compact Riemannian $d$-manifold 
with boundary $(\partial \calg{M},\sigma_{ab})$,
where the boundary
metric $\sigma_{ab}$ is inherited from $\gamma_{ab}$.
In this paper we are interested only in the Riemannian case, where
we assume $\gamma_{ab}$ is strictly positive {\em a.e.} in $\calg{M}$.
Later we will require some additional assumptions
on $\gamma_{ab}$ such as its smoothness class.
To allow for general boundary conditions, 
we will view the boundary $(d-1)$-submanifold $\partial \calg{M}$
(which we assume to be oriented)
as being formed from two disjoint
submanifolds $\partial_0 \calg{M}$ and $\partial_1 \calg{M}$, 
i.e.,
\begin{equation}
  \label{eqn:bndry}
\partial_0 \calg{M} \cup \partial_1 \calg{M} = \partial \calg{M},
\ \ \ \ \ \ \ \ 
\partial_0 \calg{M} \cap \partial_1 \calg{M} = \emptyset.
\end{equation}
When convenient in the discussions below, one of the two submanifolds
$\partial_0 \calg{M}$ or $\partial_1 \calg{M}$ may be allowed to shrink to
zero measure, leaving the other to cover $\partial \calg{M}$.
An additional technical assumption at times will be non-intersection
of the closures of the boundary sets:
\begin{equation}
  \label{eqn:bndry_overlap}
\overline{\partial_0 \calg{M}} \cap \overline{\partial_1 \calg{M}} = \emptyset,
\end{equation}
This condition is trivially satisfied if either $\partial_0 \calg{M}$ or
$\partial_1 \calg{M}$ shrinks to zero measure.
It is also satisfied in practical situations such as black hole models,
where $\partial_0 \calg{M}$ represents the outer-boundary of a truncated
unbounded manifold, and where $\partial_1 \calg{M}$ represents the 
inner-boundary at one or more black holes.
In any event, in what follows it will usually be necessary to make 
some minimal smoothness assumptions about the entire boundary
submanifold $\partial \calg{M}$, such as Lipschitz continuity
(for a precise definition see~\cite{Adam78}).

We will employ the abstract index notation (cf.~\cite{Wald84}) and
summation convention for tensor expressions below, with indices
running from $1$ to $d$ unless otherwise noted.
Covariant partial differentiation of a tensor
$t^{a_1\cdots a_p}_{~~~~~~~b_1\cdots b_q}$
using the connection provided by the metric $\gamma_{ab}$ will be denoted
as $t^{a_1\cdots a_p}_{~~~~~~~b_1\cdots b_q;c}$
or as $D_c t^{a_1\cdots a_p}_{~~~~~~~b_1\cdots b_q}$.
Denoting the outward unit normal to $\partial \calg{M}$ as $n_b$,
recall the Divergence Theorem for a vector
field $w^b$ on $\calg{M}$ (cf.~\cite{Lee97}):
\begin{equation}
   \label{eqn:divergence_basic}
\int_{\calg{M}} w^b_{~;b} ~dx = \int_{\partial \calg{M}} w^b n_b ~ds,
\end{equation}
where $dx$ denotes the measure on $\calg{M}$ generated by the volume
element of $\gamma_{ab}$:
\begin{equation}
   \label{eqn:volume_element}
dx = \sqrt{\text{det}~\gamma_{ab}}~dx^1 \cdots dx^d,
\end{equation}
and where $ds$ denotes the boundary measure on $\partial \calg{M}$
generated by the boundary volume element of $\sigma_{ab}$.
Making the choice $w^b = u_{a_1\ldots a_k} v^{a_1\ldots a_kb}$
and forming the divergence $w^b_{~;b}$ by applying the product rule
leads to a useful integration-by-parts formula for certain
contractions of tensors:
\begin{equation}
\hspace*{-2.2cm}
   \label{eqn:divergence}
  \int_{\calg{M}}
    u_{a_1\ldots a_k} v^{a_1\ldots a_kb}_{~~~~~~~~~;b} ~dx
= \int_{\partial \calg{M}}
    u_{a_1\ldots a_k} v^{a_1\ldots a_kb} n_b ~ds
- \int_{\calg{M}}
    v^{a_1\ldots a_kb} u_{a_1\ldots a_k;b} ~dx.
\end{equation}
When $k=0$ this reduces to the familiar case where $u$ and $v$ are scalars.

\subsection{The York Decomposition}
  \label{sec:york_decomp}

As discussed in the introduction, the most common form of the initial data
problem for numerical work is the coupled elliptic system obtained from the
York decomposition~\cite{York79,York73}.
We employ the standard notation whereby the
spatial 3--metric is denoted $\gamma_{ab}$ with indices running from 1 to 3.
The Hamiltonian and momentum constraints are
\begin{equation}
R + ({\rm tr} K)^2 - K_{ab} K^{ab} = 16 \pi \rho
\label{Hamiltonian1}
\end{equation}
and
\begin{equation}
D_b (K^{ab} - \gamma^{ab} {\rm tr} K) = 8 \pi j^a,
\label{momentum1}
\end{equation}
where $R$ is the scalar curvature of $\gamma_{ab}$, $K_{ab}$ is the
extrinsic curvature of the initial hypersurface, and $\rho$ and $j^a$ are the
mass density and current.

As is well known, (\ref{Hamiltonian1})~and~(\ref{momentum1}) form an 
under-determined
system for the components of the 3--metric and extrinsic curvature.
The York conformal decomposition method identifies some of these components 
as ``freely specifiable,'' i.e., source terms similar to the matter terms 
$\rho$ and $j^a$, and the remaining components as constrained.
The main strength of the formalism is that it does this in such a way that 
the constraint system becomes a coupled set of nonlinear elliptic equations
in these components.
In only the most general setting do the fully coupled equations have to be 
solved; in many applications they can be decoupled and further simplification
of the problem may eliminate one or more of the non-linearities.
The primary disadvantage of the formalism is the difficulty of controlling 
the physics of the initial data generated; the matter terms as well as the 
unconstrained parts of $\gamma_{ab}$ and $K_{ab}$ are not related directly to 
important physical and geometrical properties of the initial data.
One therefore generally relies on an evolution, i.e., a computation of the
entire spacetime, to determine ``what was in'' the initial data to begin with.
Some of the difficulties with use of the formalism have been recently 
overcome; cf.~\cite{Pfei04} for a recent survey.

The formalism is summarized in a number of 
places (see, e.g.,~\cite{York79,York73,Pfei04}), in this 
section we give only a brief description of the most standard form.
The manifold $\calg{M}$ is endowed with a Riemannian metric 
$\hat{\gamma}_{ab}$ 
and the solution to the initial data problem is, in part, a
metric $\gamma_{ab}$  related to $\hat{\gamma}_{ab}$ by
\begin{equation}
\gamma_{ab} = \phi^{4} \hat{\gamma}_{ab}.
\end{equation}
In what follows all hatted quantities are formed out of $\hat{\gamma}_{ab}$ 
in the usual way, e.g., the covariant derivative $\hat{D}_a$, the Riemann 
tensor $\hat{R}_{abcd}$, etc., while unhatted quantities are formed out 
of $\gamma_{ab}$.

There are four constraint equations and 12 components of $\gamma_{ab}$ and 
$K_{ab}$.  The eight freely specifiable components consist of the conformal 
part of the 3--metric and the trace and the transverse-traceless parts of the 
extrinsic curvature.
The remaining constrained parts are the conformal factor, $\phi$, 
and the ``longitudinal potential,'' $W^a$, of the extrinsic curvature.
Specifically, we decompose the extrinsic curvature tensor as 
\begin{eqnarray}
K^{ab} & = & \phi^{-10} (\Ahatstar^{ab} + (\hat{L}W)^{ab}) + 
\frac{1}{3} \phi^{-4} \hat{\gamma}^{ab} {\rm tr} K,
\\
A^{ab} & = & K^{ab} - \frac{1}{3} \gamma^{ab} {\rm tr}K
  =   \phi^{-10} \hat{A}^{ab},
\\
\hat{A}^{ab} & = & \Ahatstar^{ab} + (\hat{L}W)^{ab},
\\
(\hat{L}W)^{ab} & = & \hat{D}^a W^b + \hat{D}^b W^a 
                - \frac{2}{3} \hat{\gamma}^{ab} \hat{D}_c W^c,
\end{eqnarray}
where ${\rm tr} K = \gamma^{ab}K_{ab}$.
Note that $\hat{A}^{ab}$ and $(\hat{L}W)^{ab}$ are traceless by construction
and $\Ahatstar^{ab}$ is a freely specifiable transverse-traceless tensor.
The matter terms are decomposed as
\begin{equation}
\rho = \hat{\rho}\phi^{-8},
\;\;\;
j^a = \hat{j}^a \phi^{-10}.
\end{equation}
In these variables the Hamiltonian constraint~(\ref{Hamiltonian1}) becomes
\begin{equation}
   \label{eqn:ham}
\hat{\Delta}\phi = \frac{1}{8} \hat{R} \phi 
+ \frac{1}{12} ({\rm tr} K)^2 \phi^5
- \frac{1}{8} (\Ahatstar_{ab} + (\hat{L}W)_{ab})^2 \phi^{-7}
- 2 \pi \hat{\rho} \phi^{-3},
\end{equation}
where $\hat{\Delta}\phi = \hat{D}_a \hat{D}^a \phi$
and where $(T_{ab})^2 = T^{ab}T_{ab}$,
following the notation introduced in~(\ref{eqn:tensor_norm}).
In these variables the momentum constraint~(\ref{momentum1}) becomes
\begin{equation}
\hat{D}_b(\hat{L}W)^{ab} = \frac{2}{3} \phi^6 \hat{D}^a {\rm tr}K 
                         + 8 \pi \hat{j}^a.
   \label{eqn:mom}
\end{equation}

In this article, we are mainly interested in formulations of the
constraints on manifolds with boundary.
The primary motivation for this is the desire to develop techniques for
producing high-quality numerical solutions to the constraints, which require 
the use of mathematical formulations of the constraints involving finite 
domains with boundary.
In order to completely specify the strong (and later, the weak) forms
of the constraints on manifolds with boundary, we need to specify the
boundary conditions.
This is very problem dependent; however, we would like to at least
include the case of the vector Robin condition for asymptotically
flat initial data given in~\cite{YoPi82}.
This is
\begin{equation}
(\hat{L}W)^{bc} \hat{n}_c \left( \delta^a_b - \frac{1}{2}\hat{n}^a
\hat{n}_b \right)
+\frac{6}{7R} W^b \left(\delta^a_b - \frac{1}{8}\hat{n}^a \hat{n}_b
\right)
=O(R^{-3})
\end{equation}
where $R$ is the radius of a large, spherical domain.
The right hand side could be taken to be zero.  Noting that
$\delta^a_b + \hat{n}^a \hat{n}_b$ is the inverse of
$\delta^a_b - 1/2\; \hat{n}^a \hat{n}_b$ we may compute
\begin{equation}
   \label{eqn:robin_york}
(\hat{L}W)^{ab} \hat{n}_b
+\frac{6}{7R} W^b \left(\delta^a_b + \frac{3}{4}\hat{n}^a \hat{n}_b
\right) = 0.
\end{equation}
Hence we will consider the following linear Robin-like condition,
which is general enough to include~(\ref{eqn:robin_york}) and more 
recently proposed boundary conditions~\cite{CoPf04}:
\begin{equation}
   \label{eqn:robin_mom}
(\hat{L}W)^{ab} \hat{n}_b
 + C^a_{~b} W^b = Z^a ~\text{~on~} \partial_1 \calg{M}.
\end{equation}
Similarly, we are interested in analyzing the case of a 
Robin-like boundary condition with the
Hamiltonian constraint:
\begin{equation}
   \label{eqn:robin_ham}
\hat{n}_a \hat{D}^a \phi + c \phi = z ~\text{~on~} \partial_1 \calg{M}.
\end{equation}
Equations (\ref{eqn:ham})--(\ref{eqn:mom}) are known to be well-posed only
for restricted problem data and manifold
topologies~\cite{Lich44,Isen95,MuYo73,MuYo74a,MuYo74b,YoPi82,CIM92,IsMo96,CIY02,BaIs03,IsMu03}; most of the existing results are for the case of
constant mean extrinsic curvature (CMC) data on closed (compact without
boundary) manifolds, with some results for near-CMC data~\cite{IsMo96}.
Some very recent results on weak and strong solutions to the constaints
in the setting of CMC and near-CMC solutions on compact manifolds with
boundary appear in~\cite{HKN07a}.
Related results on weak and strong solutions to the constaints
in the setting of CMC, near-CMC, and truly non-CMC (far from-CMC)
solutions on closed manifolds appear in~\cite{HNT07a,HNT07b}.

\subsection{General Weak Formulations of Nonlinear Elliptic Systems}
\label{sec:weak_forms}

Consider now a general second-order elliptic system
of tensor equations in strong divergence form
over a Riemannian manifold $\calg{M}$ with boundary:
\begin{eqnarray}
\label{eqn:str1}
  - A^{ia}(x^b,u^j,u^k_{~;c})_{;a} + B^i(x^b,u^j,u^k_{~;c})
      & = & 0 ~\text{~in~} \calg{M}, \\ 
\label{eqn:str2}
    A^{ia}(x^b,u^j,u^k_{~;c}) n_a + C^i(x^b,u^j,u^k_{~;c})
      & = & 0 ~\text{~on~} \partial_1 \calg{M}, \\ 
\label{eqn:str3}
u^i(x^b) & = & E^i(x^b) ~\text{~on~} \partial_0 \calg{M},
\end{eqnarray}
where
\begin{equation}
1 \le a,b,c \le d,
\ \ \ 
1 \le i,j,k \le n,
\end{equation}
\begin{equation}
A : \calg{M} \times \bbbb{R}^n \times \bbbb{R}^{nd}
            \mapsto \bbbb{R}^{nd},
\ \ \ 
B : \calg{M} \times \bbbb{R}^n \times \bbbb{R}^{nd}
            \mapsto \bbbb{R}^n,
\end{equation}
\begin{equation}
C : \partial_1 \calg{M} \times \bbbb{R}^n \times \bbbb{R}^{nd} 
            \mapsto \bbbb{R}^n,
\ \ \ 
E : \partial_0 \calg{M} \times \mapsto \bbbb{R}^n.
\end{equation}
The divergence-form system~(\ref{eqn:str1})--(\ref{eqn:str3}), together
with the boundary conditions, can be viewed as 
an operator equation
\begin{equation}
   \label{eqn:parm_pre}
F(u) = 0,
\ \ \ \ \
F : \calg{B}_1 \mapsto \calg{B}_2^*,
\end{equation}
for some Banach spaces $\calg{B}_1$ and $\calg{B}_2$, where $\calg{B}_2^*$
denotes the dual space of $\calg{B}_2$.

Our interest here is primarily in coupled systems
of one or more scalar field equations
and one or more $d$-vector field equations.
The unknown $n$-vector $u^i$ then in general consists
of $n_s$ scalars and $n_v$ $d$-vectors, so that $n = n_s + n_v \cdot d$.
To allow the $n$-component system~(\ref{eqn:str1})--(\ref{eqn:str3})
to be treated notationally as if it were a single $n$-vector equation,
it will be convenient to introduce the following notation for the
unknown vector $u^i$ and for the metric of the product space of
scalar and vector components of $u^i$:
\begin{equation}
   \label{eqn:product_metric}
\hspace*{-1.0cm}
\calg{G}_{ij} = \left[ \begin{array}{ccc}
         \gamma_{ab}^{(1)} &        & 0      \\
                      & \ddots &              \\
         0            &        & \gamma_{ab}^{(n_e)} \\
         \end{array} \right],
~~~~~~
u^i = \left[ \begin{array}{c}
         u^a_{(1)}      \\
         \vdots         \\
         u^a_{(n_e)}      \\
         \end{array} \right],
~~~~~n_e=n_s+n_v.
\end{equation}
If $u^a_{(k)}$ is a $d$-vector
we take $\gamma_{ab}^{(k)} = \gamma_{ab}$; if $u^a_{(k)}$ is a scalar
we take $\gamma_{ab}^{(k)}=1$.

The weak form of~(\ref{eqn:str1})--(\ref{eqn:str3}) is obtained by taking the
$L^2$-inner-product between a vector $v^j$
(vanishing on $\partial_0 \calg{M}$) lying in a product space
of scalars and tensors, and the residual of the
tensor system~(\ref{eqn:str1}), yielding:
\begin{equation}
   \label{eqn:weak_almost}
\int_{\calg{M}} \calg{G}_{ij} \left( B^i - A^{ia}_{\ \ ;a} \right) v^j 
  ~dx = 0.
\end{equation}
Due to the definition of $\calg{G}_{ij}$ in~(\ref{eqn:product_metric}),
this is simply a sum of integrals of scalars, each of which is a contraction
of the type appearing on the left side in~(\ref{eqn:divergence}).
Using then~(\ref{eqn:divergence}) and~(\ref{eqn:str2}) together
in~(\ref{eqn:weak_almost}),
and recalling that $v^i=0$ on $\partial_0 \calg{M}$, yields
\begin{equation}
    \int_{\calg{M}} \calg{G}_{ij} A^{ia} v^j_{~;a}~dx
  + \int_{\calg{M}} \calg{G}_{ij} B^i v^j~dx
  + \int_{\partial_1 \calg{M}} \calg{G}_{ij} C^i v^j~ds
  = 0.
\end{equation}
This gives rise to a covariant weak formulation of the problem:
\begin{equation}
   \label{eqn:weak}
\text{Find}~ u \in \bar{u} + \calg{B}_1 ~\text{s.t.}~
     \langle F(u),v \rangle = 0,
   \ \ \forall~ v \in \calg{B}_2,
\end{equation}
for suitable Banach spaces of functions $\calg{B}_1$ and $\calg{B}_2$,
where the nonlinear weak form $\langle F(\cdot),\cdot \rangle$
can be written as:
\begin{equation}
\label{eqn:weakForm}
  \langle F(u),v \rangle
    = \int_{\calg{M}} \calg{G}_{ij} (A^{ia} v^j_{~;a} + B^i v^j)~dx
           + \int_{\partial_1 \calg{M}} \calg{G}_{ij} C^i v^j ~ds.
\end{equation}
The notation $\langle w,v \rangle$ will represent the duality pairing
of a function $v$ in a Banach space $\calg{B}$ with
a bounded linear functional (or {\em form}) $w$
in the dual space $\calg{B}^*$.
Depending on the particular function spaces involved,
the pairing may be thought of as coinciding with the $L^2$-inner-product
through the Riesz Representation Theorem~\cite{Yosi80}.
The affine shift tensor $\bar{u}$ in~(\ref{eqn:weak})
represents the essential or Dirichlet part
of the boundary condition if there is one; the existence of $\bar{u}$ such that 
$E = \bar{u} |_{\partial_0 \calg{M}}$ in the sense
of the Trace operator is guaranteed by the Trace Theorem
for Sobolev spaces on manifolds with boundary~\cite{Wlok92},
as long as $E^i$ in~(\ref{eqn:str3}) and $\partial_0 \calg{M}$ are smooth
enough (see~\S\ref{sec:sobolev_spaces} below).


The (Gateaux) linearization $\langle DF(u)w,v \rangle$ 
of the nonlinear form $\langle F(u),v \rangle$,
necessary for both local solvability analysis and
Newton-like numerical methods (cf.~\cite{Hols2001a}),
is defined formally as:
\begin{equation}
    \label{eqn:gateaux_definition}
\langle DF(u)w,v \rangle=\left.\frac{d}{d\epsilon}
    \langle F(u+\epsilon w),v \rangle \right|_{\epsilon=0}.
\end{equation}
This form is easily computed from most nonlinear forms
$\langle F(u),v \rangle$ which
arise from second order nonlinear elliptic problems, although the
calculation can be tedious in some cases (as we will see shortly
in the case of the constraints).

The Banach spaces which arise naturally as solution spaces for the
class of nonlinear elliptic systems in~(\ref{eqn:weak}) are product
spaces of the Sobolev spaces $W_{0,D}^{k,p}(\calg{M})$, or the
related Besov spaces $B^{k,p}_q(\calg{M})$.
This is due to the fact that under suitable growth conditions
on the nonlinearities in $F$, it can be shown
(essentially by applying the H\"{o}lder inequality)
that there exists $p_k,q_k,r_k$ satisfying $1 < p_k,q_k,r_k < \infty$
such that the choice
$$
\calg{B}_1 = W_{0,D}^{1,r_1}(\calg{M})
      \times \cdots \times W_{0,D}^{1,r_{n_e}}(\calg{M}),
~~~~
\calg{B}_2 = W_{0,D}^{1,q_1}(\calg{M})
      \times \cdots \times W_{0,D}^{1,q_{n_e}}(\calg{M}),
$$
\begin{equation}
   \label{eqn:banach}
\frac{1}{p_k} + \frac{1}{q_k} = 1,
~~~~
r_k \ge \text{min} \{p_k,q_k\},
~~~~
k=1,\ldots,n_e,
\end{equation}
ensures $\langle F(u),v \rangle$ in~(\ref{eqn:weakForm})
remains finite for all arguments~\cite{FuKu80,Hols2001a}.

\subsection{The Sobolev Spaces $W^{k,p}(\calg{M})$
            and $H^k(\calg{M})$ on Manifolds with Boundary}
  \label{sec:sobolev_spaces}

For a type $(r,s)$-tensor
$T^I_{~J} = T^{a_1a_2 \cdots a_r}_{~~~~~~~~~b_1b_2\cdots b_s}$,
where $I$ and $J$ are (tensor) multi-indices satisfying
$|I|=r$, $|J|=s$, define
\begin{equation}
   \label{eqn:tensor_norm}
|T^I_{~J}| = \left( T^I_{~J} T^L_{~M} \gamma_{IL}\gamma^{JM} \right)^{1/2}.
\end{equation}
Here, $\gamma_{IJ}$ and $\gamma^{IJ}$ are generated from the
Riemannian $d$-metric $\gamma_{ab}$ on $\calg{M}$ as:
\begin{equation}
   \label{eqn:metric_lebesgue}
\gamma_{IJ} = \gamma_{ab}\gamma_{cd} \cdots \gamma_{pq},
~~~~~~~~
\gamma^{IJ} = \gamma^{ab}\gamma^{cd} \cdots \gamma^{pq},
\end{equation}
where $|I|=|J|=m$, producing $m$ terms in each product.
This is just an extension of the Euclidean $l^2$-norm for vectors
in $\bbbb{R}^d$.
For example, in the case of a 3-manifold,
taking $|I|=1$, $|J|=0$, $\gamma_{ab} = \delta_{ab}$, gives
$|K^I_{~J}| = |K^a| = \left( K^a K^b \gamma_{ab} \right)^{1/2}
  = \left( K^a K^b \delta_{ab} \right)^{1/2}
  = \| K^a \|_{l^2(\bbbb{R}^3)}$.

Employing the measure $dx$ on $\calg{M}$ defined
in~(\ref{eqn:volume_element}),
the $L^p$-norm of a tensor on $\calg{M}$, $p \in [1,\infty)$, is defined as:
\begin{equation}
\hspace*{-1.5cm}
   \|T^I_{~J}\|_{L^p(\calg{M})} 
     = \left( \int_{\calg{M}} | T^I_{~J} |^p ~dx \right)^{1/p},
~~~~
\|T^I_{~J}\|_{L^{\infty}(\calg{M})}
    = \text{ess}~\sup_{x \in \calg{M}} |T^I_{~J}(x)|.
\end{equation}
The resulting $L^p$-spaces for $1 \le p \le \infty$ are denoted as:
\begin{equation}
   \label{eqn:lpspaces}
L^p(\calg{M}) = \left\{~ T^I_{~J} ~:~
   \|T^I_{~J}\|_{L^p(\calg{M})} < \infty ~\right\}.
\end{equation}

Covariant (distributional) differentiation of order $m=|L|$
(for some tensor multi-index $L$) using a connection generated by
$\gamma_{ab}$, or generated by possibly a different metric, will be denoted
as either of:
\begin{equation}
   \label{eqn:metric_distribution}
D^m T^I_{~J} = T^I_{~J;L},
\end{equation}
where $m$ should not be confused with a tensor index.
The Sobolev semi-norm of a tensor is defined through~(\ref{eqn:lpspaces}) as:
\begin{equation}
   \label{eqn:sobolev_semi_norm}
|T^I_{~J}|_{W^{m,p}(\calg{M})}^p
= \sum_{|L|=m} \| T^I_{~J;L} \|_{L^p(\calg{M})}^p,
\end{equation}
and the Sobolev norm is subsequently defined as:
\begin{equation}
   \label{eqn:sobolev_norm}
\|T^I_{~J}\|_{W^{k,p}(\calg{M})}
= \left( \sum_{0\le m\le k}
     | T^I_{~J} |_{W^{m,p}(\calg{M})}^p \right)^{1/p}.
\end{equation}
The resulting Sobolev spaces of tensors are then defined as:
$$
W^{k,p}(\calg{M}) = \left\{~ T^I_{~J} ~:~
   \|T^I_{~J}\|_{W^{k,p}(\calg{M})} < \infty ~\right\},
$$
\begin{equation}
   \label{eqn:sobolev_space}
W^{k,p}_0(\calg{M}) 
     = \left\{~ \text{Completion~of}~ C_0^\infty(\calg{M})
    \text{~w.r.t.~} \|\cdot\|_{W^{k,p}(\calg{M})} ~\right\},
\end{equation}
where $C_0^{\infty}(\calg{M})$ is the space of $C^{\infty}$-tensors with
compact support in $\calg{M}$.
The space $W^{k,p}_0(\calg{M})$ is a special case of
$W^{k,p}_{0,D}(\calg{M})$, which can be characterized as:
\begin{equation}
  \hspace*{-1.0cm}
W^{k,p}_{0,D}(\calg{M}) = \left\{~ T^I_{~J} \in W^{k,p} ~:~
             \text{tr}~T^I_{~J;L} = 0 ~\text{on}~ \partial_0 \calg{M},
   ~ |L| \le k-1 ~\right\}.
\end{equation}
The spaces $W^{k,p}(\calg{M})$ and $W^{k,p}_{0,D}(\calg{M})$
are separable ($1\le p < \infty$) and reflexive ($1 < p < \infty$)
Banach spaces.
The dual space of bounded linear functionals on $W^{k,p}(\calg{M})$
can be shown (in the sense of distributions, cf.~\cite{KJF77}) to be
$W^{-k,q}(\calg{M})$, $1/p + 1/q=1$, which is itself a 
Banach space when equipped with the dual norm:
\begin{equation}
  \label{eqn:dual_norm}
\|f\|_{W^{-k,q}(\calg{M})} = \sup_{0\ne u \in W^{k,p}(\calg{M})}
       \frac{|f(u)|}{\|u\|_{W^{k,p}(\calg{M})}},
~~~~ \frac{1}{p} + \frac{1}{q} = 1.
\end{equation}

The Hilbert space special case of $p=2$ is given a simplified notation:
\begin{equation}
H^k(\calg{M}) = W^{k,2}(\calg{M}),
~~~~~~~~
H^{-k}(\calg{M}) = W^{-k,2}(\calg{M}),
\end{equation}
with the same convention used for the various subspaces of $H^k(\calg{M})$
such as $H_0^k(\calg{M})$ and $H_{0,D}^k(\calg{M})$.
The norm on $H^k(\calg{M})$ defined above is induced
by an $L^2$-based inner-product as follows:
$\| T^I_{~J} \|_{H^k(\calg{M})}
   = (T^I_{~J},T^I_{~J})_{H^k(\calg{M})}^{1/2}$,
where
\begin{equation}
    \label{eqn:L2_inner_product}
(T^I_{~J},S^I_{~J})_{L^2(\calg{M})}
     = \int_{\calg{M}}
      T^I_{~J} S^L_{~M} \gamma_{IL}\gamma^{JM} ~dx,
\end{equation}
and where
\begin{equation}
(T^I_{~J},S^I_{~J})_{H^k(\calg{M})}
 = \sum_{0\le m\le k} ( D^m T^I_{~J}, D^m S^I_{~J} )_{L^2(\calg{M})}.
\end{equation}
The Banach spaces $W^{k,p}$ and their various subspaces satisfy various
relations among themselves, with the $L^p$-spaces, and with classical
function spaces such as $C^k$ and $C^{k,\alpha}$.
These relationships are characterized by a collection of results known
as embedding, compactness, density, trace, and related theorems.
A number of these are fundamental to approximation theory for elliptic
equations, and will be recalled when needed below.

\subsection{Weak Formulation Example}
\label{sec:elliptic_model}

Before we derive a weak formulation of the Einstein constraints,
let us consider a simple example to illustrate the idea.
Here we assume the 3--metric to be flat so that $\nabla$ is the ordinary
gradient operator and $\cdot$ the usual inner product.  
Let $\calg{M}$ represent the unit sphere centered at the origin (with a 
single chart inherited from the canonical Cartesian coordinate system
in $\bbbb{R}^3$), and let $\partial \calg{M}$ denote the boundary,
satisfying the boundary assumption in equation~(\ref{eqn:bndry}).
Consider now the following semilinear equation on $\calg{M}$:
\begin{eqnarray}
  \label{eqn:strong}
- \nabla \cdot (a(x) \nabla u(x)) + b(x,u(x)) 
      & = & 0 ~\text{~in~} \calg{M}, \\
  \label{eqn:strong_robin}
n(x) \cdot (a(x) \nabla u(x)) + c(x,u(x))
      & = & 0 ~\text{~on~} \partial_1 \calg{M}, \\
  \label{eqn:strong_dirichlet}
 u(x) & = & f(x) ~\text{~on~} \partial_0 \calg{M},
\end{eqnarray}
where $n(x) : \partial \calg{M} \mapsto \bbbb{R}^d$
is the unit normal to $\partial \calg{M}$, and where
\begin{eqnarray}
a : \calg{M} \mapsto \bbbb{R}^{3 \times 3},
& \ \ \ &
b : \calg{M} \times \bbbb{R} \mapsto \bbbb{R}, \\
c : \partial_1 \calg{M} \times \bbbb{R} \mapsto \bbbb{R},
& \ \ \ &
f : \partial_0 \calg{M} \mapsto \bbbb{R}.
\end{eqnarray}

To produce a weak formulation, we first multiply by a test function
$v \in H^1_{0,D}(\calg{M})$ (the subspace of $H^1(\calg{M})$ which
vanishes on the Dirichlet portion of the boundary $\partial_0 \calg{M}$),
producing:
\begin{equation}
\int_{\calg{M}} 
  \left(-\nabla \cdot (a \nabla u) + b(x, u) \right) v ~dx
 = 0.
\end{equation}
After applying the flat space version of the
divergence theorem, this becomes:
\begin{equation}
  \label{eqn:weak1}
\int_{\calg{M}} (a \nabla u)  \cdot \nabla v ~dx
- \int_{\partial \calg{M}} v (a \nabla u) \cdot n ~ds
+ \int_{\calg{M}} b (x,u) v ~dx
   = 0.
\end{equation}
The boundary integral is reformulated using the boundary conditions
as follows:
\begin{equation}
  \label{eqn:boundary_int}
\int_{\partial \calg{M}} v (a \nabla u) \cdot n ~ds
=
- \int_{\partial_1 \calg{M}} c(x,u) v ~ds.
\end{equation}

If the boundary function $f$ is regular enough so that
$f \in H^{1/2}(\partial_0 \calg{M})$, then from the
Trace Theorem~\cite{Adam78}, there exists $\bar{u} \in H^1(\calg{M})$
such that $f = \bar{u}|_{\partial_0 \calg{M}}$ in the sense of the
Trace operator.
Employing such a function $\bar{u} \in H^1(\calg{M})$,
the weak formulation has the form:
\begin{equation}
   \label{eqn:weak_model}
\text{~Find~} u \in \bar{u} + H_{0,D}^1(\calg{M})
   ~\text{~s.t.~}
     \langle F(u),v \rangle = 0,
~~\forall~ v \in H^1_{0,D}(\calg{M}),
\end{equation}
where from equations~(\ref{eqn:weak1}) and~(\ref{eqn:boundary_int}),
the nonlinear form is defined as:
\begin{equation}
\langle F(u),v \rangle = \int_{\calg{M}} \left(a \nabla u \cdot \nabla v
    + b (x,u) v \right) ~dx
  + \int_{\partial_1 \calg{M}} c(x,u) v~ds.
\end{equation}
The ``weak'' formulation of the problem given by
equation~(\ref{eqn:weak_model}) imposes only one order of
differentiability on the solution $u$, and only in the weak or
distributional sense.
Under suitable growth conditions on the nonlinearities $b$ and $c$,
it can be shown that this weak formulation makes sense, in that the
form $\langle F(\cdot),\cdot \rangle$ is finite for all arguments.

To analyze linearization stability, or to apply a numerical
algorithm such as Newton's method, we will need the
bilinear linearization form $\langle DF(u)w,v \rangle$,
produced as the formal Gateaux derivative of the nonlinear form
$\langle F(u),v \rangle$:
$$
\langle DF(u)w,v \rangle =
   \left.\frac{d}{d\epsilon}
 \langle F(u+\epsilon w),v \rangle \right|_{\epsilon=0}
$$
$$
= \frac{d}{d \epsilon} \left(
  \int_{\calg{M}} \left(a \nabla (u+\epsilon w) \cdot \nabla v
+ b (x,u+\epsilon w) v \right) dx \right.
+ \left.\left. \int_{\partial_1 \calg{M}}
     c(x,u+\epsilon w) vds \right)
    \right|_{\epsilon=0}
$$
\begin{equation}
   \label{eqn:linearize_example}
= \int_{\calg{M}} \left(a \nabla w \cdot \nabla v
+ \frac{\partial b(x,u)}{\partial u} w v \right) ~dx 
+ \int_{\partial_1 \calg{M}}
  \frac{\partial c(x,u)}{\partial u} w v ~ds.
\end{equation}
Now that the nonlinear weak form $\langle F(u),v \rangle$
and the associated bilinear
linearization form $\langle DF(u)w,v \rangle$
are defined as integrals, they can be evaluated using
numerical quadrature to assemble a Galerkin-type discretization;
this is described in some detail below in the case of a finite
element-based Galerkin method.

\subsection{Weak Formulation of the Constraints}
   \label{sec:weakForm_gr}

The Hamiltonian constraint~(\ref{eqn:ham}) as well as the
momentum constraint~(\ref{eqn:mom}), taken separately or as a system,
fall into the class of
second-order divergence-form elliptic systems of tensor
equations in~(\ref{eqn:str1})--(\ref{eqn:str3}).
Therefore, we will follow the same plan as
in~\S\ref{sec:sobolev_spaces} in order to produce the weak
formulation~(\ref{eqn:weak})--(\ref{eqn:weakForm}).
However, we now employ the conformal metric $\hat{\gamma}_{ab}$ from the
preceding section to define the volume element
$dx = \sqrt{\text{det}~\hat{\gamma}_{ab}}~dx^1 dx^2 dx^3$
and the corresponding boundary volume element $ds$,
and for use as the manifold connection for covariant differentiation.
The notation for covariant differentiation using the conformal connection
will be denoted $\hat{D}_a$ as in the previous section,
and the various quantities from~\S\ref{sec:sobolev_spaces}
will now be hatted to denote our use of the conformal metric.
For example, the unit normal to $\partial \calg{M}$ will now
be denoted $\hat{n}^a$.

Consider now the principle parts of the
Hamiltonian and momentum constraint operators of the previous section:
\begin{equation}
\hspace*{-1.0cm}
\hat{\Delta} \phi = \hat{D}_a \hat{D}^a \phi,
\ \ \ \ \ \ 
\hat{D}_b (\hat{L}W)^{ab} = \hat{D}_b ( \hat{D}^a W^b + \hat{D}^b W^a 
                - \frac{2}{3} \hat{\gamma}^{ab} \hat{D}_c W^c).
\label{covStrong}
\end{equation}
Employing the covariant divergence theorem
in equation~(\ref{eqn:divergence})
leads to covariant versions of the Green identities
\begin{equation}
    \label{eqn:green_ham}
 \int_{\calg{M}} \psi \hat{\Delta} \phi  ~dx
+ \int_{\calg{M}} (\hat{D}_a \phi) (\hat{D}^a \psi) ~dx
= \int_{\partial \calg{M}} \hat{n}_a \psi \hat{D}^a \phi ~ds
\end{equation}
and
\begin{equation}
\hspace*{-1.0cm}
    \label{eqn:green_mom}
 \int_{\calg{M}} V_a \hat{D}_b (\hat{L}W)^{ab} ~dx
+  \int_{\calg{M}} (\hat{L}W)^{ab} \hat{D}_b V_a~dx
= \int_{\partial \calg{M}} \hat{n}_b V_a (\hat{L}W)^{ab} ~ds,
\end{equation}
for smooth functions in $C^{\infty}(\calg{M})$.
These identities extend to $W^{1,p}(\calg{M})$ using
a standard approximation argument, since $C^{\infty}(\calg{M})$ is
dense in $W^{1,p}(\calg{M})$ (cf.~Theorem~2.9 in~\cite{Aubi82}).

Due to the symmetries of $(\hat{L}W)^{ab}$ and $\hat{\gamma}^{ab}$, the second
integrand in~(\ref{eqn:green_mom}) can be rewritten in a
completely symmetric form.
To do so, we first borrow the
linear strain or (symmetrized) deformation operator
from elasticity:
\begin{equation}
   \label{eqn:strain}
(\hat{E}V)^{ab} = \frac{1}{2} \left( \hat{D}^b V^a + \hat{D}^a V^b \right).
\end{equation}
We can then write the operator $(\hat{L}W)^{ab}$ in terms
of $(\hat{E}W)^{ab}$ as follows:
\begin{equation}
\hspace*{-1.0cm}
   \label{eqn:elasticity_operator}
(\hat{L}W)^{ab} = 2 \mu (\hat{E}W)^{ab}
   + \lambda \hat{\gamma}^{ab} \hat{D}_c W^c,
~~~ ~\text{with}~ \mu = 1 ~\text{and}~ \lambda = - \frac{2}{3}.
\end{equation}
This makes it clear that the momentum constraint operator
 $\hat{D}_b (\hat{L}W)^{ab}$
is a covariant version of the linear elasticity operator
for a homogeneous isotropic material (cf.~\cite{Ciar88}),
for a particular choice of Lam\'{e} constants.
In particular, in the flat space case where
$\hat{\gamma}_{ab} \equiv \delta_{ab}$, the operator becomes the usual
linear elasticity operator:
\begin{equation}
    \label{eqn:elasticity}
 \hspace*{-1.5cm}
 \hat{D}_b (\hat{L}W)^{ab}
  \rightarrow
  \left( 2 \mu e_{ab}(W) + \lambda e_{cc}(W) \delta_{ab} \right)_{,b},
~~~ e_{ab}(W) = \frac{1}{2} \left( W_{a,b} + W_{b,a} \right),
\end{equation}
where $W_{a,b} = \partial W_a/\partial x^b$
denotes regular partial differentiation.
Employing the operator $(\hat{E}W)^{ab}$ leads to a
symmetric expression in the Green identity~(\ref{eqn:green_mom}):
\begin{eqnarray}
\hspace*{-1.0cm}
(\hat{L}W)^{ab} \hat{D}_b V_a
  & = & \frac{1}{2} \left( (\hat{L}W)^{ab} \hat{D}_b V_a
                      + (\hat{L}W)^{ab} \hat{D}_a V_b \right)
\\
  & = & (\hat{L}W)^{ab} (\hat{E}V)_{ab}
\\
  & = & 2 \mu (\hat{E}W)^{ab} (\hat{E}V)_{ab}
        + \frac{1}{2} \lambda \hat{\gamma}^{ab} \hat{D}_c W^c
          \left( \hat{D}_b V_a + \hat{D}_a V_b \right)
\\
  & = & 2 \mu (\hat{E}W)^{ab} (\hat{E}V)_{ab}
        + \lambda \hat{D}_a W^a \hat{D}_b V^b.
   \label{eqn:sym_trick}
\end{eqnarray}
While it is clear by inspection that the first operator in~(\ref{covStrong})
is formally self-adjoint with respect to the covariant $L^2$-inner-product
defined in~(\ref{eqn:L2_inner_product}),
reversing the procedure in~(\ref{eqn:sym_trick})
implies that the same is true for the second operator in~(\ref{covStrong}).
In other words, the following holds (ignoring the boundary terms):
\begin{eqnarray}
( \hat{\Delta} \phi, \psi )_{L^2(\calg{M})}
& = & (\phi ,  \hat{\Delta} \psi )_{L^2(\calg{M})},
~~~~~~~~~~~~~~\forall \phi,\psi,
\\
 ( \hat{D}_b (\hat{L}W)^{ab}, V^a )_{L^2(\calg{M})}
& = & (W^a ,  \hat{D}_b (\hat{L}V)^{ab})_{L^2(\calg{M})},
~~~\forall  W^a,V^a.
\end{eqnarray}

To make it possible to write the Hamiltonian constraint and various
related equations in a concise way, we now introduce the following
nonlinear function $P(\phi)=P(\phi,W^a,x^b)$, where the explicit dependence on
$x^b$ (and sometimes also the dependence on $W^a$) is suppressed to simplify
the notation:
\begin{eqnarray}
\hspace*{-1.0cm}
P(\phi)
        & = & \frac{1}{16} \hat{R} \phi^2
          + \frac{1}{72} ({\rm tr} K)^2 \phi^6
          + \frac{1}{48} (\Ahatstar_{ab} + (\hat{L}W)_{ab})^2 \phi^{-6}
          + \pi \hat{\rho} \phi^{-2}.
  \label{eqn:p_def}
\end{eqnarray}
The first and second functional derivatives with respect to $\phi$ are
then as follows:
\begin{eqnarray}
\hspace*{-1.0cm}
P'(\phi) & = & \frac{1}{8} \hat{R} \phi
         + \frac{1}{12} ({\rm tr} K)^2 \phi^5
         - \frac{1}{8} (\Ahatstar_{ab} + (\hat{L}W)_{ab})^2 \phi^{-7}
         - 2 \pi \hat{\rho} \phi^{-3},
  \label{eqn:dp_def}
\\
\hspace*{-1.0cm}
P''(\phi) & = & \frac{1}{8} \hat{R}
          + \frac{5}{12} ({\rm tr} K)^2 \phi^4
          + \frac{7}{8} (\Ahatstar_{ab} + (\hat{L}W)_{ab})^2 \phi^{-8}
          + 6 \pi \hat{\rho} \phi^{-4}.
  \label{eqn:ddp_def}
\end{eqnarray}
When working with the Hamiltonian constraint separately, we will often
use these expressions involving $P(\cdot)$; when working with the coupled
system, we will usually write the polynomial explicitly in order
to indicate the coupling terms.

We now take the inner product
of~(\ref{eqn:ham}) with a test function $\psi$, which
we assume to vanish on $\partial_0 \calg{M}$.
(Again, $\partial_0 \calg{M}$ may have zero measure, or it may
be all or only a piece of $\partial \calg{M}$.)
After use of the Green identity~(\ref{eqn:green_ham}) 
and the Robin boundary condition~(\ref{eqn:robin_ham})
we obtain the following form
$\langle F_{\text{H}}(\phi),\psi \rangle$
(nonlinear in $\phi$, but linear in $\psi$)
for use in the weak formulation in~(\ref{eqn:weak}):
\begin{equation}
\hspace*{-1.0cm}
   \label{eqn:weakHam}
\langle F_{\text{H}}(\phi),\psi \rangle =
    \int_{\partial_1 \calg{M}} (c\phi - z) \psi ~ds
+ \int_{\calg{M}} P^{\prime}(\phi) \psi ~dx
+ \int_{\calg{M}} \hat{D}_a \phi \hat{D}^a \psi~dx.
\end{equation}
For the momentum constraint we take the inner product of~(\ref{eqn:mom}) with 
respect to a test vector field $V^a$ (again assumed to vanish on
$\partial_0 \calg{M}$) and similarly use the
Green identity~(\ref{eqn:green_mom}) and the
Robin condition~(\ref{eqn:robin_mom}) to obtain a
form $\langle F_{\text{M}}(W^a),V^a \rangle$
(linear in both $W^a$ and $V^a$ in this case)
having the expression
$$
\langle F_{\text{M}}(W^a),V^a \rangle =
  \int_{\partial_1 \calg{M}} \left( C^a_{~b} W^b - Z^a \right) V_a ~ds
+ \int_{\calg{M}} 
    \left(
      \frac{2}{3} \phi^6 \hat{D}^a {\rm tr}K
    + 8 \pi \hat{j}^a
    \right) V_a ~dx
$$
\begin{equation}
   \label{eqn:weakMom}
+ \int_{\calg{M}}
  \left( 2 \mu (\hat{E}W)^{ab} (\hat{E}V)_{ab}
  + \lambda \hat{D}_a W^a \hat{D}_b V^b \right)
 ~dx,
\end{equation}
where we have used~(\ref{eqn:sym_trick}).
We will take $\mu = 1$ and $\lambda = -2/3$ in~(\ref{eqn:weakMom}),
but for the moment we will leave them
unspecified for purposes of the discussion below.

Ordering the Hamiltonian constraint first in the 
system~(\ref{eqn:str1}), and defining the product metric
$\calg{G}_{ij}$ and the vectors $u^i$ and $v^j$ appearing
in~(\ref{eqn:product_metric}) and~(\ref{eqn:weakForm}) as:
\begin{equation}
\calg{G}_{ij} = \left[ \begin{array}{ccc}
         1 & 0      \\
         0 & \gamma_{ab} \\
         \end{array} \right],
~~~~~
u^i = \left[ \begin{array}{c}
         \phi \\
         W^a  \\
         \end{array} \right],
~~~~~
v^j = \left[ \begin{array}{c}
         \psi \\
         V^b  \\
         \end{array} \right],
\end{equation}
produces a single nonlinear weak form for the coupled
constraints in the form required in~(\ref{eqn:weak}), where
$$
\langle F(u),v \rangle
= \langle F([\phi,W^a]),[\psi,V^a] \rangle
= \langle F_{\text{H}}(\phi),\psi \rangle
+ \langle F_{\text{M}}(W^a),V^a \rangle
$$
$$
=
  \int_{\partial_1 \calg{M}}
    \left( \left[c\phi - z\right]\psi
         + \left[C^a_{~b} W^b - Z^a\right] V_a \right) ~ds
+  \int_{\calg{M}}
   \left(
      \frac{2}{3} \phi^6 \hat{D}^a {\rm tr}K
    + 8 \pi \hat{j}^a
    \right) V_a ~dx
$$
$$
+  \int_{\calg{M}}
   \left(
      \frac{1}{8} \hat{R} \phi
    + \frac{1}{12} ({\rm tr} K)^2 \phi^5
    - \frac{1}{8} (\Ahatstar_{ab} + (\hat{L}W)_{ab})^2 \phi^{-7}
    - 2 \pi \hat{\rho} \phi^{-3}
   \right) \psi ~dx
$$
\begin{equation}
   \label{eqn:weakHamMom}
+  \int_{\calg{M}}
   \left(   \hat{D}_a \phi \hat{D}^a \psi
        + 2 \mu (\hat{E}W)^{ab} (\hat{E}V)_{ab}
       + \lambda \hat{D}_a W^a \hat{D}_b V^b
   \right)~dx.
\end{equation}

While we have completely specified the weak form of the separate and
coupled constraints on a manifold with boundary in a formal sense,
they can be shown to be well-defined (and individually well-posed)
in a more precise mathematical sense; see~\cite{HKN07a,HNT07a,HNT07b}
for an analysis and a survey of the collection of existence, uniqueness,
and stability results.
For a particular situation, we must specify the particular combination
of the boundary
conditions~(\ref{eqn:str1})--(\ref{eqn:str2}) on a splitting of
the boundary ($\partial \calg{M}$) into Dirichlet ($\partial_0 \calg{M}$) and
Robin ($\partial_1 \calg{M}$) parts.
This is quite problem dependent; in numerical simulation
one typically computes solutions to the constraints in the interior of 
a large box or sphere.
On the surface of the sphere one employs Robin and vector Robin conditions
similar to those given in~\cite{YoPi82}, which fit the framework
in~(\ref{eqn:robin_mom}) and~(\ref{eqn:robin_ham}).
In addition, one often constructs black holes topologically by requiring
the conformal metric to obey an isometry through one or more smaller
non-overlapping spheres internal to the domain boundary.
The isometry generates a boundary condition on the conformal factor
which is well understood only when $\hat{\gamma}_{ab}$ is flat.
Even in this case, the exact corresponding boundary condition on $W^a$
is not known, but is likely to appear in the
form of~(\ref{eqn:robin_mom}).
Solvability of both constraints
rests delicately on the boundary condition choices made.

\subsection{Gateaux Linearization of the Weak Formulation}
   \label{sec:linearization_gr}

We will take the formal Gateaux-derivative of the nonlinear form
$\langle F(\cdot),\cdot \rangle$ in equation~(\ref{eqn:weakHamMom}) above,
to produce
a linearization form for use in local solvability analysis through the
Implicit Function Theorem, and for use
in Newton-like iterative solution methods (cf.~\cite{Hols2001a}).
Defining an arbitrary variation direction $w=[\xi,X^a]$, we compute
the Gateaux-derivative of the nonlinear form as follows:
$$
\langle DF([\phi,W^a])[\xi,X^a],[\psi,V^a] \rangle
   = \left.\frac{d}{d\epsilon}
\langle F([\phi+\epsilon \xi,W^a + \epsilon X^a]),[\psi, V^a] \rangle
     \right|_{\epsilon=0}
$$
$$
\hspace*{-1.0cm}
=
  \left.\frac{d}{d\epsilon}
           \int_{\partial_1 \calg{M}}
           \left( \left[c(\phi + \epsilon \xi) - z\right] \psi
                  + \left[C^a_{~b} (W^b + \epsilon X^b) - Z^a\right] V_a
           \right) ~ds
  \right|_{\epsilon=0}
$$
$$
+ \left.\frac{d}{d\epsilon}
           \int_{\calg{M}}
           \left(   \hat{D}_a (\phi+\epsilon \xi) \hat{D}^a \psi
        + 2 \mu (\hat{E}(W + \epsilon X))^{ab} (\hat{E}V)_{ab}
  \right.  \right.
$$
$$
  \left.  \left.
       + \lambda \hat{D}_a (W^a + \epsilon X^a) \hat{D}_b V^b
           \right)~dx
  \right|_{\epsilon=0}
$$
$$
+ \left.\frac{d}{d\epsilon}
     \int_{\calg{M}}
       \left(
           \frac{2}{3} (\phi + \epsilon \xi)^6 \hat{D}^a {\rm tr}K
         + 8 \pi \hat{j}^a
         \right) V_a ~dx
  \right|_{\epsilon=0}
$$
$$
+ \left.\frac{d}{d\epsilon}
          \int_{\calg{M}}
             \left(
             \frac{1}{8} \hat{R} (\phi + \epsilon \xi)
           + \frac{1}{12} ({\rm tr} K)^2 (\phi + \epsilon \xi) ^5
           \right. \right.
$$
\begin{equation}
\hspace*{-1.0cm}
   \left.  \left.
           - \frac{1}{8} (\Ahatstar_{ab} + (\hat{L}(W+\epsilon X))_{ab})^2
                (\phi + \epsilon \xi)^{-7}
           - 2 \pi \hat{\rho} (\phi + \epsilon \xi)^{-3}
          \right) \psi ~dx
  \right|_{\epsilon=0}.
\end{equation}
After some simple manipulations using the product and chain rules,
we are left with the following bilinear form
(for fixed $[\phi,W^a]$), linear separately in each of the variables
$[\xi,X^a]$ and $[\psi,V^a]$:
$$
\hspace*{-2.0cm}
\langle DF([\phi,W^a])[\xi,X^a],[\psi,V^a] \rangle
=
  \int_{\partial_1 \calg{M}}
    \left( c \xi \psi
         + C^a_{~b} X^b V_a \right) ~ds
$$
$$
+ \int_{\calg{M}}
    \left( \hat{D}_a \xi \hat{D}^a \psi
        + 2 \mu (\hat{E}X)^{ab} (\hat{E}V)_{ab}
       + \lambda \hat{D}_a X^a \hat{D}_b V^b
           \right)~dx
$$
$$
+ \int_{\calg{M}}
             \left(
             \frac{1}{8} \hat{R}
           + \frac{5}{12} ({\rm tr} K)^2 \phi^4
           + \frac{7}{8} (\Ahatstar_{ab} + (\hat{L}W)_{ab})^2 \phi^{-8}
           + 6 \pi \hat{\rho} \phi^{-4}
          \right) \xi \psi ~dx
$$
\begin{equation}
\hspace*{-2.0cm}
   \label{eqn:weakHamMom_linearized}
- \int_{\calg{M}}
     \left(
     \frac{1}{4} (\Ahatstar_{ab} + (\hat{L}W)_{ab}) \phi^{-7} 
     \right) 
       (\hat{L}X)^{ab} \psi ~dx
+ \int_{\calg{M}}
       \left(
           4 \phi^5 \hat{D}^a {\rm tr}K
         \right) V_a \xi ~dx.
\end{equation}
Note that the first two volume integrals and the surface integral
are completely symmetric in their
arguments, and represent the symmetric part of the bilinear form.
The third and fourth volume integrals are nonsymmetric in their arguments;
the third volume integral represents the linearized coupling of $W^a$ 
into the Hamiltonian constraint, and the fourth volume integral represents
the linearized coupling of the conformal factor $\phi$ into the
momentum constraint.

\subsection{Weak Formulations Arising from Energy Functionals}
   \label{sec:energy_gr}

Due to the fact that the principle parts of the Hamiltonian and
momentum operators produced by the conformal decomposition are self-adjoint,
the weak formulations individually
arise naturally as the Euler conditions for stationarity of associated
(energy) functionals.
It is straight-forward to verify that the following energy functionals
\begin{equation}
  \label{eqn:ham_energy}
\hspace*{-2.0cm}
J_{\text{H}}(\phi) =
  \frac{1}{2} \int_{\partial_1 \calg{M}} c(\phi - z) \phi ~ds
+ \int_{\calg{M}} P(\phi,W^a) ~dx
+ \frac{1}{2}
  \int_{\calg{M}} \hat{D}_a \phi \hat{D}^a \phi ~dx,~~
\end{equation}
\begin{equation}
\hspace*{-2.0cm}
J_{\text{M}}(W^a) = 
  \frac{1}{2}
   \int_{\partial_1 \calg{M}} \left( C^a_{~b} W^b - Z^a \right) W_a ~ds
+ \int_{\calg{M}} 
    \left( \frac{2}{3} \phi^6 \hat{D}^a {\rm tr}K + 8 \pi \hat{j}^a
    \right) W_a ~dx
\end{equation}
\begin{equation}
+ \int_{\calg{M}}
  \frac{1}{2} \left( 2 \mu (\hat{E}W)^{ab} (\hat{E}W)_{ab}
  + \lambda \hat{D}_a W^a \hat{D}_b W^b \right)
 ~dx,
  \label{eqn:mom_energy}
\end{equation}
each separately give rise to the individual weak
forms~(\ref{eqn:weakHam}) and~(\ref{eqn:weakMom}), respectively.
One computes the Gateaux derivative of $J_{\text{H}}(\phi,W^a)$
with respect to $\phi$, and the Gateaux derivative
of $J_{\text{M}}(\phi,W^a)$ with respect to $W^a$,
and then sets them to zero:
\begin{eqnarray}
\left. \frac{d}{d \epsilon} J_{\text{H}}(\phi + \epsilon \psi)
   \right|_{\epsilon=0}
& = & \langle J_{\text{H}}^{\prime}(\phi),\psi \rangle
       = \langle F_{\text{H}}(\phi),\psi \rangle = 0,
\label{eqn:euler_ham}
\\
\left. \frac{d}{d \epsilon} J_{\text{M}}(W^a + \epsilon V^a)
   \right|_{\epsilon=0}
& = & \langle J_{\text{M}}^{\prime}(W^a),V^a \rangle
       = \langle F_{\text{M}}(W^a),V^a \rangle = 0.
\label{eqn:euler_mom}
\end{eqnarray}
This discussion is only formal, but can be made rigorous.

Unfortunately, while the individual constraints each arise as
Euler conditions for stationarity of the separate energy functionals above,
the coupled constraints do not arise in this way from any coupled energy.
This follows easily from the fact that the combined linearization
bilinear form in~(\ref{eqn:weakHamMom_linearized}) is not symmetric.
This can also be verified directly by considering the most general
possible expression for the total energy:
\begin{equation}
J_{total}(\phi,W^a) = J_{\text{H}}(\phi,W^a) + J_{\text{M}}(\phi,W^a)
                    + J_{\text{R}}(\phi,W^a),
\end{equation}
where $J_{\text{H}}(\phi,W^a)$ and $J_{\text{M}}(\phi,W^a)$
are as defined above,
and where $J_{\text{R}}(\phi,W^a)$ is the remainder term in the energy which
must account for the coupling terms in the combined weak
form~(\ref{eqn:weakHamMom}).
It is easy to verify that the Euler condition for stationarity places
separate conditions on the Gateaux derivative of
$J_{\text{R}}(\phi,W^a)$ which are
impossible to meet simultaneously.

This lack of a variational principle for the coupled constraints limits
the number of techniques available for analyzing solvability of the
coupled system; the existing near-CMC results~\cite{IsMo96,HKN07a} and
the non-CMC (far-from-CMC) results~\cite{HNT07a,HNT07b} are actually based 
on fixed-point arguments; however, variational arguments are used
to solve the individual constaint equations in~\cite{HKN07a,HNT07a,HNT07b}
as part of the overall fixed-point argument.

\section{Adaptive Finite Element Methods (AFEM)}
  \label{sec:fem}

In this section we give a brief description of Galerkin methods,
finite element methods, and adaptive techniques for covariant 
nonlinear elliptic systems.
We also derive {\em a posteriori} error indicators for driving
adaptivity, and finish the section with some {\em a priori} error
estimates for general Galerkin approximations to the Hamiltonian
and momentum constraints.
Expanded versions of this material, including proofs of all results,
can be found in~\cite{Hols2001a,HoTs07a,HoTs07b}.

\subsection{Petrov-Galerkin Methods, Galerkin Methods, 
            and Finite Element Methods}
  \label{sec:fem_details}

A {\em Petrov-Galerkin} approximation of the solution to
(\ref{eqn:weak}) is the solution to the following subspace problem:
\begin{equation}
\text{Find}~ (u_h - \bar{u}_h) \in U_h \subset \calg{B}_1 ~\text{s.t.}~
     \langle F(u_h),v \rangle = 0,
   \label{eqn:galerkin}
\end{equation}
$$
   \forall~ v \in V_h \subset \calg{B}_2,
$$
for some chosen subspaces $U_h$ and $V_h$,
where $\text{dim}(U_h) = \text{dim}(V_h) = n$, and where the
discrete Dirichlet function $\bar{u}_h$ approximates $\bar{u}$
(e.g. an interpolant).
A {\em Galerkin} approximation refers to the case that $U_h = V_h$.

A {\em finite element} method is simply a Petrov-Galerkin or Galerkin
method in which the subspaces $U_h$ and $V_h$ are chosen
to have the extremely simple form of continuous piecewise polynomials
with local support, defined over a disjoint covering of the domain 
manifold $\calg{M}$ by {\em elements}.
For example, in the case of continuous piecewise linear polynomials 
defined over a disjoint covering with 2- or 3-simplices
(cf. Figure~\ref{fig:element}),
the basis functions are easily defined element-wise using the unit
2-simplex (triangle) and unit 3-simplex (tetrahedron) as follows:
$$
\begin{array}{ccl}
\tilde{\phi}_0(\tilde{x},\tilde{y}) & = & 1 - \tilde{x} - \tilde{y} \\
\tilde{\phi}_1(\tilde{x},\tilde{y}) & = & \tilde{x} \\
\tilde{\phi}_2(\tilde{x},\tilde{y}) & = & \tilde{y} \\
\end{array}
\hspace*{0.2cm}
\begin{array}{ccl}
\tilde{\phi}_0(\tilde{x},\tilde{y},\tilde{z}) & = & 1
                   - \tilde{x} - \tilde{y} - \tilde{z} \\
\tilde{\phi}_1(\tilde{x},\tilde{y},\tilde{z}) & = & \tilde{y} \\
\tilde{\phi}_2(\tilde{x},\tilde{y},\tilde{z}) & = & \tilde{x} \\
\tilde{\phi}_3(\tilde{x},\tilde{y},\tilde{z}) & = & \tilde{z} \\
\end{array}.
$$
\begin{figure}[tbh]
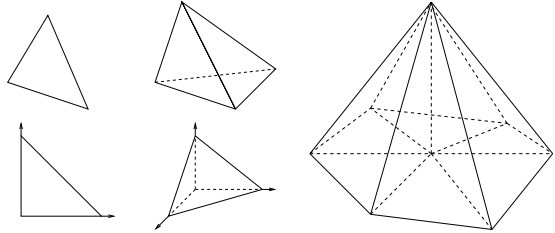

\begin{center}
\mbox{\myfig{element4}{1.2in}}
\hspace*{0.2cm}
\mbox{\myfig{hat}{1.2in}}
   \caption{Reference and arbitrary 2- and 3-simplex elements,
            and a global (2D) basis function.} 
\end{center}
   \label{fig:element}
\end{figure}
Global basis functions are then defined, as in the right-most picture in
Figure~\ref{fig:element}, by combining the support regions around a given
vertex, and extending the unit simplex basis functions to each arbitrary
simplex using coordinate transformations.
If the manifold domain can be triangulated exactly with simplex elements,
then the coordinate transformations are simply affine transformations.
Note that in this sense, finite element methods are by their very nature
defined in a chart-wise manner.
Quadratic and high-order basis functions are defined analogously.

The above basis functions clearly do not form any subspace of 
$\calg{C}^2(\calg{M})$, the space of twice continuously differentiable
functions on $\calg{M}$, which is the natural function space in which to
look for the solutions to second order elliptic equations.
This is due to the fact that they are discontinuous along simplex faces and
simplex vertices in the disjoint simplex covering of $\calg{M}$.
However, one can show~\cite{Ciar78} that in fact:
\begin{equation}
V_h = \text{span}\{\phi_1,\ldots,\phi_n\} \subset W^{1,p}_{0,D}(\calg{M}),
~~ \calg{M} \subset \bbbb{R}^d,
\end{equation}
so that these continuous, piecewise defined,
low-order polynomial spaces do in fact form a subspace of the solution space
to the weak formulation of the class of second order elliptic equations of
interest.
Making then the choice
$
U_h = \text{span}\{\phi_1,\phi_2,\ldots,\phi_n\},
$
$
V_h = \text{span}\{\psi_1,\psi_2,\ldots,\psi_n\},
$
equation~(\ref{eqn:galerkin}) reduces to a set of $n$ nonlinear algebraic 
relations (implicitly defined) for the $n$ coefficients $\{\alpha_j\}$
in the expansion
\begin{equation}
   \label{eqn:soln}
u_h = \bar{u}_h + \sum_{j=1}^n \alpha_j \phi_j.
\end{equation}
In particular, regardless of the complexity of the form
$\langle F(u),v \rangle$,
as long as we can evaluate it for given arguments $u$ and $v$, then we
can evaluate the nonlinear discrete residual of the finite element 
approximation $u_h$ as:
\begin{equation}
   \label{eqn:disc_residual}
    r_i = \langle F(\bar{u}_h + \sum_{j=1}^n \alpha_j \phi_j),\psi_i \rangle,
         \ \ \ \ i=1,\ldots,n.
\end{equation}
Since the form $\langle F(u),v \rangle$
involves an integral in this setting, if we
employ quadrature then we can simply sample the integrand at quadrature points;
this is a standard technique in finite element technology.
Given the local support nature of the functions $\phi_j$ and $\psi_i$,
all but a small constant number of terms in the sum 
$\sum_{j=1}^n \alpha_j \phi_j$ are zero at a particular spatial point
in the domain, so that the residual $r_i$ is inexpensive to evaluate when
quadrature is employed.

The two primary issues in applying this approximation method are then:
\begin{enumerate}
\item The approximation error $\|u-u_h\|_X$, for various norms $X$, and
\item The computational complexity of solving the $n$ algebraic equations.
\end{enumerate}
The first of these issues represents the core of finite element approximation
theory, which itself rests on the results of classical approximation theory.
Classical references to both topics include~\cite{Ciar78,DeLo93,Davi63}.
The second issue is addressed by the complexity theory of direct and
iterative solution methods for sparse systems of linear and nonlinear
algebraic equations, cf.~\cite{Hack94,OrRh70}, and by the use of
adaptive techniques to minimize the size $n$ of the discrete space 
that must be constructed to reach a specific approximation quality.

\subsection{{\em A Priori} Error Estimates for the Constraint Equations}
  \label{sec:fem_apriori}

We first outline an approximation result for general Galerkin approximations
to solutions of the momentum constraint.
It is referred to as a quasi-optimal error estimate, in that it establishes
a bound on the error $\|u-u_h\|_X$ that is within a constant of being the
error in the best possible approximation.

To understand this result, we begin with the two (Hilbert vector) spaces
$H$ and $V$, where in our setting of the momentum constraint,
we have $H=L^2(\calg{M})$ and $V=H^1_{0,D}(\calg{M})$.
We will stay with the abstract notation involving $H$ and $V$ for clarity.
The weak form of the momentum constraint can be shown to have
the following form:
\begin{equation}
  \label{eqn:mom_gal_cont}
\text{Find}~u \in V ~\text{s.t.}~
      A(u,v) = F(v), ~~\forall v \in V,
\end{equation}
where the bilinear form
$A(u,v) : V \times V \mapsto \bbbb{R}$ is bounded
\begin{equation}
  \label{eqn:mom_as2}
A(u,v) \le M \| u \|_V \| v \|_V,
      ~~\forall u,v \in V,
\end{equation}
and V-coercive (satisfying a G{\aa}rding inequality):
\begin{equation}
  \label{eqn:mom_as3}
m \|u\|_V^2 \le K \|u\|_H^2 + A(u,u),
      ~~\forall u \in V, ~~~\text{where}~~ m > 0,
\end{equation}
and where the linear functional $F(v) : V \mapsto \bbbb{R}$ is bounded
and thus lies in the dual space $V^*$:
$$
F(v) \le L \| v \|_V,
      ~~\forall v \in V.
$$
It can be shown that the weak formulation of the
momentum constraint
fits into this framework; to simplify the discussion, we have
assumed that any Dirichlet function $\bar{u}$ has been absorbed
into the linear functional $F(v)$ in the obvious way.
Our discussion can be easily modified to include approximation
of $\bar{u}$ by $\bar{u}_h$.

Now, we are interested in the quality of a Galerkin approximation:
\begin{equation}
\hspace*{-1.0cm}
  \label{eqn:mom_gal_disc}
\text{Find}~u_h \in V_h \subset V~\text{s.t.}~
      A(u_h,v) = A(u,v) = F(v), ~~\forall v \in V_h \subset V.
\end{equation}
We will assume that there exists a sequence of approximation subspaces
$V_h \subset V$ parameterized by $h$, with $V_{h_1} \subset V_{h_2}$
when $h_2 < h_1$, and that there exists a sequence $\{a_h\}$, with
$\lim_{h\rightarrow 0} a_h = 0$, such that
\begin{equation}
\hspace*{-1.0cm}
  \label{eqn:mom_as4}
\|u-u_h\|_H \le a_h \|u-u_h\|_V, 
   ~\text{when}~ A(u-u_h,v) = 0, ~\forall v \in V_h \subset V.
\end{equation}
The assumption~(\ref{eqn:mom_as4}) is very natural;
in our setting, it is the assumption
that the error in the approximation converges to zero more quickly in the
$L^2$-norm than in the $H^1$-norm.
This is easily verified in the setting of piecewise polynomial approximation
spaces, under very mild smoothness requirements on the solution~$u$.
Under these assumptions, we have the following {\em a priori} error estimate.
\begin{theorem}
   \label{thm:mom_approx}
For $h$ sufficiently small, there exists a unique approximation
$u_h$ satisfying~(\ref{eqn:mom_gal_disc}),
for which the following quasi-optimal
{\em a priori} error bounds hold:
\begin{eqnarray}
\|u-u_h\|_V 
   & \le & C \inf_{v \in V_h} \|u-v\|_V,
\label{eqn:schatz} \\
\|u-u_h\|_H 
   & \le & C a_h \inf_{v \in V_h} \|u-v\|_V,
\label{eqn:schatz_L2}
\end{eqnarray}
where $C$ is a constant independent of $h$.
If $K \le 0$ in~(\ref{eqn:mom_as3}), then the above holds for all $h$.
\end{theorem}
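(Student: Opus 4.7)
The plan is to follow the classical Schatz argument tailored to the G\aa{}rding-coercive setting. The starting point is Galerkin orthogonality: subtracting (\ref{eqn:mom_gal_disc}) from (\ref{eqn:mom_gal_cont}) gives $A(u-u_h,v)=0$ for all $v\in V_h$. Then for any $v\in V_h$,
\[
A(u-u_h,\,u-u_h) = A(u-u_h,\,u-v) + A(u-u_h,\,v-u_h) = A(u-u_h,\,u-v),
\]
since $v-u_h\in V_h$. Applying the G\aa{}rding inequality (\ref{eqn:mom_as3}) on the left and the boundedness (\ref{eqn:mom_as2}) on the right yields
\[
m\,\|u-u_h\|_V^2 \;\le\; K\,\|u-u_h\|_H^2 + M\,\|u-u_h\|_V\,\|u-v\|_V.
\]

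Next I would invoke (\ref{eqn:mom_as4}), whose hypothesis is exactly the Galerkin orthogonality just established, so that $\|u-u_h\|_H\le a_h\|u-u_h\|_V$. Squaring and substituting gives
\[
(m-Ka_h^2)\,\|u-u_h\|_V^2 \;\le\; M\,\|u-u_h\|_V\,\|u-v\|_V.
\]
Since $a_h\to 0$, there exists $h_0>0$ such that $Ka_h^2\le m/2$ for all $h\le h_0$; dividing through (the case of zero error is trivial) and taking the infimum over $v\in V_h$ yields (\ref{eqn:schatz}) with $C=2M/m$. The $H$-norm bound (\ref{eqn:schatz_L2}) is then an immediate second application of (\ref{eqn:mom_as4}). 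When $K\le 0$ the G\aa{}rding inequality reduces to genuine $V$-coercivity, the absorption step is unnecessary, and the argument works for every $h$ — recovering the final claim of the theorem.

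Existence and uniqueness of $u_h$ for small $h$ can be read off the same inequality: (\ref{eqn:mom_gal_disc}) is a square linear system on the finite-dimensional space $V_h$, so it suffices to prove injectivity. Applying the quasi-optimality bound to the continuous datum $F\equiv 0$ (for which $u\equiv 0$ is the exact solution) and any discrete null-space element $u_h$ forces $u_h\equiv 0$, so the Galerkin matrix is invertible and (\ref{eqn:mom_gal_disc}) has a unique solution.

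The step I expect to be the main obstacle — if one were to go beyond the abstract setting and verify the hypotheses for the actual momentum constraint — is establishing (\ref{eqn:mom_as4}) itself. This is the usual Aubin--Nitsche duality argument, and it requires elliptic regularity for the formal adjoint of the conformal elasticity operator $\hat{D}_b(\hat{L}W)^{ab}$, uniformity of the regularity shift in the conformal metric $\hat{\gamma}_{ab}$, and smoothness of $\partial\calg{M}$ sufficient to place the adjoint solution in $H^2(\calg{M})$ (or a comparable space) so that interpolation estimates on $V_h$ yield a sequence $\{a_h\}$ with $a_h\to 0$. Within the statement of the theorem, however, this is taken as a standing assumption, and the argument above completes the proof.
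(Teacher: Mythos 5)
Your proof is correct and follows exactly the classical Schatz duality-absorption argument, which is what the paper itself invokes (its proof is simply a citation to~\cite{Hols2001a,HoTs07a,HoTs07b} and to Schatz~\cite{Scha74}). The bookkeeping is right: Galerkin orthogonality plus G\aa{}rding plus boundedness gives $(m-Ka_h^2)\|u-u_h\|_V^2 \le M\|u-u_h\|_V\|u-v\|_V$, the $K a_h^2$ term is absorbed for $h$ small, and your observation that uniqueness (hence existence, by finite dimensionality) follows by applying the same estimate to the null datum $F\equiv 0$ is the standard and correct way to close the argument.
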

\begin{proof}
See~\cite{Hols2001a,HoTs07a,HoTs07b}; also~\cite{Scha74}.
\end{proof}

As we did previously for the momentum constraint,
we now outline an approximation result for general Galerkin approximations
to solutions of the Hamiltonian constraint.
Again, it is referred to as a quasi-optimal error estimate, in that it 
establishes a bound on the error $\|u-u_h\|_X$ that is within a constant 
of being the error in the best possible approximation.

We begin again with the two Hilbert spaces
$H$ and $V$, where again
we have $H=L^2(\calg{M})$ and $V=H^1_{0,D}(\calg{M})$.
We are given the following nonlinear variational problem:
\begin{equation}
  \label{eqn:ham_gal_cont}
\text{Find}~u \in V ~\text{s.t.}~
      A(u,v) + \langle B(u),v \rangle = F(v), ~~\forall v \in V,
\end{equation}
where the bilinear form
$A(u,v) : V \times V \mapsto \bbbb{R}$ is bounded
\begin{equation}
  \label{eqn:ham_as2}
A(u,v) \le M \| u \|_V \| v \|_V,
      ~~\forall u,v \in V,
\end{equation}
and V-elliptic:
\begin{equation}
  \label{eqn:ham_as3}
m \|u\|_V^2 \le A(u,u),
      ~~\forall u \in V, ~~~\text{where}~~ m > 0,
\end{equation}
where the linear functional $F(v) : V \mapsto \bbbb{R}$ is bounded
and thus lies in the dual space $V^*$:
$$
F(v) \le L \| v \|_V,
      ~~\forall v \in V,
$$
and where the nonlinear form
$\langle B(u),v \rangle : V \times V \mapsto \bbbb{R}$
is assumed to be monotonic:
\begin{equation}
  \label{eqn:ham_as3a}
0 \le \langle B(u)-B(v),u-v \rangle,
      ~~\forall u,v \in V,
\end{equation}
where we have used the notation:
\begin{equation}
\label{eqn:weakFormDiff}
  \langle B(u)-B(v),w \rangle = \langle B(u),w \rangle
                              - \langle B(v),w \rangle.
\end{equation}
We are interested in the quality of a Galerkin approximation:
\begin{equation}
  \label{eqn:ham_gal_disc}
\text{Find}~u_h \in V_h ~\text{s.t.}~
      A(u_h,v) + \langle B(u_h),v \rangle = F(v),
      ~~\forall v \in V_h,
\end{equation}
where $V_h \subset V$.
We will assume that $\langle B(u),v \rangle$
is bounded in the following weak sense:
If $u \in V$ satisfies~(\ref{eqn:ham_gal_cont}),
if $u_h \in V_h$ satisfies~(\ref{eqn:ham_gal_disc}),
and if $v \in V_h$, then there exists a constant $K>0$ such that:
\begin{equation}
  \label{eqn:ham_as3b}
\langle B(u)-B(u_h),u-v \rangle \le K \| u-u_h \|_V \| u-v \|_V.
\end{equation}
It can be shown that the weak formulation of the
Hamiltonian constraint fits into this framework.
We have again assumed that any Dirichlet function $\bar{u}$ has been
absorbed into the various forms in the obvious way to simplify the discussion.
The discussion can be modified to include approximation
of $\bar{u}$ by $\bar{u}_h$.

Again, we are interested in the quality of a
Galerkin approximation $u_h$ satisfying~(\ref{eqn:ham_gal_disc}),
or equivalently:
$$
    A(u-u_h,v) + \langle B(u)-B(u_h),v \rangle = 0,
      ~\forall v \in V_h \subset V.
$$
As before,
we will assume that there exists a sequence of approximation subspaces
$V_h \subset V$ parameterized by $h$, with $V_{h_1} \subset V_{h_2}$
when $h_2 < h_1$, and that there exists a sequence $\{a_h\}$, with
$\lim_{h\rightarrow 0} a_h = 0$, such that
\begin{equation}
  \label{eqn:ham_as4}
\|u-u_h\|_H \le a_h \|u-u_h\|_V, 
\end{equation}
holds whenever $u_h$ satisfies~(\ref{eqn:ham_gal_disc}).
The assumption~(\ref{eqn:ham_as4}) is again very natural;
see the discussion above following~(\ref{eqn:mom_as4}).
Under these assumptions, we have the following {\em a priori} error estimate.
\begin{theorem}
   \label{thm:ham_approx}
The approximation $u_h$ satisfying~(\ref{eqn:ham_gal_disc})
obeys the following quasi-optimal {\em a priori} error bounds:
\begin{eqnarray}
\|u-u_h\|_V 
   & \le & C \inf_{v \in V_h} \|u-v\|_V,
\label{eqn:holst} \\
\|u-u_h\|_H 
   & \le & C a_h \inf_{v \in V_h} \|u-v\|_V,
\label{eqn:holst_L2}
\end{eqnarray}
where $C$ is a constant independent of $h$.
\end{theorem}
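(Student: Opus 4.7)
The plan is a nonlinear analogue of C\'{e}a's lemma, using the V-ellipticity of $A$ as the workhorse while absorbing the nonlinear term $B$ via its monotonicity and its weak boundedness property (\ref{eqn:ham_as3b}). The $H$-norm bound then follows essentially for free from the assumed super-approximation property (\ref{eqn:ham_as4}).

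The first step is to fix an arbitrary $v \in V_h$ and estimate $\|v - u_h\|_V$, since $v - u_h \in V_h$ and V-ellipticity is directly applicable to it. By (\ref{eqn:ham_as3}),
\begin{equation*}
m \|v - u_h\|_V^2 \le A(v - u_h, v - u_h) = A(v - u, v - u_h) + A(u - u_h, v - u_h).
\end{equation*}
Now I would invoke Galerkin orthogonality, namely $A(u - u_h, w) + \langle B(u) - B(u_h), w\rangle = 0$ for every $w \in V_h$, applied with $w = v - u_h$. This replaces $A(u-u_h, v-u_h)$ by $-\langle B(u) - B(u_h), v-u_h\rangle$. Splitting $v - u_h = (v - u) + (u - u_h)$ inside this nonlinear form and exploiting the monotonicity assumption (\ref{eqn:ham_as3a}) to discard the resulting term $-\langle B(u) - B(u_h), u - u_h\rangle$ (which is non-positive) gives
\begin{equation*}
m \|v - u_h\|_V^2 \le A(v-u, v-u_h) - \langle B(u) - B(u_h), v - u\rangle.
\end{equation*}

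Next I would apply continuity (\ref{eqn:ham_as2}) to the first term and the weak boundedness (\ref{eqn:ham_as3b}) to the second, yielding
\begin{equation*}
m \|v - u_h\|_V^2 \le M \|v-u\|_V \|v-u_h\|_V + K \|u - u_h\|_V \|u - v\|_V.
\end{equation*}
The mild wrinkle here is that the right-hand side involves $\|u - u_h\|_V$ rather than $\|v - u_h\|_V$; I would handle this with the triangle inequality $\|u-u_h\|_V \le \|u-v\|_V + \|v-u_h\|_V$ followed by Young's inequality to absorb the resulting $\|v - u_h\|_V^2$ term into the left-hand side. This yields $\|v - u_h\|_V \le C \|u - v\|_V$, and a final triangle inequality converts this to $\|u - u_h\|_V \le (1+C)\|u-v\|_V$. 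Since $v \in V_h$ was arbitrary, the infimum estimate (\ref{eqn:holst}) follows. Existence and uniqueness of $u_h$ for every $h$ is a standard consequence of monotonicity combined with V-ellipticity (a finite-dimensional version of Browder--Minty), applied inside $V_h$.

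The $H$-norm bound (\ref{eqn:holst_L2}) is then essentially immediate: assumption (\ref{eqn:ham_as4}) states exactly that $\|u - u_h\|_H \le a_h \|u - u_h\|_V$, so chaining with the already-established V-bound delivers (\ref{eqn:holst_L2}). The main conceptual obstacle, and the step that sets this argument apart from the linear case of Theorem~\ref{thm:mom_approx}, is the careful algebraic decomposition of $\langle B(u) - B(u_h), v - u_h\rangle$ so that monotonicity can be used with the correct sign on one piece while the weak boundedness (\ref{eqn:ham_as3b}) controls the other; once that decomposition is in place, the remainder is routine continuity, triangle, and Young's-inequality manipulations.
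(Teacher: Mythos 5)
Your argument is correct and is a valid nonlinear C\'{e}a lemma. The key step — using Galerkin orthogonality to trade $A(u-u_h,v-u_h)$ for $-\langle B(u)-B(u_h),v-u_h\rangle$, then splitting $v-u_h=(v-u)+(u-u_h)$ so the monotone piece can be dropped with the correct sign while (\ref{eqn:ham_as3b}) controls the remainder — is exactly the right maneuver, and the bookkeeping with the triangle inequality and Young's inequality closes the estimate.

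There is a slightly more direct arrangement of the same ingredients that avoids the triangle/Young's step: start from $m\|u-u_h\|_V^2 \le A(u-u_h,u-u_h) + \langle B(u)-B(u_h),u-u_h\rangle$, where the added term is nonnegative by (\ref{eqn:ham_as3a}); split the second argument $u-u_h=(u-v)+(v-u_h)$; annihilate the $v-u_h$ contribution by Galerkin orthogonality; and apply (\ref{eqn:ham_as2}) and (\ref{eqn:ham_as3b}) to the $u-v$ contribution. This gives $m\|u-u_h\|_V^2 \le (M+K)\|u-u_h\|_V\|u-v\|_V$, and dividing through yields $\|u-u_h\|_V \le \tfrac{M+K}{m}\|u-v\|_V$ with an explicit constant. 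Your route first bounds $\|v-u_h\|_V$ by coercivity and then returns to $\|u-u_h\|_V$ via the triangle inequality, which is also perfectly sound — the direct route simply uses monotonicity additively rather than subtractively and produces a cleaner constant. The $H$-norm estimate (\ref{eqn:holst_L2}) follows from (\ref{eqn:ham_as4}) identically in either case, as you note.
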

\begin{proof}
See~\cite{Hols2001a,HoTs07a,HoTs07b}.
\end{proof}

\subsection{Adaptive Finite Element Methods (AFEM)}
  \label{sec:fem_adaptivity}

{\em A priori} error analysis for the finite element method for addressing
the first issue is now a well-understood subject~\cite{Ciar78,BrSc94}.
Much activity has recently been centered around
{\em a posteriori} error estimation and its use in adaptive mesh
refinement algorithms~\cite{PLTMG,BaRh78a,BaRh78b,Verf94,Verf96,XuZh97}.
These estimators include weak and strong residual-based 
estimators~\cite{BaRh78a,BaRh78b,Verf94}, as well as estimators based on
the solution of local problems~\cite{BaSm93,BaWe85}.
The challenge for a numerical method is to be as efficient as possible,
and {\em a posteriori} estimates are a basic tool in deciding which
parts of the solution require additional attention.
While the majority of the work on {\em a posteriori} estimates has been 
for linear problems, nonlinear extensions are possible through linearization
theorems~(cf.~\cite{Verf94,Verf96} and the discussion of the error
estimator employed by \FETK{} later in this paper).
The solve-estimate-refine structure in simplex-based adaptive finite element
codes such as \FETK{}~\cite{Hols2001a} and \PLTMG{}~\cite{PLTMG},
exploiting these {\em a posteriori} estimators, is as follows:
\begin{algorithm}
   \label{alg:MC}
(Adaptive multilevel finite elements)
{\small \begin{itemize}
\item While $(\|u - u_h \|_X > \epsilon)$ do:
    \begin{enumerate}
    \item Find $(u_h - \bar{u}_h) \in U_h \subset \calg{B}_1$ 
          such that: \\ $\langle F(u_h),v \rangle=0,
          ~\forall~ v \in V_h \subset \calg{B}_2$.
    \item Estimate $\|u-u_h\|_X$ over each element.
    \item Initialize two temporary simplex lists as empty: $Q1=Q2=\emptyset$.
    \item Place simplices with large error on the ``refinement'' list $Q1$.
    \item Bisect all simplices in $Q1$ (removing from $Q1$),
          and place any nonconforming simplices created on the list $Q2$.
    \item $Q1$ is now empty; set $Q1$ = $Q2$, $Q2 = \emptyset$.
    \item If $Q1$ is not empty, goto (5).
    \end{enumerate}
 \item End While.
\end{itemize} }
\end{algorithm}
The conformity loop (5)--(7), required to produce a globally ``conforming''
mesh (described below) at the end of a refinement step, is guaranteed to 
terminate in a finite number of steps (cf.~\cite{Riva84,Riva91}),
so that the refinements remain local.
Element shape is crucial for approximation quality; the bisection 
procedure in step~(5) is guaranteed to produce nondegenerate families 
if the longest edge is bisected in two dimensions~\cite{RoSt75,Styn80}, 
and if marking or homogeneity methods are used 
in three dimensions~\cite{AMP97,Mukh96,Bans91a,Bans91b,LiJo95,Maub95}.
Whether longest edge bisection is nondegenerate in three dimensions apparently
remains an open question.
Figure~\ref{fig:refine} shows a single subdivision of a 2-simplex or a
3-simplex using either 4-section (left-most figure),
8-section (fourth figure from the left),
or bisection (third figure from the left, and the right-most figure).
\begin{figure}[tbh]
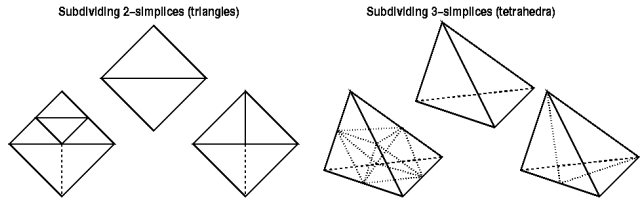

\begin{center}
\mbox{\myfigpng{refine2}{1.0in}}
   \caption{Refinement of 2- and 3-simplices using 4-section,
            8-section, and bisection.}
   \label{fig:refine}
\end{center}
\end{figure}
The paired triangle in the 2-simplex case of Figure~\ref{fig:refine} 
illustrates the nature of conformity and its violation during refinement.
A globally conforming simplex mesh is defined as a collection of simplices
which meet only at vertices and faces; for example, removing the dotted
bisection in the third group from the left in Figure~\ref{fig:refine}
produces a non-conforming mesh.
Non-conforming simplex meshes create several theoretical as well as practical
implementation difficulties, and the algorithms in \FETK{} (as well as those
in \PLTMG{}~\cite{PLTMG} and similar simplex-based adaptive 
codes~\cite{Hols2001a,Mukh96,BER95,Bey96,BBJL98}) enforce conformity using
the above queue swapping strategy or a similar approach.

Addressing the complexity of Step~1 of the algorithm above,
Newton methods are often the most effective:
\begin{algorithm}
   \label{alg:newton}
(Damped-inexact-Newton)
{\small \begin{itemize}
 \item Let an initial approximation $u$ be given.
 \item While ($|\langle F(u),v \rangle| > \epsilon$ for any $v$) do:
    \begin{enumerate}
    \item Find $w$ such that: \\
          $\langle DF(u)w,v \rangle = - \langle F(u),v \rangle + r,
          ~\forall~ v$.
    \item Set $u = u + \lambda w$.
    \end{enumerate}
 \item End While.
\end{itemize} }
\end{algorithm}
The bilinear form $\langle DF(u)w,v \rangle$ in the algorithm above is simply
the (Gateaux) linearization of the nonlinear form $\langle F(u),v \rangle$,
defined formally as:
\begin{equation}
\langle DF(u)w,v \rangle=\left.\frac{d}{d\epsilon}
    \langle F(u+\epsilon w),v \rangle \right|_{\epsilon=0}.
\end{equation}
This form is easily computed from most nonlinear forms
$\langle F(u),v \rangle$ which
arise from second order nonlinear elliptic problems, although the
calculation can be tedious in some cases (as in the case
of the constraints in general relativity).
   The possibly nonzero ``residual'' term $r$ is to allow for inexactness
in the Jacobian solve for efficiency, which is quite effective in many
cases (cf.~\cite{BaRo82,DES82,EiWa92}).
The parameter $\lambda$ brings robustness to the 
algorithm~\cite{EiWa92,BaRo80,BaRo81}.
If folds or bifurcations are present, then the iteration is modified 
to incorporate path-following~\cite{Kell87,BaMi89}.

As was the case for the nonlinear residual
$\langle F(\cdot),\cdot \rangle$, the matrix
representing the bilinear form in the Newton iteration is easily assembled,
regardless of the complexity of the bilinear form 
$\langle DF(\cdot)\cdot,\cdot \rangle$.
In particular, the algebraic system for $w = \sum_{j=1}^n \beta_j \phi_j$
has the form:
\begin{equation} \label{algebraicSystem}
A U = F,
\ \ \ \ \ 
U_i = \beta_i,
\end{equation}
where
\begin{eqnarray}
A_{ij} &=& \langle DF( \bar{u}_h + \sum_{k=1}^n \alpha_k \phi_k )\phi_j,
         \psi_i \rangle, \\
F_i &=& \langle F( \bar{u}_h + \sum_{j=1}^n \alpha_j \phi_j),\psi_i \rangle.
\end{eqnarray}
As long as the integral forms
$\langle F(\cdot),\cdot \rangle$ and $\langle DF(\cdot)\cdot,\cdot \rangle$
can be evaluated at individual points in the domain, then quadrature can
be used to build the Newton equations, regardless of the complexity of
the forms.
This is one of the most powerful features of the finite element method,
and is exploited to an extreme in the code \FETK{}
   (see Section~\ref{sec:fetk} and~\cite{Hols2001a}).
It should be noted that there is a subtle difference between the approach
outlined here (typical for a nonlinear finite element approximation) and
that usually taken when applying a Newton-iteration to a nonlinear finite
difference approximation.
In particular, in the finite difference setting the discrete equations
are linearized explicitly by computing the Jacobian of the system of 
nonlinear algebraic equations.
In the finite element setting, the commutativity of linearization and
discretization is exploited; the Newton iteration is actually
performed in function space, with discretization occurring
``at the last moment'' in Algorithm~\ref{alg:newton} above.

   It can be shown that the Newton iteration above
is dominated by the computational 
complexity of solving the $n$ linear algebraic equations 
in each iteration (cf.~\cite{BaRo82,Hack85}).
   Multilevel methods are the only known provably optimal or nearly optimal
methods for solving these types of linear algebraic equations resulting
from discretizations of a large class of general linear elliptic 
problems~\cite{Hack85,BaDu81,Xu92a}.
   An obstacle to applying multilevel methods to the constraint equations
in general relativity and to similar equations is the presence of
geometrically complex domains.
   The need to accurately represent complicated domain features and boundaries
with an adapted mesh requires the use of very fine mesh simply to describe
the complexities of the domain.
   This may preclude the use of the solve-estimate-refine structure
outlined above in some cases, which requires starting with a coarse mesh in
order to build the approximation and linear algebra hierarchies as the problem
is solved adaptively.
   In this situation, algebraic or agglomeration/aggregation-based multilevel 
methods can be
employed~\cite{BaXu94,BaXu96,Bran86,CSZ94,CGZ97,BMR84,RuSt87,VMB94,VMB95}.
   A fully unstructured algebraic multilevel approach is taken in \FETK{},
even when the refinement hierarchy is present; see Section~\ref{sec:fetk}
below and also~\cite{Hols2001a} for a more detailed description.

\subsection{Residual-Based {\em A Posteriori} Error Indicators}
  \label{sec:fem_residual}

There are several approaches to adaptive error control, although the
approaches based on {\em a posteriori} error estimation are usually the
most effective and most general.
While most existing work on {\em a posteriori} estimates has been for linear 
problems, extensions to the nonlinear case can be made through linearization.
To describe one such result from~\cite{Hols2001a}, we assume that the 
$d$-manifold $\calg{M}$ has been exactly
triangulated with a set $\calg{S}$ of shape-regular $d$-simplices
(the finite dimension $d$ is arbitrary throughout this discussion).
A family of simplices will be referred to here as shape-regular if
for all simplices in the family the ratio of the diameter of the
circumscribing sphere to that of the inscribing sphere
is bounded by an absolute fixed constant, independent of the numbers and
sizes of the simplices that may be generated through refinements.
(For a more careful definition of shape-regularity and related concepts,
see~\cite{Ciar78}.)
It will be convenient to introduce the following notation:
\begin{center}
\begin{tabular}{lcl}
$\calg{S}$    & = & Set of shape-regular simplices
                    triangulating $\calg{M}$ \\
$\calg{N}(s)$ & = & Union of faces in simplex set
                    $s$ lying on $\partial_N \calg{M}$ \\
$\calg{I}(s)$ & = & Union of faces in simplex set
                    $s$ not in $\calg{N}(s)$ \\
$\calg{F}(s)$ & = & $\calg{N}(s) \cup \calg{I}(s)$ \\
$\omega_s$    & = & $~\bigcup~ \{~ \tilde{s} \in \calg{S} ~|~
                         s \bigcap \tilde{s} \ne \emptyset,
                     ~\text{where}~s \in \calg{S} ~\}$ \\
$\omega_f$    & = & $~\bigcup~ \{~ \tilde{s} \in \calg{S} ~|~
                     f \bigcap \tilde{s} \ne \emptyset,
                     ~\text{where}~f \in \calg{F} ~\}$ \\
$h_s$         & = & Diameter (inscribing sphere) of the simplex $s$ \\
$h_f$         & = & Diameter (inscribing sphere) of the face $f$.
\end{tabular}
\end{center}
When the argument to one of the face set
functions $\calg{N}$, $\calg{I}$, or $\calg{F}$ is in fact the entire
set of simplices $\calg{S}$, we will leave off the explicit dependence on
$\calg{S}$ without danger of confusion.
Referring forward briefly to Figure~\ref{fig:river} will be convenient.
The two darkened triangles in the left picture in Figure~\ref{fig:river}
represents the set $w_f$ for the face $f$ shared
by the two triangles.
The clear triangles in the right picture in Figure~\ref{fig:river}
represents the set $w_s$ for the darkened triangle $s$ in the center
(the set $w_s$ also includes the darkened triangle).

Finally, we will also need some notation to represent
discontinuous jumps in function values across faces interior to the
triangulation.
To begin, for any face $f \in \calg{N}$, let $n_f$ denote the
unit outward normal;
for any face $f \in \calg{I}$, take $n_f$ to be an arbitrary
(but fixed) choice of one of the two possible face normal orientations.
Now, for any $v \in L^2(\calg{M})$ such that
$v \in C^0(s) ~\forall s \in \calg{S}$, define the {\em jump function}:
$$
[v]_f(x) = \lim_{\epsilon \rightarrow 0^+} v(x+\epsilon n_f)
         - \lim_{\epsilon \rightarrow 0^-} v(x-\epsilon n_f).
$$

By analyzing the element-wise volume and surface integrals
in~(\ref{eqn:weakForm}) and using some technical results on interpolation
of functions, the following fairly standard result is derived 
in~\cite{Hols2001a}:
\begin{theorem}
Let $u \in W^{1,r}(\calg{M})$ be a regular solution
of~(\ref{eqn:str1})--(\ref{eqn:str3}), or equivalently
of~(\ref{eqn:weak})--(\ref{eqn:weakForm}),
where some additional minimal assumptions hold as described 
in~\cite{Hols2001a}.
Then the following {\em a posteriori} error estimate holds for
a Petrov-Galerkin approximation $u_h$ satisfying~(\ref{eqn:galerkin}):
\begin{equation}
   \label{eqn:indicator_residual}
 \| u - u_h \|_{W^{1,r}(\calg{M})} \le
   C \left( \sum_{s \in \calg{S}} \eta_s^p \right)^{1/p},
\end{equation}
where
$$
C = 2 \cdot \max_{\calg{S},\calg{F}} \{ C_s, C_f \} 
      \cdot \max_{\calg{S},\calg{F}} \{ D_s^{1/q}, D_f^{1/q} \}
\cdot \| DF(u)^{-1} \|_{\calg{L}(W^{-1,q},W^{1,p})},
$$
and where the element-wise error indicator $\eta_s$ is defined as:
\begin{eqnarray}
\eta_s 
  &=& \left(
      h_s^p \| B^i - A^{ia}_{~~;a} \|_{L^p(s)}^p
  + \frac{1}{2} \sum_{f \in \calg{I}(s)} h_f
     \| \left[ A^{ia} n_a \right]_f \|_{L^p(f)}^p
   \right.
\label{eqn:estimator} \\
& & \left.
   + \sum_{f \in \calg{N}(s)} h_f
     \| C^i + A^{ia} n_a \|_{L^p(f)}^p
   \right)^{1/p}.
\nonumber
\end{eqnarray}
\end{theorem}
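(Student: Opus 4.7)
The plan is to reduce the \emph{a posteriori} estimate to a bound on the weak residual $\langle F(u_h),\cdot\rangle$ in the dual norm $W^{-1,q}(\calg{M})$, and then to derive the local indicators $\eta_s$ by element-wise integration by parts combined with localized interpolation estimates. Throughout, $1/p+1/q=1$.

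The first step is to exploit the ``regular solution'' hypothesis, which I take to mean that $DF(u)$ is a bounded linear isomorphism from $W^{1,p}(\calg{M})$ onto $W^{-1,q}(\calg{M})$. By a standard inverse-function-theorem / Verf\"urth-type linearization argument, for $u_h$ sufficiently close to $u$ one obtains the bound
\begin{equation*}
\|u-u_h\|_{W^{1,r}(\calg{M})} \;\le\; 2\,\|DF(u)^{-1}\|_{\calg{L}(W^{-1,q},W^{1,p})}\,\|F(u_h)\|_{W^{-1,q}(\calg{M})},
\end{equation*}
where the factor $2$ absorbs the quadratic remainder in the Taylor expansion of $F$ about $u$ (noting $F(u)=0$). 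It remains to estimate $\|F(u_h)\|_{W^{-1,q}}$, which by definition equals $\sup_{v\ne 0}\langle F(u_h),v\rangle/\|v\|_{W^{1,p}}$.

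Next I would exploit Galerkin orthogonality: since $\langle F(u_h),v_h\rangle = 0$ for all $v_h\in V_h$, we have $\langle F(u_h),v\rangle = \langle F(u_h),v-v_h\rangle$. Choosing $v_h = I_h v$ for a Cl\'ement-- or Scott--Zhang--type quasi-interpolant $I_h$ preserving the Dirichlet trace on $\partial_0\calg{M}$, and substituting the weak form~\eqref{eqn:weakForm}, I would reverse the integration by parts that produced~\eqref{eqn:weakForm} but do so element-by-element. On each simplex $s\in\calg{S}$, pulling the covariant derivative off of $v-I_h v$ in the $\calg{G}_{ij}A^{ia}(v-I_hv)^j_{\ ;a}$ term yields a volume contribution with integrand $\calg{G}_{ij}(B^i - A^{ia}_{\ \ ;a})(v-I_hv)^j$ plus a boundary contribution $\int_{\partial s}\calg{G}_{ij}A^{ia}n_a(v-I_hv)^j\,ds$. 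When summing over $s\in\calg{S}$, the interior-face contributions combine into jump terms $[A^{ia}n_a]_f$ on each $f\in\calg{I}$ (where the orientation convention for $n_f$ enters), while on $f\in\calg{N}$ the natural boundary term combines with $C^i$ from~\eqref{eqn:weakForm} to give $C^i + A^{ia}n_a$.

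The final step is purely technical: apply H\"older's inequality on each simplex and face, then invoke the standard local interpolation estimates for $I_h$,
\begin{equation*}
\|v-I_hv\|_{L^q(s)} \le C_s\,h_s\,\|v\|_{W^{1,q}(\omega_s)}, \qquad \|v-I_hv\|_{L^q(f)} \le C_f\,h_f^{1/q}\,\|v\|_{W^{1,q}(\omega_f)},
\end{equation*}
and sum over $s,f$ using the discrete H\"older inequality. Shape-regularity bounds the patch-overlap counts by the constants $D_s, D_f$, and after a short computation the mesh-dependent weights align with the $h_s^p$ and $h_f$ factors appearing in~\eqref{eqn:estimator}; the constant $C$ then takes exactly the stated form. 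The main obstacle will be the linearization step: one must justify quantitatively that the nonlinear residual controls the error despite the nonsymmetry and growth of $B^i, A^{ia}, C^i$ in $u^j$ and $u^k_{\ ;c}$. This requires local Lipschitz continuity of $DF$ with respect to $u$ in the appropriate product-Sobolev norms, which in turn restricts the admissible Banach-space exponents $p_k,q_k,r_k$ in~\eqref{eqn:banach}. Once this ingredient is available, the integration-by-parts and interpolation steps proceed routinely.
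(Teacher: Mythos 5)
The paper does not give a proof here; it defers entirely to the references \cite{Hols2001a,HoTs07a,HoTs07b}, which carry out exactly the Verf\"urth-style nonlinear residual argument you sketch: local isomorphism of $DF(u)$ with a quadratic-remainder factor of $2$, Galerkin orthogonality against a quasi-interpolant, element-wise integration by parts yielding interior residual and face-jump terms, and H\"older plus Cl\'ement-type local interpolation estimates with the patch-overlap constants $D_s, D_f$. Your reconstruction matches that approach and the structure of the stated constant $C$; you also correctly isolate the one nontrivial hypothesis, local Lipschitz continuity of $DF$ in the exponents of~\eqref{eqn:banach}, as the ``additional minimal assumptions'' the statement alludes to.
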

\begin{proof}
See~\cite{Hols2001a,HoTs07a,HoTs07b}.
\end{proof}

\subsection{An {\em A Posteriori} Error Indicator for the Constraints}
  \label{sec:fem_residual_gr}

Here, we using the general indicator above, we can instantiate an
estimator specifically for the constraints in general relativity.
The Ph.D. thesis of Mukherjee~\cite{Mukh96} contains
a residual-based error estimator for the Hamiltonian constraint
that is equivalent to our estimator when the momentum constraint is
not involved, in the specific case of $p=q=r=2$.
We consider first the Hamiltonian constraint, which can be thought of
as an equation of the form~(\ref{eqn:str1})--(\ref{eqn:str3}).
The error indicator from above now takes the form:
$$
\eta_s^{\text{H}} = \left( h_s^p \|
      \frac{1}{8} \hat{R} \phi_h
    + \frac{1}{12} ({\rm tr} K)^2 \phi_h^5
    \right.
    - \frac{1}{8} (\Ahatstar_{ab} + (\hat{L}W)_{ab})^2 \phi_h^{-7}
    - 2 \pi \hat{\rho} \phi_h^{-3} 
    -  \hat{D}_a \hat{D}^a \phi_h
    \|_{L^p(s)}^p
$$
\begin{equation}
   \label{eqn:estimatorHam}
\hspace*{-2.2cm}
  + \frac{1}{2} \sum_{f \in \calg{I}(s)} h_f
     \| \left[ \hat{n}_a \hat{D}^a \phi_h \right]_f \|_{L^p(f)}^p
  \left.
  + \sum_{f \in \calg{N}(s)} h_f
     \| c \phi_h - z + \hat{n}_a \hat{D}^a \phi_h \|_{L^p(f)}^p
   \right)^{1/p}.
\end{equation}
The momentum constraint also has the
form~(\ref{eqn:str1})--(\ref{eqn:str3}),
and the error indicator in~\cite{Hols2001a} takes the form:
$$
\eta_s^{\text{M}} = \left( h_s^p \|
      \frac{2}{3} \phi^6 \hat{D}^a {\rm tr}K + 8 \pi \hat{j}^a
       - \hat{D}_b(\hat{L}W_h)^{ab}
    \|_{L^p(s)}^p
    \right.
  + \frac{1}{2} \sum_{f \in \calg{I}(s)} h_f
     \| \left[ \hat{n}_b (\hat{L}W_h)^{ab} \right]_f \|_{L^p(f)}^p
$$
\begin{equation}
   \label{eqn:estimatorMom}
   \left.
  + \sum_{f \in \calg{N}(s)} h_f
     \| C^a_{~b} W_h^b - Z^a + \hat{n}_b (\hat{L}W_h)^{ab} \|_{L^p(f)}^p
   \right)^{1/p}.
\end{equation}
Finally, the error indicator we employ for the coupled system is the
$l^p$-weighted average of the two estimators above:
\begin{equation}
   \label{eqn:estimatorHamMom}
   \eta_s^{HM} = \left( w_H (\eta_s^{H})^p
                      + w_M (\eta_s^{M})^p \right)^{1/p},
\end{equation}
where the weights $w_H$ and $w_M$ satisfying
$w_H + w_M = 1$ are determined heuristically.

Note that in this special case the weighted sum estimator
in~(\ref{eqn:estimatorHamMom}) can be derived from the general system
estimator in~\cite{Hols2001a} by defining the product space
metric described in~\cite{Hols2001a} as follows:
\begin{equation}
\calg{G}_{ij} = \left[ \begin{array}{ccc}
         w_H & 0      \\
         0 & w_M g_{ab} \\
         \end{array} \right],
u^i = \left[ \begin{array}{c}
         \phi \\
         W^a  \\
         \end{array} \right],
v^j = \left[ \begin{array}{c}
         \psi \\
         V^b  \\
         \end{array} \right],
\end{equation}
and employing the coupled system framework from~\cite{Hols2001a,HoTs07a}.
However, in general different components of a coupled system
could lie in different function spaces, and the norm appearing
in the estimator in~\cite{Hols2001a,HoTs07a} would need to be
modified.
Equivalently, an estimator built from a weighted sum of estimators
for individual components could be used as above.

\subsection{Convergence and Optimal Complexity of AFEM for the Constraints}
  \label{sec:fem_convergence}

The following convergence result for AFEM applied the a general class
of nonlinear elliptic equations that includes the Hamiltonian constraint
appears in~\cite{HoTs07b}.
More general results which apply to the coupled system 
appear in~\cite{HoTs07a}.
Below, the energy norm is used as the natural norm
for the analysis:
$$
\tbar u \tbar = A(u,u)^2,
$$
where the bilinear form $A(u,v)$ 
is defined in Section~\ref{sec:weakForm_gr}.
\begin{theorem}[Contraction]
   \label{T:convergence}
Let $\{P_k, S_k, U_k \}_{k\ge 0}$ be the
sequence of finite element meshes, spaces, and solutions,
respectively, produced by AFEM.
Let the initial mesh size $h_0$ be sufficiently small so that 
a quasi-orthogonality inequality holds (see~\cite{HoTs07b})
holds for $\{P_0, S_0, U_0 \}_{k\ge 0}$.
Then, there exist constants $\gamma > 0$ and $\alpha \in (0,1)$,
depending only on some freely specifiable AFEM parameters and the 
shape-regularity of the initial triangulation $P_0$, such that
$$
\tbar u - U_{k+1} \tbar^2 + \gamma \eta_{k+1}^2
   \le
\alpha \left( \tbar u - U_{k} \tbar^2 + \gamma \eta_{k}^2 \right).
$$
\end{theorem}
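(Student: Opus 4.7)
The plan is to follow the now-standard AFEM contraction argument of Cascón--Kreuzer--Nochetto--Siebert, adapted to the semilinear monotone setting of the Hamiltonian constraint through the framework already in place in~\cite{HoTs07a,HoTs07b}. The three structural ingredients I would combine are a global \emph{a posteriori} upper bound of the energy error by the estimator, a local \emph{estimator reduction} property on the elements marked for refinement, and the \emph{quasi-orthogonality} inequality assumed as hypothesis. The quasi-orthogonality is precisely what replaces Galerkin Pythagoras in the nonlinear case, and it is the reason $h_0$ must be sufficiently small: so that the deviation from true orthogonality is a higher-order perturbation that can be absorbed into the main contraction terms.

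First I would invoke the reliability bound $\tbar u - U_k \tbar^{\,2} \le C_{\text{rel}}\, \eta_k^2$, obtained by combining coercivity~(\ref{eqn:ham_as3}), monotonicity~(\ref{eqn:ham_as3a}) and the weak Lipschitz bound~(\ref{eqn:ham_as3b}) with the residual indicator~(\ref{eqn:estimatorHam}); the passage from the linear theory to the semilinear weak form $A(u,v)+\langle B(u),v\rangle=F(v)$ is handled by a local linearization argument, and is where uniformity of the constant in $k$ requires a uniform $L^\infty$ bound on the iterates $U_k$ (supplied by the sub- and super-solution barriers of~\cite{HKN07a,HNT07a}).

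Second, I would use that AFEM marks a set $\calg{M}_k\subset \calg{S}_k$ satisfying the Dörfler condition $\eta_k(\calg{M}_k)^2 \ge \theta\, \eta_k^2$, and that each $d$-simplex bisected at least once shrinks in diameter by a factor $2^{-1/d}$, to derive the perturbed estimator-reduction estimate
\begin{equation*}
\eta_{k+1}^2 \le (1+\delta)\bigl[\,\eta_k^2 - \lambda\,\theta\,\eta_k^2\,\bigr]
       + (1+\delta^{-1})\, C_{\text{stab}}\, \tbar U_{k+1}-U_k\tbar^{\,2},
\end{equation*}
valid for every $\delta>0$, with $\lambda=1-2^{-1/d}>0$. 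The second summand encodes stability of the indicator in the discrete energy norm between consecutive iterates. Adding to this the quasi-orthogonality
\begin{equation*}
\tbar u - U_{k+1}\tbar^{\,2} \le \tbar u - U_k\tbar^{\,2} - (1-\Lambda_0)\,\tbar U_{k+1}-U_k\tbar^{\,2} + \text{h.o.t.},
\end{equation*}
scaled by a weight $\gamma>0$, and choosing $\gamma$ so that $\gamma(1+\delta^{-1})C_{\text{stab}}\le 1-\Lambda_0$, absorbs the $\tbar U_{k+1}-U_k\tbar^{\,2}$ term. One then trades a small fraction $\beta\, \tbar u - U_k\tbar^{\,2}\le \beta C_{\text{rel}}\eta_k^2$ of the energy error for additional estimator decrease via the reliability bound; tuning $\beta$, $\delta$, and $\theta$ produces the desired contraction factor $\alpha\in(0,1)$ on the combined quantity $\tbar u - U_k\tbar^{\,2}+\gamma\eta_k^2$.

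The principal obstacle is the nonlinearity: without exact Galerkin orthogonality, the clean Pythagorean splitting of the linear symmetric theory is unavailable, and every step of the above bookkeeping must absorb a quasi-orthogonality defect proportional to a power of $h_0$ and to Lipschitz constants coming from~(\ref{eqn:ham_as3b}). Keeping $C_{\text{rel}}$ and $C_{\text{stab}}$ independent of $k$, and keeping the quasi-orthogonality defect strictly smaller than the contraction gain, is what forces the smallness hypothesis on~$h_0$; once this is respected, the three inequalities compose cleanly into the stated contraction. The extension to the coupled Hamiltonian/momentum system in~\cite{HoTs07a} requires, on top of this, uniform control of the off-diagonal coupling terms in~(\ref{eqn:weakHamMom_linearized}), but the scalar Hamiltonian case stated here is already covered by the scheme above.
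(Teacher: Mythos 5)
Your reconstruction of the proof — a global reliability bound of the energy error by the estimator, a perturbed (Young-inequality) estimator-reduction estimate under D\"orfler marking, and the quasi-orthogonality inequality absorbing the $\tbar U_{k+1}-U_k\tbar^2$ term, with $\gamma$, $\delta$, $\beta$ then tuned to yield a single contraction factor $\alpha\in(0,1)$ — is precisely the Casc\'on--Kreuzer--Nochetto--Siebert-style argument that the cited reference~\cite{HoTs07b} adapts to the monotone semilinear Hamiltonian-constraint setting, and your identification of the $h_0$-smallness hypothesis as what keeps the quasi-orthogonality defect subordinate to the contraction gain matches the paper's framing exactly. The proposal is correct and takes essentially the same approach as the paper's cited proof.
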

\begin{proof}
See~\cite{HoTs07b}.
\end{proof}
The strict contraction result above makes possible the
following complexity result from~\cite{HoTs07b}, which
guarantees optimal complexity of the AFEM iteration for the
Hamiltonian constraint equation under 
reasonable assumptions on the problem data.
The approximation class $\mathbb{A}_s$ is precisely characterized
in~\cite{HoTs07a,HoTs07b}.
\begin{theorem}[Optimality]
   \label{T:optimality}
If data lies in approximation class $\mathbb{A}_s$, then
there exists a constant $C$ such that
$$
\tbar u-U_{k+1}\tbar^2 + \mathrm{osc}_k
\le C \left( \#P_k - \#P_0 \right)^{-s}.
$$
\end{theorem}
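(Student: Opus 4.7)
The plan is to combine three ingredients in the style of Stevenson and Cascon--Kreuzer--Nochetto--Siebert: (i) the strict contraction of Theorem~\ref{T:convergence}, (ii) a quasi-optimal cardinality property of the marking step used inside Algorithm~\ref{alg:MC}, and (iii) the mesh-complexity estimate for newest-vertex (or longest-edge) bisection due to Binev--Dahmen--DeVore and Stevenson. First I would characterize the approximation class $\bbbb{A}_s$ as the collection of problem data for which the best combined quasi-error $\inf\{\tbar u - v\tbar^2 + \gamma\,\text{osc}(v,\calg{S}_\ast)^2\}$, taken over all conforming refinements $\calg{S}_\ast$ of $P_0$ with $\#\calg{S}_\ast - \#P_0 \le N$, decays at rate $N^{-s}$; the natural norm here is precisely the quasi-error already controlled in Theorem~\ref{T:convergence}.

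Second, I would establish that the Dörfler-type marking implicit in Algorithm~\ref{alg:MC} is quasi-optimal in cardinality: any refinement $\calg{S}_\ast$ whose discrete solution reduces the quasi-error by a fixed factor $\beta<1$ must have a refined set whose cardinality is comparable, up to a constant, to that of the set selected by Dörfler. The enabling lemma is a localized upper bound stating that the difference between discrete solutions on two nested meshes is bounded by the indicators of Section~\ref{sec:fem_residual_gr} summed only over simplices that are refined. For the Hamiltonian block this follows from the monotonicity~(\ref{eqn:ham_as3a}) and $V$-ellipticity~(\ref{eqn:ham_as3}) of the principal part, together with the weak boundedness~(\ref{eqn:ham_as3b}) used to control the nonlinear cross-terms; for the momentum block it follows directly from the G{\aa}rding coercivity~(\ref{eqn:mom_as3}) combined with the Aubin--Nitsche duality~(\ref{eqn:mom_as4}).

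Third, I would invoke the bisection complexity bound $\#P_k - \#P_0 \le C \sum_{j=0}^{k-1} \#\calg{M}_j$, where $\calg{M}_j$ denotes the simplices marked at level $j$, to convert the per-step quasi-optimal cardinality into a cumulative estimate. Combining this with the strict contraction of Theorem~\ref{T:convergence} by a telescoping geometric-series argument yields $\tbar u - U_k \tbar^2 + \gamma \eta_k^2 \le C(\#P_k - \#P_0)^{-s}$, and the stated bound follows since $\mathrm{osc}_k$ is dominated by $\eta_k^2$ in view of the residual-type structure of the indicator~(\ref{eqn:estimatorHamMom}).

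The main obstacle I expect is extending the quasi-optimal cardinality of Dörfler marking to the fully coupled constraint system. The classical argument rests on Galerkin orthogonality and a Pythagorean identity, both of which are delicate here because the linearized bilinear form~(\ref{eqn:weakHamMom_linearized}) is nonsymmetric (the third and fourth volume integrals couple $W^a$ into the Hamiltonian block and $\phi$ into the momentum block asymmetrically), and because the underlying weak form is nonlinear. Handling this requires a perturbation argument that treats the nonsymmetric coupling as a lower-order contribution, which is legitimate only when the initial mesh is sufficiently fine --- precisely the hypothesis inherited from Theorem~\ref{T:convergence} --- and when the fixed-point iteration underlying the solvability results of~\cite{HKN07a,HNT07a,HNT07b} is already in its contraction regime, so that the quasi-orthogonality used to launch the contraction survives the passage to comparing discrete solutions on arbitrary nested refinements.
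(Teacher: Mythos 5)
The paper itself gives no proof here---just the citation to~\cite{HoTs07b}---and your three-ingredient outline (contraction, quasi-optimal D\"orfler cardinality, Binev--Dahmen--DeVore/Stevenson bisection complexity, glued by a telescoping geometric series) is precisely the Stevenson / Cascon--Kreuzer--Nochetto--Siebert machinery that reference adapts, so your overall plan is the intended one.

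One substantive misalignment, though: your closing paragraph identifies the ``main obstacle'' as surviving the nonsymmetric $\phi$--$W^a$ coupling in the linearized form~(\ref{eqn:weakHamMom_linearized}) and the fixed-point structure behind~\cite{HKN07a,HNT07a,HNT07b}, but Theorem~\ref{T:optimality} as stated is \emph{not} about the coupled system. The text immediately preceding the theorem says Theorem~\ref{T:convergence} concerns ``a general class of nonlinear elliptic equations that includes the Hamiltonian constraint,'' that ``more general results which apply to the coupled system appear in~\cite{HoTs07a},'' and that Theorem~\ref{T:optimality} ``guarantees optimal complexity of the AFEM iteration for the Hamiltonian constraint equation.'' So the energy norm $\tbar\cdot\tbar$ here is built from the scalar $V$-elliptic form $A(\cdot,\cdot)$ of Section~\ref{sec:fem_apriori}, and the obstacle you should actually be addressing is not the asymmetric coupling terms but the failure of exact Galerkin orthogonality for the monotone semilinear problem---which is precisely what the quasi-orthogonality inequality inherited from Theorem~\ref{T:convergence}, together with the monotonicity~(\ref{eqn:ham_as3a}) and the weak boundedness~(\ref{eqn:ham_as3b}), is there to repair. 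Your localized discrete upper bound, D\"orfler cardinality argument, and telescoping step then carry through unchanged in this scalar setting. (A minor bookkeeping note: your telescoping naturally produces $\mathrm{osc}_k^2$ paired with $U_k$, whereas the displayed statement writes $\mathrm{osc}_k$ and $U_{k+1}$; that appears to be a typographical slip in the paper rather than a defect in your argument.)
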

\begin{proof}
See~\cite{HoTs07b}.
\end{proof}

\section{Fast Solvers and Preconditioners for AFEM}

We first introduce a superscript $(j)$ to indicate the level of the
stiffness matrix (or sometimes, the discretization operator) in the
algebraic system (\ref{algebraicSystem}). At certain places, we will
drop the superscript for simplicity.  The FE solution is sampled on
the nodes, hence the number of internal nodes is equal to the number
of degrees of freedom (DOF) or unknowns in the system
(\ref{algebraicSystem}).  The total number of DOF on the finest level
$J$ is denoted by $N_J=N$.

The stiffness matrix $A^{(j)}$ is ill-conditioned, with the condition
number $\kappa(A)$ of the system (\ref{algebraicSystem}) growing
${\cal O}(2^{2j})$ as $j \rightarrow \infty$ (in the case of second order
elliptic PDE). It is imperative to improve the
condition of (\ref{algebraicSystem}) by transforming the system to an
equivalent one, namely by preconditioning
$$
(C^{(j)} ~ A^{(j)}) ~ U^{(j)} = C^{(j)} ~ F^{(j)},
$$ 
where 
$\kappa({C^{(j)}}^{1/2} ~ A^{(j)} ~ {C^{(j)}}^{1/2}) \ll\kappa(A^{(j)})$.  
Moreover, the preconditioning matrix $C^{(j)}$ has to be positive
definite, and in some sense simple. One possible way to implement the
above strategy is to determine a positive definite matrix $B^{(j)}$,
having the following two properties:
\begin{itemize}
\item ${B^{(j)}}^{-1}$ can efficiently be computed (usually, ${\cal O}(N_j)$
is the desirable bound for the number of arithmetical operations when
solving a linear system with coefficient matrix $B^{(j)}$),

\item $A^{(j)}$ and $B^{(j)}$ are ``almost'' spectrally equivalent, i.e. 
$$
\lambda_B x^T B^{(j)}x \leq x^T A^{(j)}x \leq \Lambda_B x^T B^{(j)}x,
~~~x \in \Re^{N_j},
$$
with two positive constants $\lambda_B, \Lambda_B$ with a small ratio 
$\frac{\Lambda_B}{\lambda_B}$. 
\end{itemize}
Since $\kappa( {B^{(j)}}^{-1/2} ~ A^{(j)} ~ {B^{(j)}}^{-1/2}) \leq 
\frac{\Lambda_B}{\lambda_B}$, then $C^{(j)}={B^{(j)}}^{-1}$ will be a good 
preconditioner choice.

Solution of the algebraic system~(\ref{algebraicSystem}) by iterative
methods has been the subject of intensive research because of the
enormous practical impact on a number of application areas in
computational science.  For quality approximation in physical
simulation, one is required to use meshes containing very large
numbers of simplices leading to approximation spaces ${\cal S}_j$ with
very large dimension $N_j$.  Only iterative methods which scale well
with $N_j$ can be used effectively, which usually leads to the use of
multilevel-type iterative methods and preconditioners.  Even with the
use of such optimal methods for~(\ref{algebraicSystem}), which means
methods which scale linearly with $N_j$ in both memory and
computational complexity, the approximation quality requirements on
${\cal S}_j$ often force $N_j$ to be so large that only parallel
computing techniques can be used to solve~(\ref{algebraicSystem}).

To overcome this difficulty one employs adaptive methods, which
involves the use of {\em a posteriori} error estimation to drive
{\em local mesh refinement} algorithms.  This approach leads to
approximation spaces ${\cal S}_j$ which are adapted to the particular
target function $u$ of interest, and as a result can achieve a desired
approximation quality with much smaller approximation space dimension
$N_j$ than non-adaptive methods.  One still must solve the algebraic
system~(\ref{algebraicSystem}), but unfortunately most of the
available multilevel methods and preconditioners are no longer
optimal, in either memory or computational complexity.  This is due to
the fact that in the local refinement setting, the approximation
spaces $S_j$ do not increase in dimension geometrically as they do in
the uniform refinement setting.  As a result for example, a single
multilevel V-cycle no longer has linear complexity, and the same
difficulty is encountered by other multilevel methods.  Moreover,
storage of the discretization matrices and vectors for each
approximation space, required for assembling V-cycle and similar
iterations, no longer has linear memory complexity.

A partial solution to the problem with multilevel methods in the local
refinement setting is provided by the hierarchical basis (HB) method~
\cite{BaDu80,BDY88,Yser86b}.  This method is based on a direct or
hierarchical decomposition of the approximation spaces ${\cal S}_j$
rather than the overlapping decomposition employed by the multigrid
and Bramble-Pasciak-Xu (BPX)~\cite{BPX90} method, and therefore by
construction has linear memory complexity as well as linear
computational complexity for a single V-cycle-like iteration.
Unfortunately, the HB condition number is not uniformly bounded,
leading to worse than linear overall computational complexity.  While
the condition number growth is slow (logarithmic) in two dimensions,
it is quite rapid (geometric) in three dimensions, making it
ineffective in the 3D local refinement setting. Recent alternatives to
the HB method, including both BPX-like methods~\cite{BrPa93,BPX90} and
wavelet-like stabilizations of the HB methods~\cite{VaWa97a}, provide
a final solution to the condition number growth problem.  It was shown
in~\cite{DaKu92} that the BPX preconditioner has uniformly bounded
condition number for certain classes of locally refined meshes in two
dimensions, and more recently in~\cite{AkHo02,AkHo05a,ABH02a,ABH04} it
was shown that the condition number remains uniformly bounded for
certain classes of locally refined meshes in three spatial dimensions.
In~\cite{AkHo02,ABH02a,AkHo05b}, it was also shown that
wavelet-stabilizations of the HB method give rise to uniformly bounded
conditions numbers for certain classes of local mesh refinement in
both the two- and three-dimensional settings.

\subsection{Preliminaries on Optimal Preconditioners} \label{sec:pre}
In the uniform refinement setting, the parallelized or additive version of
the multigrid method, also known as the BPX preconditioner, is defined
as follows:
\begin{equation} \label{eq:bpxImp}
X u := \sum_{j=0}^J 
2^{j(d-2)} \sum_{i=1}^{N_j} (u,\phi_i^{(j)}) \phi_i^{(j)},~~~u \in {\cal S}_J.
\end{equation}
Only in the presence of a geometric increase in the number of DOF,
the same assumption for optimality of a single classical
(i.e.~smoother acting on all DOF) multigrid or BPX iteration, does the cost 
per iteration remain optimal. In the case of local refinement, the BPX 
preconditioner (\ref{eq:bpxImp}) (usually known as additive multigrid) 
can easily be suboptimal because of the suboptimal cost per iteration.  
If the smoother is restricted to the space generated by \emph{fine or newly 
created} basis functions,
i.e.~$\tilde{{\cal S}}_j := (I_j - I_{j-1})~{\cal S}_{j}$,
then~(\ref{eq:bpxImp}) corresponds to the additive HB
preconditioner in~\cite{Yser86b}:
\begin{equation} \label{eq:bpxHb}
X_{\mbox{{\tiny {\rm HB}}}} u =
\sum_{j=0}^J 2^{j(d-2)} \sum_{i=N_{j-1}+1}^{N_j} (u,\phi_i^{(j)}) \phi_i^{(j)},
~~~ u \in {\cal S}_J.
\end{equation}
However, the HB preconditioner suffers from a suboptimal iteration count.  
Namely, if the algebraic 
system~(\ref{algebraicSystem}) is preconditioned by the HB preconditioner, the 
arising condition number is ${\cal O}(J^2)$ and ${\cal O}(2^J)$ in 2D and 3D, respectively.

In the local refinement setting, in order to maintain optimal 
overall computational complexity, the remedy is to restrict the smoother 
to a local space $\tilde{{\cal S}}_j$ which is typically slightly larger 
than the one generated by basis functions corresponding to fine DOF:
\begin{equation} \label{subset:localSpace}
(I_j - I_{j-1})~{\cal S}_{j} \subseteq \tilde{{\cal S}}_j \subset {\cal S}_j,
\end{equation}
where $I_j:L_2(\Omega) \rightarrow {\cal S}_j$ denotes the finite
element interpolation operator. The subspace generated by the nodal
basis functions corresponding to \emph{fine or newly created} degrees
of freedom (DOF) on level $j$ is denoted by $(I_j - I_{j-1})~{\cal
S}_{j}$.  Nodes--equivalently, DOF--corresponding to $\tilde{{\cal
S}}_j$ and their cardinality will be denoted by $\tilde{{\cal N}}_j$
and $\tilde{N}_j$, respectively. The above deficiencies of the BPX and
HB preconditioners can be overcome by restricting the (smoothing)
operations to $\tilde{{\cal S}}_j$.  This leads us to define the BPX
preconditioner for the local refinement setting as:
\begin{equation} \label{id:BPX}
X u := \sum_{j=0}^J 2^{j(d-2)} \sum_{i \in \tilde{\calg{N}}_j}
(u,\phi_i^{(j)}) \phi_i^{(j)},~~~ u \in {\cal S}_J.
\end{equation}

\begin{remark}
In order to prove optimal results on convergence, the basic theoretical 
restriction on the refinement procedure is that the refinement regions
from each level forms a \emph{nested} sequence. 
Let $\Omega_j$ denote the refinement region, namely, the union of 
the supports of basis functions which are introduced at level $j$.
Due to nested refinement $\Omega_j \subset \Omega_{j-1}$. Then the 
following nested hierarchy holds:
$$
\Omega_J \subset \Omega_{J-1} \subset \cdots \subset \Omega_0 = \Omega.
$$
Simply, the restriction indicates that tetrahedra of level $j$ which
are not candidates for further refinement will never be touched in the
future.  In practice, depending on the situation, the above nestedness
restriction may or may not be enforced. We enforce the restriction in
Lemma~\ref{lemma:BEK}. In realistic situations, it is typically not
enforced.
\end{remark}

\subsection{Matrix Representations and Local Smoothing}

We describe how to construct the matrix representation of the
preconditioners under consideration. Let the prolongation operator from
level $j-1$ to $j$ be denoted by
$$
P_{j-1}^j \in \Re^{\tilde{N}_j \times \tilde{N}_{j-1}},
$$
and also denote the prolongation operator from level $j$ to $J$ as:
$$
P_j \equiv P_j^J = P_{J-1}^J  \ldots P_j^{j+1} \in 
\Re^{N_J \times \tilde{N}_j},
$$
where $P_J^J$ is defined to be the rectangular identity matrix 
$I \in \Re^{N_J \times \tilde{N}_{J-1}}$. 
Then the matrix representation of (\ref{eq:bpxImp}) becomes~\cite{XuQi94}:
$$
X = \sum_{j=0}^J 2^{j(d-2)} P_j P_j^t.
$$
One can also introduce a version with an explicit smoother $G_j$: 
$$
X = \sum_{j=0}^J P_j G_j P_j^t.
$$
Throughout this article, the smoother 
$G_j \in \Re^{\tilde{N}_j \times \tilde{N}_j}$ is a symmetric 
Gauss-Seidel  iteration. Namely, 
$G_j = (D_j + U_j)^{-1} D_j (D_j + L_j)^{-1}$ where
$A_j = D_j + L_j + U_j$ with $\tilde{N}_0 = N_0$.

The matrix representation of~(\ref{eq:bpxHb}) is formed from matrices $H_j$
which are simply the tails of the $P_j$ corresponding to newly introduced DOF 
in the fine space. In other words, 
$H_j \in \Re^{N_J \times (N_j - N_{j-1})}$ is given by only keeping the fine 
columns (the last $N_j - N_{j-1}$ columns of $P_j$). Hence,
the matrix representation of~(\ref{eq:bpxHb}) becomes:
$$
X_{\mbox{{\tiny {\rm HB}}}} = \sum_{j=0}^J 2^{j(d-2)} H_j H_j^t.
$$

If the sum over $i$ in (\ref{eq:bpxImp}) is restricted only to those 
nodal basis functions with supports that intersect the refinement region
\cite{BoYs93,BrPa93,DaKu92,Oswa94}, then we obtain the set called as 
{\em onering} of fine DOF. Namely, the set which contains fine DOF and their
immediate neighboring coarse DOF: 
$${\rm ONERING}^{(j)}=\{onering (ii):~ii=N_{j-1}+1, \ldots, {N_j} \},$$
where $onering (ii) = \{ii, fathers(ii)\}.$
Now, the generic preconditioner~(\ref{id:BPX}) for local refinement 
transforms into the following preconditioner:
\begin{equation} \label{eq:bpxLocal}
X u =
\sum_{j=0}^J 2^{j(d-2)}
\sum_{ i \in \mbox{{\tiny {\rm ONERING}}}^{(j)} }
(u,\phi_i^{(j)}) \phi_i^{(j)},
~~~ u \in {\cal S}_J.
\end{equation}

There are three popular choices for $\tilde{{\cal N}}_j$. 
We outline possible BPX choices by the following DOF corresponding to:
\begin{itemize}
\item {\bf (DOF-1)} The basis functions with supports that intersect 
the refinement region $\Omega_j$~\cite{BoYs93,BrPa93,DaKu92}. We call
this set $\mbox{{\tiny {\rm ONERING}}}^{(j)}$.
\item {\bf (DOF-2)} The basis functions with supports that are contained in 
$\Omega_j$~\cite{Oswa94}.
\item {\bf (DOF-3)} Created by red refinement and their corresponding coarse
DOF.
\end{itemize}
Here, red refinement refers to quadrasection or octasection in 2D and
3D, respectively.  Green refinement simply refers to bisection.

The interesting ones are DOF-1 and DOF-3 and we would like to
elaborate on these. In the numerical experiments reported
in~\cite{ABH02a}, DOF-1 was used. For the provably optimal
computational complexity result in Lemma~\ref{lemma:BEK} DOF-3 is
used.

\subsubsection{The Sets DOF-1, DOF-3 and Local Smoothing Computational Complexity}
\label{sec:onering}

The set DOF-1 can be directly determined by the sparsity pattern of
the fine-fine subblock $A_{22}^{(j)}$ of the stiffness matrix
in~(\ref{defn:2by2}). Then, the set of DOF over which the BPX
preconditioner ~(\ref{eq:bpxLocal}) smooths is simply the union of the
column locations of nonzero entries corresponding to fine DOF. Using
this observation, HB smoother can easily be modified to be a BPX
smoother.

DOF-3 is equivalent to the following set:
\begin{equation} \label{eq:BEK-DOF}
\hspace*{-1.5cm}
\tilde{{\cal N}}_j = \{i=N_{j-1}+1, \ldots, N_j \}
\bigcup \{ i: \phi_i^{(j)} \neq \phi_i^{(j-1)},~~i=1, \ldots, N_{j-1}\},
\end{equation}
and the corresponding space over which the smoother acts:
$$
\tilde{{\cal S}}_j = {\rm span} \left[
\bigcup \{ \phi_i^{(j)} \}_{i=N_{j-1}+1}^{N_j}
\bigcup \{ \phi_i^{(j)} \neq \phi_i^{(j-1)} \}_{i=1}^{N_{j-1}} 
\right].
$$

This set is used in the Bornemann-Erdmann-Kornhuber (BEK) refinement~\cite{BEK93} and we 
utilize this set for the estimates of 3D local refinement. Since green refinement 
simply bisects a simplex, the modified basis function is the same as the one before the
bisection due to linear interpolation. So the set of DOF in (\ref{eq:BEK-DOF})
corresponds to DOF created by red refinement and corresponding coarse DOF 
(father DOF). The following crucial result from~\cite{BEK93} establishes a 
bound for the number of nodes used for smoothing. This indicates that
the BPX preconditioner has provably optimal (linear) computational complexity
per iteration on the resulting 3D mesh produced by the BEK refinement 
procedure.

\begin{lemma} \label{lemma:BEK}
The total number of nodes used for smoothing satisfies the bound:
\begin{equation} \label{ineq:smoothBound}
\sum_{j=0}^J \tilde{N}_j \leq \frac{5}{3}N_J - \frac{2}{3}N_0.
\end{equation}
\end{lemma}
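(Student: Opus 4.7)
The plan is to reduce the claimed cumulative bound to a per-level estimate and then extract that per-level estimate from a double-counting argument on the incidence structure of red refinement. Writing $n_j := N_j - N_{j-1}$ for $j \ge 1$, so that $\sum_{j=1}^J n_j = N_J - N_0$, and recalling the convention $\tilde{N}_0 = N_0$, the target inequality~(\ref{ineq:smoothBound}) rearranges to
$$
\sum_{j=1}^J \tilde{N}_j \;\le\; \tfrac{5}{3}\sum_{j=1}^J n_j,
$$
which holds as soon as $\tilde{N}_j \le \tfrac{5}{3}\,n_j$ for each $j \ge 1$.

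To obtain this per-level bound I would first decompose $\tilde{\calg{N}}_j$ according to~(\ref{eq:BEK-DOF}). The first component contributes exactly $n_j$ newly introduced DOF. The second component---call it $\mathcal{F}_j$---consists of coarse DOF whose nodal basis function was altered by the refinement. As noted in the text following~(\ref{eq:BEK-DOF}), green bisection leaves nodal basis functions invariant by linear interpolation, so only red refinement contributes to $\mathcal{F}_j$. Consequently every $v \in \mathcal{F}_j$ is a vertex of a tetrahedron that was red-refined (octasected) at level $j$, and is a father of some newly introduced red midpoint.

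The heart of the proof is a bipartite double count between the set of red midpoints introduced at level $j$ (of cardinality $n_j^R \le n_j$) and their fathers in $\mathcal{F}_j$, with incidence $(m,v)$ whenever $v$ is an endpoint of the refined edge carrying midpoint $m$. Each midpoint $m$ has degree exactly $2$. Within any single red-refined tetrahedron each of the four vertices meets three of the six edges, and sharing across multiple red-refined tetrahedra can only increase incidence degrees; hence every $v \in \mathcal{F}_j$ has degree at least $3$. Double-counting incidences,
$$
2\,n_j^R \;=\; \sum_{v \in \mathcal{F}_j}\deg(v) \;\ge\; 3\,|\mathcal{F}_j|,
$$
so $|\mathcal{F}_j| \le \tfrac{2}{3}\,n_j^R \le \tfrac{2}{3}\,n_j$. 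Combining with the new-DOF count yields $\tilde{N}_j = n_j + |\mathcal{F}_j| \le \tfrac{5}{3}\,n_j$, and summing over $j$ closes the argument.

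The main obstacle I anticipate is the careful justification that the degree-three lower bound on fathers is genuinely maintained at the interfaces between red-refined and green-refined regions of the BEK procedure. In particular, one must be sure that every vertex landing in $\mathcal{F}_j$ actually sits inside a red-refined tetrahedron---so that the three-edges-per-vertex count inside that tetrahedron is available---rather than being introduced by a cascade of green bisections alone. This is exactly the content of the identification $\mathcal{F}_j = \{\text{vertices of red-refined tetrahedra at level } j\}$ secured in the previous step via the invariance of nodal basis functions under green bisection.
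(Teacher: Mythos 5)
Your proof is correct, and it reconstructs the counting argument from Bornemann--Erdmann--Kornhuber, which is all the paper does by way of proof (the paper simply cites \cite[Lemma 1]{BEK93}). The reduction to the per-level bound $\tilde{N}_j \le \tfrac{5}{3}n_j$ via $\tilde{N}_0 = N_0$ is exactly right, and the bipartite double count with $\deg(m)=2$ (two endpoints per bisected edge) and $\deg(v)\ge 3$ (each vertex of a red-refined tetrahedron is incident to at least three bisected edges) is the mechanism that produces the $5/3$ constant. One small imprecision worth flagging: the statement that ``green bisection leaves nodal basis functions invariant'' is not literally true as you phrase it---green bisection of a simplex $T$ does change the basis function at a vertex $v$ of $T$ whenever an edge of $T$ incident to $v$ acquires a midpoint. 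The correct observation, which saves your argument, is that in the BEK scheme green closure introduces no new midpoints of its own; any bisected edge of a green simplex is an edge of a neighboring red-refined tetrahedron, so the affected vertex $v$ is in any case a vertex of some red tetrahedron, and the degree-three lower bound applies to it. With that adjustment, your identification of $\mathcal{F}_j$ with the vertices of red-refined tetrahedra (and the equality $n_j^R = n_j$ in the BEK setting, rather than merely $\le$) is justified, and the argument closes cleanly.
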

\begin{proof}
See~\cite[Lemma 1]{BEK93}.
\end{proof}

The above lemma constitutes the first computational complexity
optimality result in 3D for the BPX preconditioner as reported in
~\cite{AkHo02}. A similar result
for 2D red-green refinement was given by Oswald~\cite[page95]{Oswa94}.
In the general case of local smoothing operators which
involve smoothing over newly created basis functions plus some
additional set of local neighboring basis functions, one can extend
the arguments from~\cite{BEK93}
and~\cite{Oswa94} using shape regularity.

\subsection{Hierarchical Basis Methods and Their Stabilizations}
HB methods exploit a 2-level hierarchical decomposition of the DOF. They
are divided into the coarse (the ones inherited from previous levels) and 
the fine (the ones that are newly introduced) nodes. In fact, in the operator
setting, this decomposition is a direct consequence of the direct decomposition
of the finite element space as follows:
$${\cal S}_j = {\cal S}_{j-1} \oplus {\cal S}_j^f.$$
Hence, $A_j$ can be represented by a two-by-two block form:
\begin{equation}  \label{defn:2by2}
A_j =
\left[ \begin{array}{cc}
A_{j-1}    & A^{(j)}_{12} \\
A^{(j)}_{21} & A^{(j)}_{22}
\end{array} \right]
\begin{array}{ll}
\} & {\cal S}_{j-1} \\
\} & {\cal S}^f_j
\end{array},
\end{equation}
where $A_{j-1}$, $A^{(j)}_{12}$, $A^{(j)}_{21}$, and $A_{22}^{(j)}$
correspond to coarse-coarse, coarse-fine, fine-coarse, and fine-fine 
discretization operators respectively. The same 2-level decomposition 
carries directly to the matrix setting. 

As mentioned earlier, HB methods suffer from the condition number
growth.  This makes makes HB methods especially ineffective in the 3D
local refinement setting. As we mentioned earlier, wavelet-like
stabilizations of the HB methods~\cite{VaWa97a} provide a final
solution to the condition number growth problem. The motivation for
the stabilization hinges on the following idea.  The BPX decomposition
gives rise to basis functions which are not locally supported, but
they decay rapidly outside a local support region.  This allows for
locally supported {\em approximations} as illustrated in
Figures~\ref{figure:basisFuncs} and \ref{figure:basisFuncsLower}.

\begin{figure}[ht]
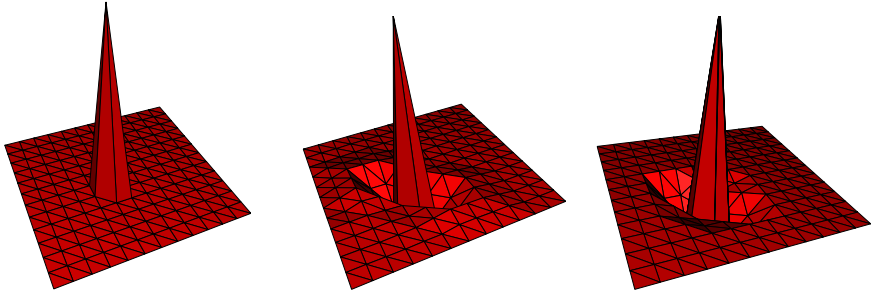

\begin{center}
\mbox{\myfig{nodalBasis}{5.25cm}}
\mbox{\myfig{wmhb_gs0}{5cm}}
\mbox{\myfig{wmhb_jac0}{5cm}}
\end{center}
\caption{Left: Hierarchical basis function without modification. 
Wavelet modified hierarchical basis functions. Middle: One iteration of
symmetric Gauss-Seidel approximation. Right: One iteration of Jacobi 
approximation.}
\label{figure:basisFuncs}
\end{figure}

\begin{figure}[ht]
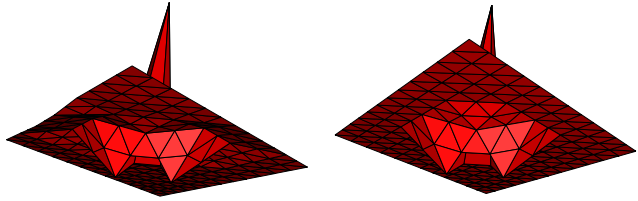

\vspace*{-1cm}
\begin{center}
\mbox{\myfig{wmhb_gs1}{5.5cm}}
\mbox{\myfig{wmhb_jac1}{5.5cm}}
\end{center}
\caption{Lower view of middle and left basis functions in 
Figure~\ref{figure:basisFuncs}.}
\label{figure:basisFuncsLower}
\end{figure}

The {\em wavelet modified hierarchical basis (WMHB)
methods}~\cite{VaWa97a,VaWa97b,VaWa98} can be viewed as an
approximation of the wavelet basis stemming from the BPX
decomposition~\cite{Jaff92}.  A similar wavelet-like multilevel
decomposition approach was taken in~\cite{Stev97}, where the orthogonal
decomposition is formed by a discrete $L_2$-equivalent inner
product. This approach utilizes the same BPX two-level
decomposition~\cite{Stev95,Stev97}.

For local refinement setting, the other primary method of interest is
the WMHB method.  The WMHB methods can be described as additive or
multiplicative Schwarz methods.  In one of the previous
papers~\cite{AkHo02}, it was shown that the
additive version of the WMHB method is optimal under certain types of
red-green mesh refinement.  Following the notational framework
in~\cite{AkHo02,ABH02a,AkHo05b,VaWa98}, this method is 
defined recursively as follows:
\begin{definition}
 \label{defn:Dj}
The additive WMHB method $D^{(j)}$ is defined for $j=1,\ldots,J$ as
$$ D^{(j)} \equiv \left[
\begin{array}{ll}
D^{(j-1)}    & 0 \\
0            & B^{(j)}_{22}
\end{array} \right],
$$
with $D^{(0)}=A^{(0)}$.
\end{definition}

With smooth PDE coefficients, optimal results were also established
for the multiplicative version of the WMHB method
in~\cite{AkHo02,AkHo05b}.  Our numerical experiments demonstrate such
optimal results. This method can be written recursively as:
\begin{definition} \label{defn:Bj}
The  multiplicative WMHB method $B^{(j)}$ is defined as
$$ B^{(j)} \equiv \left[
\begin{array}{ll}
B^{(j-1)}    & A^{(j)}_{12} \\
0            & B^{(j)}_{22}
\end{array} \right]
\left[ \begin{array}{ll}
I                            & 0 \\
B^{(j)^{-1}}_{22} A_{21}^{(j)} & I
\end{array} \right] =
\left[ \begin{array}{ll}
B^{(j-1)}+ A^{(j)}_{12}B^{(j)^{-1}}_{22} A_{21}^{(j)} & A^{(j)}_{12} \\
A_{21}^{(j)} & B^{(j)}_{22}
\end{array} \right],
$$
with $B^{(0)}=A^{(0)}$.
\end{definition}
$B^{(j)}_{22}$ denotes an approximation of
$A^{(j)}_{22}$, e.g. Gauss-Seidel or Jacobi approximation.
For a more complete description of these and related algorithms,
see~\cite{AkHo02,AkHo05b}.


\section{Practical Implementation of Fast Solvers}
The overall utility of any finite element code depends strongly on
efficient implementation of its core algorithms and data structures.
Finite element method becomes a viable tool in addressing realistic
simulations only when these critical pieces come together. Theoretical
results involving complexity are of little practical importance if the
methods cannot be implemented.  For algorithms involving data
structures, this usually means striking a balance between storage
costs and computational complexity.  For instance, finding a minimal
representation for a data set is only useful if the information can be
accessed efficiently. We elaborate on the data structure and the
implementation of the methods under consideration.

\subsection{Implementation  of Hierarchical Basis Methods}
In HB methods, nodal basis functions are transformed into hierarchical
basis functions via a nonsingular change of basis matrix: 
$$ 
Y = \left[ \begin{array}{cc} 
I & Y_{12} \\ Y_{21} & I + Y_{22}
\end{array} \right], 
$$
where $Y \in \Re^{N_j \times N_j}$, $Y_{12} \in \Re^{N_{j-1} \times n_j}$,
$Y_{21} \in \Re^{n_j \times N_{j-1}}$, and $Y_{22} \in \Re^{n_j \times n_j}$.
We denote the number of fine DOF at level $j$ as $n_j$.  Then, 
$N_j = n_1 + n_2 + \cdots + n_j$ is the total number of DOF at level $j$.
In this representation, we have assumed that the nodes are ordered with the
nodes $N_{j-1}$ inherited from the previous level listed first, and the $n_j$
new DOF listed second.  For both wavelet modified (WMHB) and
unmodified hierarchical basis (HB), the $Y_{21}$ block represents the
last $n_j$ rows of the prolongation matrix;
$$
P_{j-1}^{j} = 
\left[ \begin{array}{l}
I \\ Y_{21}^{(j)} 
\end{array} \right].
$$ 
In the HB case, the $Y_{12}$ and $Y_{22}$ blocks are zero resulting
in a very simple form:
\begin{equation}
Y_{\mathrm{hb}} = \left[ \begin{array}{cc} I & 0 \\ Y_{21} & I  \end{array} \right]
\end{equation}

Then, the
original system (\ref{algebraicSystem}) is related to the HB system
through $Y$ as follows.
\begin{eqnarray}
A_{{\mathrm hb}}~ U_{{\mathrm hb}} & = & F_{{\mathrm hb}}, \label{HBlinearSystem} \\
A_{{\mathrm hb}} & = & Y^T A Y, \nonumber \\
U       & = & Y  ~U_{{\mathrm hb}}, \nonumber \\
F_{{\mathrm hb}} & = & Y^T~F \nonumber.
\end{eqnarray}

It will be shown later that the HBMG algorithm operates on sublocks of
the HB system (\ref{HBlinearSystem}). We explicitly express the
sublocks as follows.
$$ \begin{array}{ccccc}
\left[ \begin{array}{cc}
A_{{\mathrm hb},11} & A_{{\mathrm hb},12} \\
A_{{\mathrm hb},21} & A_{{\mathrm hb},22}
\end{array} \right] & = &
\left[ \begin{array}{cc}
I & Y_{21}^T \\ 0 & I
\end{array} \right] &
\left[ \begin{array}{cc}
A_{11} & A_{12} \\
A_{21} & A_{22}
\end{array} \right] &
\left[ \begin{array}{cc}
I & 0 \\ Y_{21} & I
\end{array} \right],
\end{array} 
$$ 

$$\begin{array}{llllllllll}
A_{{\mathrm hb},11} & = & A_{11}& + & A_{12} Y_{21} & + & Y_{21}^T A_{21} & + & Y_{21}^T A_{22} Y_{21} \\
A_{{\mathrm hb},12} & = &       &   & A_{12}   &   &            & + & Y_{21}^T A_{22}  \\
A_{{\mathrm hb},21} & = &       &   &          &   & A_{21}     & + & A_{22} Y_{21} \\
A_{{\mathrm hb},22} & = &       &   &          &   &            &   & A_{22},
\end{array}
$$ 

Next, we briefly describe the HBMG Algorithm. The HBMG routine can interpreted
as an iterative process for solving the system (\ref{algebraicSystem})
with an initial guess of $U_j$.
\begin{algorithm} .\label{algo:hbmg}\\
\begin{enumerate}
\item smooth $A_{{\mathrm hb},22} U_{{\mathrm hb},2} = F_{{\mathrm hb},2} - (A_{{\mathrm hb},21} U_{{\mathrm hb},1})$
\item form residual $R_{{\mathrm hb},1} = F_{{\mathrm hb},1} - (A_{{\mathrm hb},11} U_{{\mathrm hb},1}) - A_{{\mathrm hb},12} U_{{\mathrm hb},2}$
\item solve $A_{{\mathrm hb},11} U_{{\mathrm hb},1} = R_{{\mathrm hb},1}$
\item prolongate $U_{{\mathrm hb}} = U_{{\mathrm hb}} + P U_{{\mathrm hb},1}$
\item smooth $A_{{\mathrm hb},22} U_{{\mathrm hb},2} = F_{{\mathrm hb},2}- (A_{{\mathrm hb},21} U_{{\mathrm hb},1})$
\end{enumerate}
\end{algorithm}
Smoothing involves the approximate solution of the linear system by a
fixed number of iterations (typically one or two) with a method such
as Gauss-Seidel, or Jacobi.  In order to use HBMG as a preconditioner
for CG, one has to make sure the pre-smoother is the adjoint of the
post-smoother.  One should also note that the algorithm can be
simplified by first transforming the linear system
(\ref{algebraicSystem}) into the equivalent system
$$ A (U - U_j) = F - A U_j.$$
In this setting, the initial guess is zero, and the HBMG algorithm
recursively iterates towards the {\em error} with given {\em residual}
on the right hand side.  Simplification comes from the fact that terms
in the parentheses are zero in Algorithm~\ref{algo:hbmg}.

\subsubsection{The Computational and Storage Complexity of the HBMG method}
\label{sec:storage} 
Utilizing the block structure of the stiffness matrix $A^{(j)}$
in (\ref{defn:2by2}), the storage is in the following fashion:
$$
A^{(j)} = \left[ \begin{array}{ll}
\left[ \begin{array}{llll}
A_{11}^{(j)}&\rightarrow&\rightarrow&\rightarrow \\
\rightarrow &\rightarrow&\rightarrow&\rightarrow \\
\rightarrow &\rightarrow&\rightarrow&\rightarrow \\
\rightarrow &\rightarrow&\rightarrow&\rightarrow \\
\end{array} 
\right]_{N_{j-1} \times N_{j-1}}
&
\left[ \begin{array}{ll}
A_{12}^{(j)}& \times \\
\times      & \times \\
\times      & \times \\
\times      & \times \\
\end{array} 
\right]_{N_{j-1} \times n_j}      
\\ 
& \\
\left[ \begin{array}{llll}
A_{21}^{(j)}&\rightarrow&\rightarrow&\rightarrow\\
\rightarrow &\rightarrow&\rightarrow&\rightarrow \\
\end{array} 
\right]_{n_j \times N_{j-1} }     
&
\left[ \begin{array}{ll}
A_{22}^{(j)}&\rightarrow \\
\rightarrow &\rightarrow \\
\end{array} 
\right]_{n_j \times n_j } 
\end{array} 
\right],
$$
where $\rightarrow$ indicates a block stored rowwise, and $\times$
indicates a block which is not stored. By the symmetry in the bilinear
form $\langle DF(u)w,v \rangle$ wrt $w$ and $v$, $A_{12}^T = A_{21}$,
hence $A_{12}$ does not need to be stored.

Just like MG, the HBMG is an algebraic multilevel method as
well. Namely, coarser stiffness matrices are formed algebraically
through the use of variational conditions
\begin{equation} \label{varCond}
A^{(j-1)}={P_{j-1}^j}^T A^{(j)} P_{j-1}^j,~~~j=1,\ldots,J;~~A: = A^{(J)},
\end{equation}
where $P_{j-1}^j$ denotes the prolongation operator from level $j-1$
to $j$. Then, the only matrix to be stored for HB method is
$P_{j-1}^j$, in which $I$ is implicit, and therefore, does not have to
be stored;
$$
P_{j-1}^j = \left[ \begin{array}{l}
\left[ \begin{array}{llll}
I      & \times & \times & \times \\
\times & \times & \times & \times \\
\times & \times & \times & \times \\
\times & \times & \times & \times \\
\end{array} 
\right]_{N_{j-1} \times N_{j-1}}
\\
\left[ \begin{array}{llll}
Y_{21}^{(j)}          &\rightarrow&\rightarrow&\rightarrow\\
\rightarrow&\rightarrow&\rightarrow&\rightarrow \\
\end{array} 
\right]_{n_j \times N_{j-1} }
\end{array}
\right].
$$

In an adaptive scenario, new DOF are introduced in parts of the mesh
where further correction or enrichment is needed.  Naturally, the
elements which are marked by the error estimator shrink in size by
subdivision and the basis functions corresponding to fine nodes are
forced to change more rapidly than the ones that correspond to the
coarse nodes. Such rapidly changing components of the solution are
described as the {\em high frequency} components.  {\em Smoothing} is
an operation which corrects the high frequency components of the
solution, and is an integral part of the MG-like solvers. In MG, all
DOF are exposed to smoothing which then requires to store all blocks
of the stiffness matrix $A$. Unlike MG, the distinctive feature of the
HBMG is that smoothing takes places only on basis functions
corresponding to fine nodes. This feature leads us to make the crucial
observation that V-cycle MG exactly becomes the HBMG method if
smoothing is replaced by fine smoothing.  One can describe HBMG style
smoothing as {\em fine smoothing}.  For fine smoothing, HBMG only
needs to store the {\em fine-fine} interaction block $A_{22}$ of
$A$. It is exactly the fine smoothing that allows HB methods to have
optimal storage complexity.

In a typical case, $n_j$ is a small constant relative to $N_j$ and has
no relation to it. The HB method storage superiority stems from the
fact that on every level $j$ the storage cost is ${\cal O}(n_j)$. Fine
smoothing is used for high frequency components, and this requires to
store $A_{22}$ block which is of size $n_j \times n_j$. $A_{22}$ is
stored rowwise, and the storage cost is ${\cal O}(n_j)$. Coarse grid
correction is used recursively for low frequency components, and this
requires to store $A_{12}$ and $A_{21}$ blocks which are of size
$N_{j-1} \times n_j$ and $n_j \times N_{j-1}$ respectively. Due to
symmetry of the bilinear form, $A_{12}^T$ block is substituted by
$A_{21}$, and $A_{21}$ is stored rowwise, and the storage cost is
again ${\cal O}(n_j)$.

Further strength of HB methods is that computational cost per cycle is
${\cal O}(n_j)$ on each level $j$. The preprocessing computational cost,
namely computation of $A_{{\mathrm hb},11}, A_{{\mathrm hb},12}, A_{{\mathrm hb},21},
A_{{\mathrm hb},22}$ is ${\cal O}(n_j)$.  Hence, in an adaptive refinement scenario,
the overall computational complexity is achieved to be 
${\cal O}( N )$.

Let us observe by a fictitious example how MG fails to maintain
optimal storage complexity.  Let us assume that the finest level is
$J$, then $N=N_J,~~N_{J-1}=N-n_J$, and count the total storage at each
level. Here we take complexity constants to be $1$ for simplicity.
$$\begin{array}{lcl} 
{\rm level}~~J   &  :   & N      \\
{\rm level}~~J-1 &  :   & N-n_J  \\
\vdots     &      & \vdots \\
{\rm level}~~2   &  :   & N-n_J-n_{J-1} - \cdots - n_3 \\
{\rm level}~~1   &  :   & N-n_J-n_{J-1} - \cdots - n_3 - n_2. \\
\end{array}
$$
Adding up the cost at all levels, overall cost is: 
\begin{equation}
(J)N-(J-1)n_J-(J-2)n_{J-1}- \cdots - 2n_3-n_2 \label{eq:cost}.
\end{equation}  
Since $n_j$'s are small constants, then (\ref{eq:cost}) $ \leq NJ$,
implying that overall storage is ${\cal O}( NJ )$. If numerous
refinements are needed, or precisely, in an asymptotic scenario
(i.e. $J \rightarrow \infty$), the storage poses a severe problem.

\subsection{Sparse Matrix Products and the WMHB Implementation}
Our implementation relies on a total of four distinct sparse matrix
data structures: compressed column (COL), compressed row (ROW),
diagonal-row-column (DRC), and orthogonal-linked list (XLN). For
detailed description of the data structures, the reader can refer
to~\cite{ABH02a}.  Each of these storage schemes attempts to record
the location and value of the nonzeros using a minimal amount of
information.  The schemes differ in the exact representation which
effects the speed and manner with which the data can be retrieved.
XLN is an orthogonal-linked list data structure format which is the only
dynamically ``fillable'' data structure used by our methods.  By using
variable length linked lists, rather than a fixed length array, it is
suitable for situations where the total number of nonzeros is not
known {\it a priori}.

The key preprocessing step in the hierarchical basis methods, is converting
the ``nodal'' matrices and vectors into the hierarchical basis.  This
operation involves sparse matrix-vector and matrix-matrix products for
each level of refinement.  To ensure that this entire operation has linear
cost, with respect to the number of unknowns, the per-level change of
basis operations must have a cost of ${\cal O}( n_j )$, where
$n_j := N_j - N_{j-1} $ is the number of ``new'' nodes on level $j$.  For
the traditional multigrid algorithm this is not possible, since enforcing
the variational conditions operates on {\em all} the nodes on each level,
not just the newly introduced nodes.

For WMHB, the $Y_{12}$ and $Y_{22}$ blocks are computed using the mass
matrix, which results in the following formula:
\begin{equation}
Y_{\mathrm{wmhb}} = \left[ \begin{array}{cc} I &  - \mathrm{inv}\left[ M_{\mathrm{hb},11} \right] M_{\mathrm{hb},12}  \\
Y_{21} & I   - Y_{21} \mathrm{inv}\left[ M_{\mathrm{hb},11} \right] M_{\mathrm{hb},12} \end{array} \right],
\end{equation}
where the $\mathrm{inv}\left[ \cdot \right]$ is some approximation to the
inverse which preserves the complexity.
For example, it could be as simple as the inverse of the diagonal, or a
low-order matrix polynomial approximation.  The $M^{\mathrm{hb}}$ blocks are
taken from the mass matrix in the HB basis:
\begin{equation}
M_{\mathrm{hb}} = Y_{\mathrm{hb}}^T M_{\mathrm{nodal}} Y_{\mathrm{hb}}.
\end{equation}
For the remainder of this section, we restrict our attention to the WMHB
case.  The HB case follows trivially with the two additional subblocks
of $Y$ set to zero.

To reformulate the nodal matrix representation of the bilinear form in
terms of the hierarchical basis, we must perform a triple matrix product
of the form:
\begin{eqnarray*}
A_{\mathrm{wmhb}}^{(j)} & = & Y^{(j)^T} A_{\mathrm{nodal}}^{(j)} Y^{(j)} \\
 & =  & \left( I + Y^{(j)^T} \right) A_{\mathrm{nodal}}^{(j)}
\left( I + Y^{(j)} \right).
\end{eqnarray*}
In order to keep linear complexity, we can only copy $A_{\mathrm{nodal}}$
a fixed number of times, i.e. it cannot be copied on every level.  Fixed
size data structures are unsuitable for storing the product, since predicting
the nonzero structure of $A_{\mathrm{wmhb}}^{(j)}$ is just as difficult
as actually computing it.  It is for these reasons that we have chosen
the following strategy:  First, copy  $A_{\mathrm{nodal}}$ on the
finest level, storing the result in an XLN which will eventually become $A_{\mathrm{wmhb}}$.  Second, form the product pairwise, contributing the result to
the XLN.  Third, the last $n_j$ columns and rows of $A_{\mathrm{wmhb}}$ are
stripped off, stored in fixed size blocks, and the operation is repeated on
the next level, using the $A_{11}$ block as the new  $A_{\mathrm{nodal}}$:
\begin{algorithm}
   \label{alg:WMHBSTAB}
(Wavelet Modified Hierarchical Change of Basis)
{\small \begin{itemize}
\item Copy $A_{\mathrm{nodal}}^{(J)} \rightarrow A_{\mathrm{wmhb}}$ in XLN
format.
\item While $j > 0$
    \begin{enumerate}
    \item Multiply $ A_{\mathrm{wmhb}} = A_{\mathrm{wmhb}} Y $ as
	$$ \left[ \begin{array}{cc} A_{11} & A_{12} \\
	A_{21} & A_{22} \end{array} \right] +=
\left[ \begin{array}{cc} A_{11} & A_{12} \\
	A_{21} & A_{22} \end{array} \right]
\left[ \begin{array}{cc} 0 & Y_{12} \\
	Y_{21} & Y_{22} \end{array} \right]
$$

    \item Multiply $ A_{\mathrm{wmhb}} = Y^T A_{\mathrm{wmhb}} $ as
	$$ \left[ \begin{array}{cc} A_{11} & A_{12} \\
	A_{21} & A_{22} \end{array} \right] +=
\left[ \begin{array}{cc} 0 & Y_{21}^T \\
	Y_{12}^T & Y_{22}^T \end{array} \right]
\left[ \begin{array}{cc} A_{11} & A_{12} \\
	A_{21} & A_{22} \end{array} \right]
$$
    \item Remove $A^{(j)}_{21}$, $A^{(j)}_{12}$, $A^{(j)}_{22}$ blocks of
$A_{\mathrm{wmhb}}$ storing in ROW, COL, and DRC formats respectively.
    \item After the removal, all that remains of $A_{\mathrm{wmhb}}$ is its $A^{(j)}_{11}$ block.
    \item Let j = j - 1, descending to level $j-1$.
    \end{enumerate}
 \item End While.
    \item Store the last $A_{\mathrm{wmhb}}$ as $A_{\mathrm{coarse}}$
\end{itemize} }
\end{algorithm}

We should note that in order to preserve the complexity of the overall
algorithm, all of the matrix-matrix algorithms must be carefully
implemented.  For example, the change of basis involves computing the
products of $A_{11}$ with $Y_{12}$ and $Y_{12}^T$.  To preserve storage
complexity, $Y_{12}$ must be kept in compressed column format, COL.
For the actual product, the loop over the columns of $Y_{12}$ must be
ordered first, then a loop over the nonzeros in each column, then a loop
over the corresponding row or column in $A_{11}$.
It is exactly for this reason, that one must
be able to traverse $A_{11}$ both by row and by column, which is why we
have chosen an orthogonal-linked matrix structure for $A$ during the
change of basis (and hence $A_{11}$).

To derive optimal complexity algorithms for the other products, it is
enough to ensure that the outer loop is always over a dimension
of size $n_j$.  Due to the limited ways in which a sparse matrix
can be traversed, the ordering of the remaining loops will usually
be completely determined.
Further gains can be obtained in the symmetric case, since
only the upper or lower portion of the matrix needs to be explicitly
computed and stored.

\section{The Finite Element ToolKit for the Einstein Constraints}
   \label{sec:fetk}

\FETK{}~\cite{Hols2001a} (see also~\cite{HBW99,BHW99,BaHo98a})
is an adaptive multilevel finite element code in ANSI C developed 
by one of the authors (M.H.) over several years at Caltech and UC San Diego.
It is designed to produce provably accurate numerical solutions to
nonlinear covariant elliptic systems of tensor equations
on 2- and 3-manifolds in an optimal or nearly-optimal way.
\FETK{} employs {\em a posteriori} error estimation, adaptive simplex 
subdivision, 
unstructured algebraic multilevel methods, global inexact Newton methods, 
and numerical continuation methods for the highly accurate numerical solution
of nonlinear covariant elliptic systems on (Riemannian) 2- and 3-manifolds.
In this section, we describe some of the key design features of \FETK{}.
A sequence of careful numerical examples producing initial data
for an evolution simulation appear in~\cite{KAHPT07a}.

\subsection{The overall design of \FETK{}}

The finite element kernel library in \FETK{} is referred to
as \MC{} (or Manifold Code).
\MC{} is an implementation of Algorithm~\ref{alg:MC},
employing Algorithm~\ref{alg:newton} for nonlinear elliptic systems that
arise in Step~1 of Algorithm~\ref{alg:MC}.
The linear Newton equations in each iteration of Algorithm~\ref{alg:newton}
are solved with algebraic multilevel methods, and the algorithm is
supplemented with a continuation technique when necessary.
Several of the features of \FETK{} are somewhat unusual, allowing for the 
treatment
of very general nonlinear elliptic systems of tensor equations on domains
with the structure of 2- and 3-manifolds.
In particular, some of these features are:
\begin{itemize}
\item {\em Abstraction of the elliptic system}: The elliptic system 
     is defined only through a nonlinear weak form over the domain manifold,
     along with an associated linearization form, also defined everywhere on
     the domain manifold
     (precisely the forms $\langle F(u),v \rangle$
     and $\langle DF(u)w,v \rangle$ in the discussions above).
     To use the {\em a posteriori} error estimator, a third function
     $F(u)$ must also be provided (essentially the strong form of the problem).
\item {\em Abstraction of the domain manifold}: The domain manifold is
     specified by giving a polyhedral representation of the topology, along
     with an abstract set of coordinate labels of the user's interpretation,
     possibly consisting of multiple charts.
     \FETK{} works only with the topology of the domain, the connectivity
     of the polyhedral representation.
     The geometry of the domain manifold is provided only through the form
     definitions, which contain the manifold metric information, and through
     a {\tt oneChart()} routine that the user provides to resolve chart
     boundaries.
\item {\em Dimension independence}: The same code paths are taken for
     both two- and three-dimensional problems (as well as for
     higher-dimensional problems).
     To achieve this dimension independence, \FETK{} employs the simplex as its 
     fundamental geometrical object for defining finite element bases.
\end{itemize}
As a consequence of the abstract weak form approach to defining the problem,
the complete definition of the constraints in the Einstein equations
requires writing only 1000 lines of C to define the two weak forms,
and to define the {\tt oneChart()} routine.
Changing to a different problem, e.g., large deformation nonlinear elasticity,
involves providing only a different definition of the forms and a different
domain description.

\subsection{Topology and geometry representation in \FETK{}}

A datastructure called the {\em ringed-vertex} (cf.~\cite{Hols2001a})
is used to represent meshes of $d$-simplices of arbitrary topology.
This datastructure is illustrated in Figure~\ref{fig:river}.
\begin{figure}[tbh]
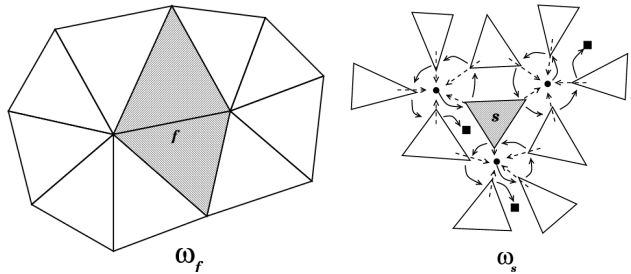

    \label{fig:polyman}
\begin{center}
\mbox{\myfigpng{chart}{1.4in}}
\hspace*{0.1cm}
\mbox{\myfigpng{river}{1.4in}}
\end{center}
\caption{Polyhedral manifold representation.
The figure on the left shows two overlapping polyhedral (vertex) charts
consisting of the two rings of simplices around two vertices sharing an edge.
The region consisting of the two darkened triangles around the face $f$ is
denoted $\omega_f$, and represents the overlap of the two vertex charts.
Polyhedral manifold topology is represented by \FETK{} using the 
{\em ringed vertex}
datastructure.
The datastructure is illustrated for a given simplex $s$ in the figure on the
right; the topology primitives are vertices and $d$-simplices.
The collection of the simplices which meet the simplex $s$ at its vertices
(which then includes those simplices that share faces as well) is denoted
as $\omega_s$.
(The set $\omega_s$ includes $s$ itself.)
Edges are temporarily created during subdivision but are then destroyed 
(a similar ring datastructure is used to represent the edge topology). }
\label{fig:river}
\end{figure}
The ringed-vertex datastructure
is somewhat similar to the winged-edge, quad-edge, and edge-facet
datastructures commonly used in the computational geometry community for
representing 2-manifolds~\cite{Muck93}, but it can be used more generally
to represent arbitrary $d$-manifolds, $d=2,3,\ldots$.
It maintains a mesh of $d$-simplices with near minimal storage,
yet for shape-regular (non-degenerate) meshes, it provides $O(1)$-time access
to all information necessary for refinement, un-refinement, and discretization
of an elliptic operator.
The ringed-vertex datastructure also allows for dimension independent
implementations of mesh refinement and mesh manipulation, with one
implementation covering arbitrary dimension $d$.
An interesting feature of this datastructure is that the C structures used
for vertices, simplices, and edges are all of fixed size, so that a fast
array-based implementation is possible, as opposed to a less-efficient
list-based approach commonly taken for finite element implementations
on unstructured meshes.
A detailed description of the ringed-vertex datastructure, along with a
complexity analysis of various traversal algorithms, can be found
in~\cite{Hols2001a}.

Since \FETK{} is based entirely on the $d$-simplex, for adaptive refinement it
employs simplex bisection, using one of the simplex bisection strategies
outlined earlier.
Bisection is first used to refine an initial subset of the simplices in the 
mesh (selected according to some error estimates, discussed below), and then a
closure algorithm is performed in which bisection is used recursively on
any non-conforming simplices, until a conforming mesh is obtained.
If it is necessary to improve element shape, \FETK{} attempts to optimize the
following simplex shape measure function for a given $d$-simplex $s$, in an
iterative fashion, similar to the approach taken in~\cite{BaSm97}:
\begin{equation}
\eta(s,d) = \frac{2^{2(1-\frac{1}{d})} 3^{\frac{d-1}{2}} |s|^{\frac{2}{d}}}
                 {\sum_{0 \le i < j \le d} |e_{ij}|^2}.
\end{equation}
The quantity $|s|$ represents the (possibly negative) volume of the
$d$-simplex, and $|e_{ij}|$ represents the length of the edge that
connects vertex $i$ to vertex $j$ in the simplex.
For $d=2$ this is the shape-measure used in~\cite{BaSm97} with
a slightly different normalization.
For $d=3$ this is the shape-measure developed in~\cite{LiJo94} 
again with a slightly different normalization.
The shape measure above can be shown to be equivalent to the sphere ratio
shape measure commonly used; the sphere measure is used in the black hole
mesh generation algorithms we describe below.

\subsection{Discretization and adaptivity in \FETK{}}

Given a nonlinear form $\langle F(u),v \rangle$,
its linearization bilinear form
$\langle DF(u)w,v \rangle$,
a Dirichlet function $\bar{u}$, and collection of simplices
representing the domain, \FETK{} uses a default linear element to produce 
and then
solve the implicitly defined nonlinear algebraic equations for the
basis function coefficients in the
expansion~(\ref{eqn:soln}).
The user can also provide his own element, specifying the number of degrees
of freedom to be located on vertices, edges, faces, and in the interior of
simplices, along with a quadrature rule, and the values of the basis functions
at the quadrature points on the master element.
Different element types may be used for different components of a
coupled elliptic system.
The availability of a user-defined general element makes it possible to,
for example, use quadratic elements, which in the present setting would allow
for the differentiation of the resulting solutions to the constraints
for use as initial data in an evolution code.

Once the equations are assembled and solved (discussed below),
{\em a posteriori} error estimates are computed from the discrete
solution to drive adaptive mesh refinement.
The idea of adaptive error control in finite element methods is to estimate
the behavior of the actual solution to the problem using only a previously
computed numerical solution, and then use the estimate to build an improved
numerical solution by upping the polynomial order
($p$-refinement) or refining the mesh ($h$-refinement) where appropriate.
Note that this approach to adapting the mesh (or polynomial order) to the
local solution behavior affects not only approximation quality, but also
solution complexity: if a target solution accuracy can be obtained with
fewer mesh points by their judicious placement in the domain, the cost of
solving the discrete equations is reduced (sometimes dramatically) because
the number of unknowns is reduced (again, sometimes dramatically).
Generally speaking, if an elliptic equation has a solution with local singular
behavior, such as would result from the presence of abrupt changes in the
coefficients of the equation (e.g., the conformal metric components in the present case),
then adaptive methods tend to give dramatic improvements over
non-adaptive methods in terms of accuracy achieved
for a given complexity price.
Two examples illustrating bisection-based adaptivity patterns
(driven by a completely geometrical ``error'' indicator simply
for illustration) are shown in Figure~\ref{fig:refinements}.
\begin{figure}[tbh]
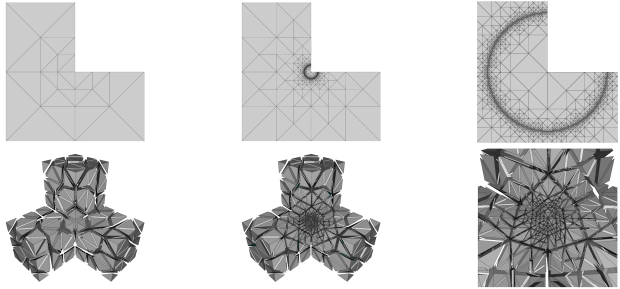

\begin{center}
\mbox{\myfigpng{mesh2d_0}{0.75in}}
\hspace*{1.0cm}
\mbox{\myfigpng{mesh2d_1}{0.75in}}
\hspace*{1.0cm}
\mbox{\myfigpng{mesh2d_2}{0.75in}} \\
\mbox{\myfigpng{mesh3d_0}{0.75in}}
\hspace*{1.0cm}
\mbox{\myfigpng{mesh3d_1}{0.75in}}
\hspace*{1.0cm}
\mbox{\myfigpng{mesh3d_2}{0.75in}}
\end{center}
\caption{Examples illustrating the 2D and 3D adaptive mesh refinement
         algorithms in \FETK{}.  The right-most figure in each row shows
         a close-up of the area where most of the refinement occurred
         in each example.}
  \label{fig:refinements}
\end{figure}

To drive the adaptivity in \FETK{},
we employ a residual error estimator, based on computing an
upper bound on the nonlinear residual as given 
in~(\ref{eqn:indicator_residual}).
Reference~\cite{Hols2001a} contains the detailed derivation of the
indicator~(\ref{eqn:indicator_residual}) used in \FETK{} for
a general nonlinear elliptic system of tensor equations of the
form~(\ref{eqn:str1})--(\ref{eqn:str3}).
The error estimator provides a bound on
the error in the $W^{1,p}$-norm, $1<p<\infty$, for an approximation
of the form~(\ref{eqn:galerkin}) to the solution of the 
weak formulation~(\ref{eqn:weak}).

\subsection{Solution of linear and nonlinear systems with \FETK{}}
When a system of nonlinear finite element equations must be solved in \FETK{},
the global inexact-Newton Algorithm~\ref{alg:newton} is employed,
where the linearization systems are solved by linear multilevel methods.
When necessary, the Newton procedure in Algorithm~\ref{alg:newton} is
supplemented with a user-defined normalization equation for performing
an augmented system continuation algorithm.
The linear systems arising as the Newton equations in each iteration of
Algorithm~\ref{alg:newton} are solved using a completely algebraic multilevel
algorithm.
Either refinement-generated prolongation matrices $P_{j-1}^j$, or user-defined
prolongation matrices $P_{j-1}^j$ in a standard YSMP-row-wise sparse matrix format,
are used to define the multilevel hierarchy algebraically.
In particular, once the single ``fine'' mesh is used to produce the
discrete nonlinear problem $F(u)=0$ along with its linearization $Au=f$
for use in the Newton iteration in Algorithm~\ref{alg:newton}, a $J$-level
hierarchy of linear problems is produced algebraically using the
variational conditions~(\ref{varCond}) recursively.
As a result, the underlying multilevel algorithm is provably convergent
in the case of self-adjoint-positive matrices~\cite{HoVa97a}.
Moreover, the multilevel algorithm has provably optimal $O(N)$ convergence
properties under the standard assumptions for uniform
refinements~\cite{Xu92a},
and is nearly-optimal $O(N \log N)$ under very weak assumptions on
adaptively refined problems~\cite{BDY88}.

Coupled with the superlinear convergence properties of the outer inexact
Newton iteration in Algorithm~\ref{alg:newton}, this leads to
an overall complexity of $O(N)$ or $O(N \log N)$ for the solution of the
discrete nonlinear problems in Step~1 of Algorithm~\ref{alg:MC}.
Combining this low-complexity solver with the judicious placement of unknowns
only where needed due to the error estimation in Step~2 and the subdivision
algorithm in Steps~3-6 of Algorithm~\ref{alg:MC}, leads to a very
effective low-complexity approximation technique for solving a general
class of nonlinear elliptic systems on 2- and 3-manifolds.

\subsection{Availability of \FETK{}}

A fully-functional MATLAB version of the \MC{} kernel of \FETK{}
for domains with the structure
of Riemannian 2-manifolds, called \MCLite{},
is available under the GNU copyleft license at the following website:
\begin{center}
{{\tt http://www.FEtk.ORG/}}
\end{center}
\MCLite{} employs the ringed-vertex datastructure and implements the same
adaptivity and solution algorithms that are used in \MC{}.
A number of additional tools developed for use with \MC{} and \MCLite{}
are also available under a GNU license at the site above,
including MALOC (a Minimal Abstraction Layer for Object-oriented C)
and SG (an OpenGL-based X11/Win32 polygon display tool with a linear 
programming-based OpenGL-to-Postscript generator).
SG was used to generate the pictures of finite element meshes appearing
in this paper.

\subsection{Tetrahedral Mesh Generation for Single or Binary Compact Objects}
   \label{sec:meshgen}

The general problem of generating a finite element mesh given a
three dimensional domain is quite complex and has been the subject of
intensive research over the past 40 years (for a relatively old comprehensive
overview the reader is referred to~\cite{STHE96}).
However, for the purpose of generating a tetrahedral mesh on a domain
with the geometry necessary to describe a binary collision between compact
objects we do not need a completely general method.
In this section we will describe a rather simple method for generating high
quality
coarse meshes suitable for the computation of the initial data describing
binary systems. The only restriction of the method is that the geometry of
the domain must have an axis of symmetry. We note that this restriction
applies only to the domain geometry and not the physics (i.e., the source
terms in the constraint equations do not have to have any symmetries).

\begin{figure}[tbh]
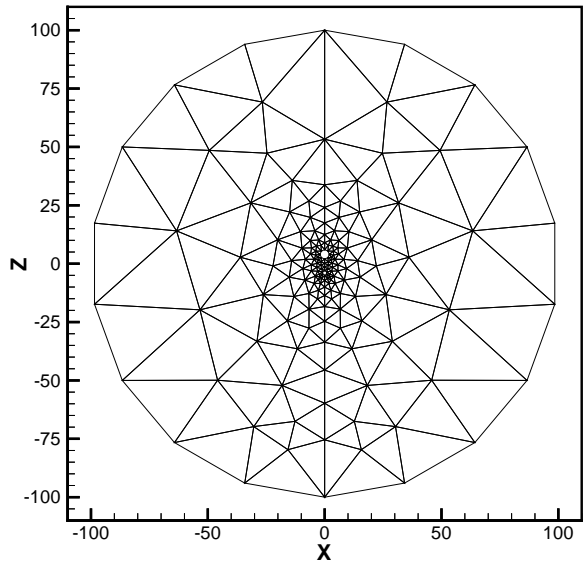

\begin{center}
\mbox{\myfig{meshgen1}{3.2in}}
\end{center}
\caption{The planar triangulation step in the construction of a finite element 
mesh for a binary black hole collision.  
The figure shows the discretization of the entire domain;
due to the mirror symmetry of the domain only the region
$x\geq 0$ was meshed.}
\label{fig:meshgen1}
\end{figure}

A typical domain for a binary black hole collision consists of the interior
of a large sphere (the outer boundary) with two smaller spheres removed 
(the throats of the holes).  Since the exact placement of the outer sphere 
relative to the throats is not important we lose no generality in making the
centers of the three spheres colinear.  In this case the geometry is symmetric
about the line joining the three centers.  Choose this line to be the z axis.
Then the intersection of the domain with the x-z plane consists of a circle 
with two smaller circles removed from its interior.  The method consists of
first meshing this planar domain with triangular elements and then rotating the
elements around the z-axis to form triangular tori.  These tori are then 
subdivided into tetrahedra in such a way as to produce a conforming mesh.

We illustrate the method with an example.  Let the outer boundary have radius
100 and let the two throats be centered at $z = \pm 4$ with radii 2 and 1
respectively.  The geometry is chosen for illustrative purposes.  We discretize
the outer circle of the planar domain with 20 line segments and the inner
circles with 8 and 10 line segments respectively.  Two views of the 
triangulation are shown in figures~\ref{fig:meshgen1} and~\ref{fig:meshgen2}.

\begin{figure}[tbh]
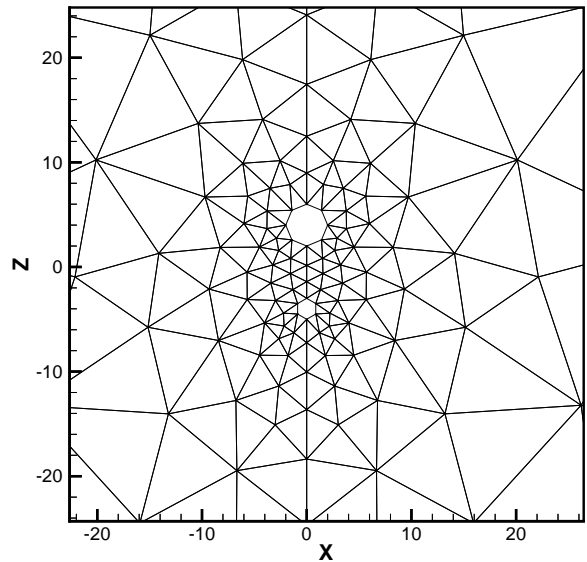

\begin{center}
\mbox{\myfig{meshgen2}{3.2in}}
\end{center}
\caption{A closeup of the mesh from Figure~\ref{fig:meshgen1}
showing the details of the mesh around the two holes.}
\label{fig:meshgen2}
\end{figure}

The triangulation algorithm one employs for this part of the mesh generation
is not important.
In particular, one could use any one of a number of off-the-shelf codes
freely available on the internet (for example,
see {\tt www-2.cs.cmu.edu/$\sim$quake/triangle.html}).
A list of commercial and free mesh generators can be found at\\
{\tt www-users.informatik.rwth-aachen.de/$\sim$roberts/software.html}.\\
The triangulation code we have chosen to use is one developed
by one of the authors (D.B.).
We have the ability to produce highly graded meshes since the inner circles of
the planar domain will in general be much small than the outer boundary circle.
We would like to emphasize the distinguishing feature of the triangulation code. 
Although there are various codes that generate highly grades meshes with low quality,
the code developed by D.B. is superior than those because it generates highly graded meshes with 
high quality.

\begin{figure}[tbh]
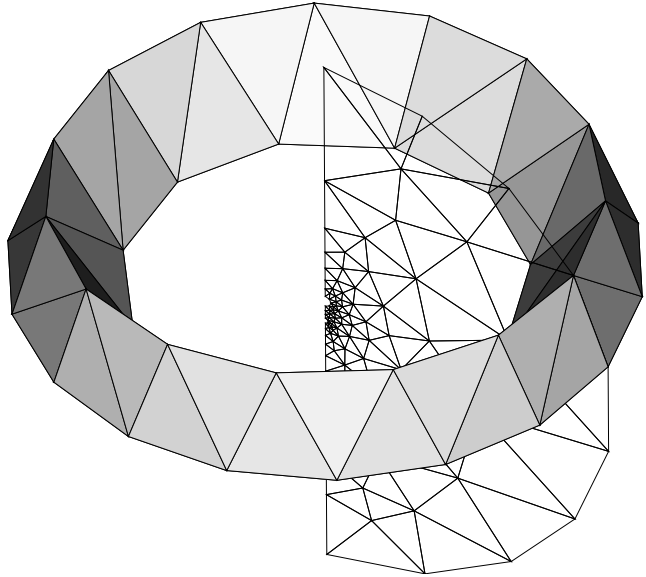

\begin{center}
\mbox{\myfig{meshgen6}{3.0in}}
\end{center}
\caption{The second stage of mesh generation.  
One of the discretized tori is shown along with the planar 
mesh from figure~\ref{fig:meshgen1}.}
\label{fig:meshgen34}
\end{figure}

Once the planar triangulation is obtained the tetrahedra are constructed as
follows.
Each vertex of the triangulation defines a circle in the three-dimensional
space by rotation about the z-axis.
These circles are discretized by computing their circumference and dividing
by a measure of the size of the desired tetrahedra.
The measure we have used is the average distance to the neighbors of the
vertex in the triangulation.
For vertices with a neighbor on the axis of symmetry the discretization of
the circle is limited to between five and seven line segments, regardless
of the circumference or average neighbor distance.
This limits
the outdegree of the vertices which lie on the z-axis and prevents elements
with unusually small angles from being formed there.
Once each triangular torus is discretized along each of its edges it is a 
relatively simple matter to subdivide each torus into tetrahedra taking care
that the elements thus formed are reasonably shaped and consistent with their
neighbors.
In figure~\ref{fig:meshgen34} we show one of these tori and in 
figure~\ref{fig:meshgen3} we show the surface of the final mesh.

The quality of the meshes produced by this method is normally very good.
One measure of the quality of a tetrahedron which is often used is the ratio
of the radii of its circumscribing to inscribing spheres, $\alpha = R / 3r$.
The normalization factor is such that a regular tetrahedron has $\alpha=1$
and it has been proven~\cite{LiJo94} that this is a global minimum, i.e.,
no tetrahedron has $\alpha < 1$.
The maximum aspect ratio in the mesh shown in figure~\ref{fig:meshgen3}
is 1.91 while the average is 1.26. 
By comparison the aspect ratio of the reference element 
with vertices located at $(0,0,0)$, $(1,0,0)$, $(0,1,0)$, and 
$(0,0,1)$ is 1.37.

\begin{figure}[tbh]
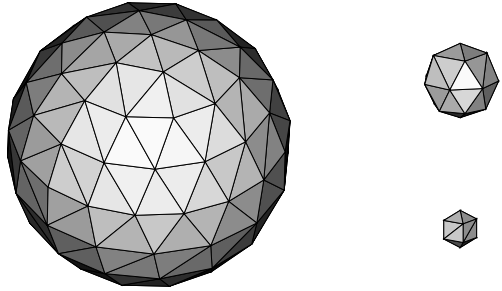

\begin{center}
\mbox{\myfig{meshgen3}{1.49in}}
\mbox{\myfig{meshgen4}{1.49in}}
\end{center}
\caption{The final mesh generated from the planar domain shown in 
figure~\ref{fig:meshgen1}.  
On the left the surface of the outer boundary is shown, 
on the right the surfaces of the two holes.}
\label{fig:meshgen3}
\end{figure}

Finally, we note that although the example above describes
the generation of a mesh for a binary black hole simulation the method is
not restricted to this type of problem.  We can also generate meshes suitable
for any type of neutron star - black hole collision by replacing one or both of
the inner circles of the planar domain with a set of curves which are preserved
on the mesh.
The curves may represent (expected) isosurfaces of density or some other
quantity.  In Figures~\ref{fig:meshgen7} and~\ref{fig:meshgen8}
we show the mesh generated for
a collision between two neutron stars, one expected to be oblate and one
expected to be prolate.
Note that the initial data computed on such a mesh may or may not
have stars of such shape, we have merely chosen a coarse mesh which we
expect to reflect the nature of the solution to the constraints equations.
The actual mesh which solves the numerical problem to a given tolerance may
have a much different node distribution.
\begin{figure}[tbh]
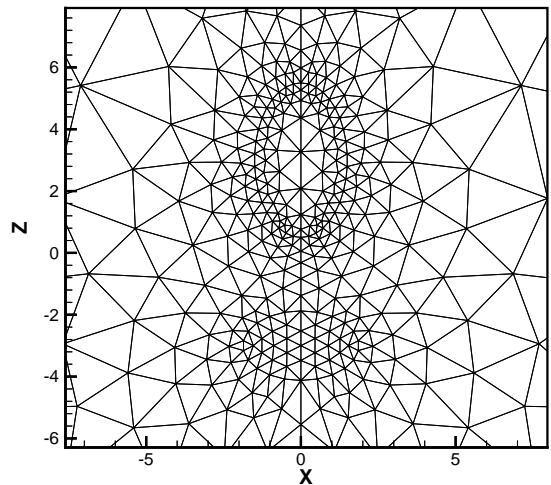

\begin{center}
\mbox{\myfig{meshgen7}{3.0in}}
\end{center}
\caption{A mesh suitable for the calculation of the collision between two
stars, one initially prolate and one initially oblate.
The figure shows shows the planar triangulation about the two stars.}
\label{fig:meshgen7}
\end{figure}

\begin{figure}[tbh]
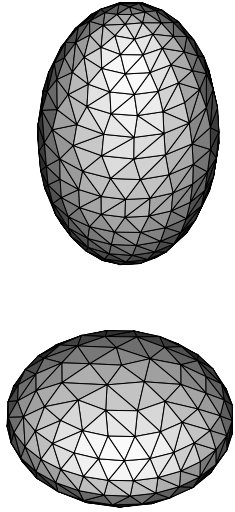

\begin{center}
\mbox{\myfig{meshgen8}{3.0in}}
\end{center}
\caption{The ``surfaces'' of the two stars represented
by the mesh in Figure~\ref{fig:meshgen8}.}
\label{fig:meshgen8}
\end{figure}

\subsection{Computing Conformal Killing Vectors}
\label{app:B}

We describe a simple algorithm for computing conformal Killing
vectors and their Petrov-Galerkin approximations.
It can be shown that for Dirichlet and Robin
boundary conditions the 
homogeneous version of the momentum constraint
\begin{equation}
\hat{D}_a (\hat{L}V)^{ab} = 0
\label{KillingAppC}
\end{equation}
has only the trivial solution $V^a=0$.  However, using a pure Neumann
condition
removes the well-posedness of the problem and leads to a method for
computing
the conformal Killing vectors of $\hat{\gamma}_{ab}$.
Suppose that $K^a$ satisfies~(\ref{KillingAppC}).
This implies
\begin{equation}
0=(\hat{D}_a(\hat{L}K)^{ab},V^a)_{L^2(\calg{M})}
     = A(K^a,V^a), 
\end{equation}
$\forall~ V^a \in H^1_{0,D}$.
This can be posed as the generalized eigenvalue problem
\begin{equation}
  \label{eqn:find_killing}
\text{Find}~~K^a \in H^1_{0,D} ~\text{s.~t.}~
F(K^a,V^a) = \epsilon G(K^a,V^a),
\end{equation}
$\forall~ V^a \in H^1_{0,D}$, where
\begin{eqnarray}
F(K^a,V^a) &=& \int_M 2\mu (\hat{E}K)_{ab} (\hat{E}V)^{ab} dx,
\\
G(K^a,V^a) &=& \int_M \hat{D}_aK^a \hat{D}_bV^b dx.
\end{eqnarray}
If there exists a $K^a$ such that this holds for some $\epsilon \neq 0$
 then the 
operator $F-\epsilon G$ arising from the form through the
Riesz representation theorem is singular.  
For a given domain with a tetrahedral
decomposition, \FETK{} can discretize~(\ref{eqn:find_killing}) to produce
the matrix versions of $F$ and $G$ which can then be given to a
general eigenvalue package such as EISPACK.
The result would be a set of eigenvalues and eigenvectors.
The eigenvectors corresponding to the eigenvalue $\epsilon = 2/3$
form an (orthogonal) basis for the kernel of the discrete momentum
operator, representing a discrete approximation to the space
of conformal Killing vectors of $\hat{\gamma}_{ab}$.

\subsection{Computation of the ADM Mass on Adaptive Meshes}
\label{app:C}

Here we describe an alternative expression for the ADM mass that is more
appropriate for adaptive methods than that usually used in numerical
relativity.
The ADM mass is normally computed by the covariant surface integral
\begin{equation}
M = -\frac{1}{2 \pi} \int_{R} \hat{n}_a \hat{\Delta}^{a} \phi \; ds
\label{eq:adm1}
\end{equation}
where $R$ indicates the integral is to be taken over the outer sphere only.
The mass is actually defined as the limit $R \rightarrow \infty$ and assumes
that the equivalent energy associated with the conformal metric vanishes.
However, if this is the case, and if the outer sphere is reasonably far away
from the region where the source terms are varying rapidly then the adaptive
nature of \FETK{} will ensure that the outer sphere remains relatively coarsely
meshed, even in cases where the solution is computed to a high degree of 
accuracy.  This is because a $1/r$ variation in $\phi$ can be computed 
accurately with only a few elements.  Hence the outer boundary will remain 
coarsely meshed and the surface integral above will not be able to be
estimated with any degree of accuracy.  However, by 
converting~(\ref{eq:adm1}) to a volume integral using the Gauss theorem,
and by then evaluating it along with any resulting surface integrals over 
inner boundaries, it is possible to gain from the locally adapted mesh
rather than lose from it.
In particular, one has:
\begin{equation}
M = -\frac{1}{2 \pi} \left(
\int_{\calg{M}} \hat{\Delta} \phi \; dx
 - \int_{\partial \calg{M}-R} \hat{n}_a \hat{\nabla}^{a} \phi \; ds
\right).
\end{equation}

  \begin{figure}[!htp]
  \begin{center}
  \begin{tabular}{ccc}
  \includegraphics[width=0.25\textwidth]{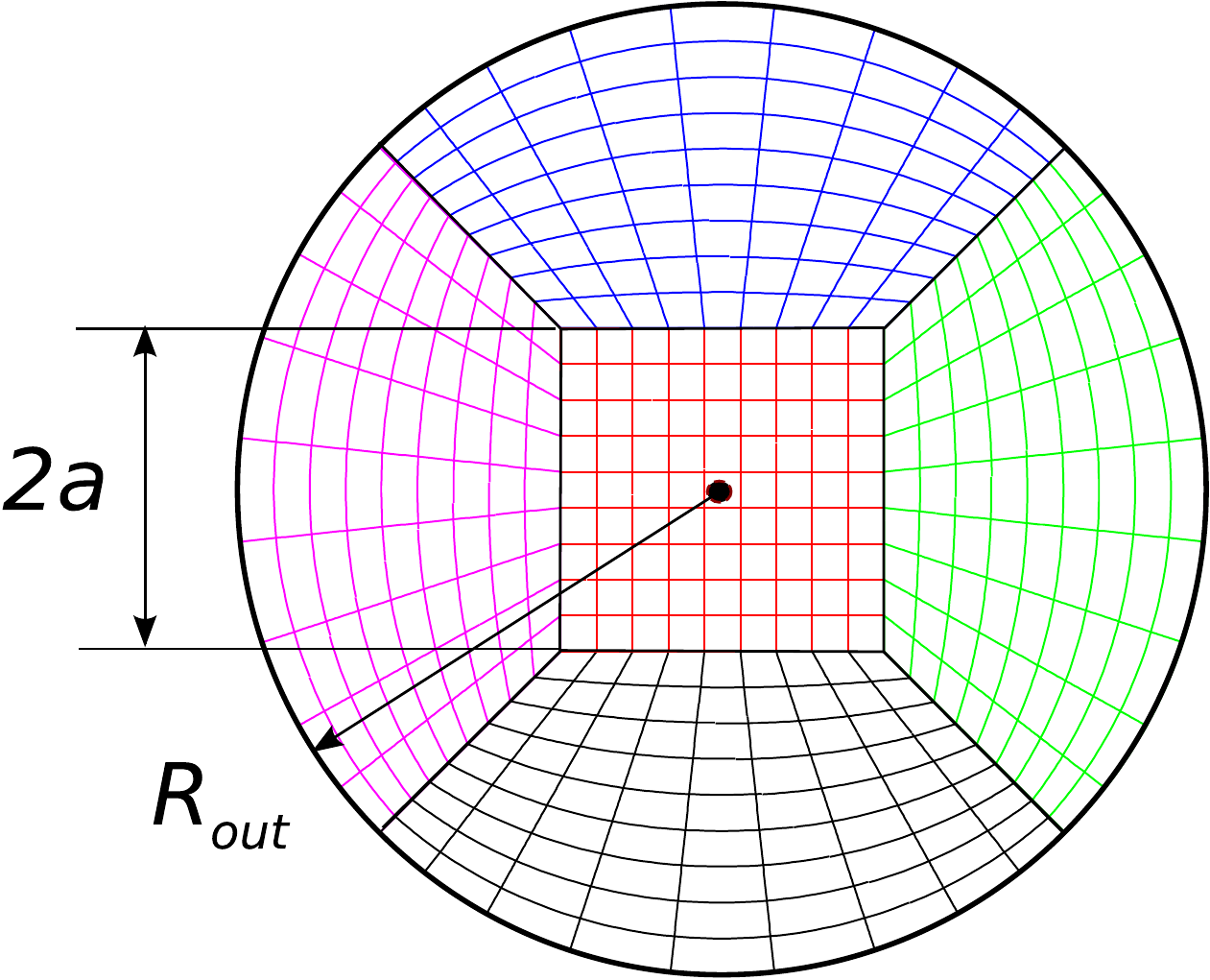} &
  \includegraphics[width=0.27\textwidth]{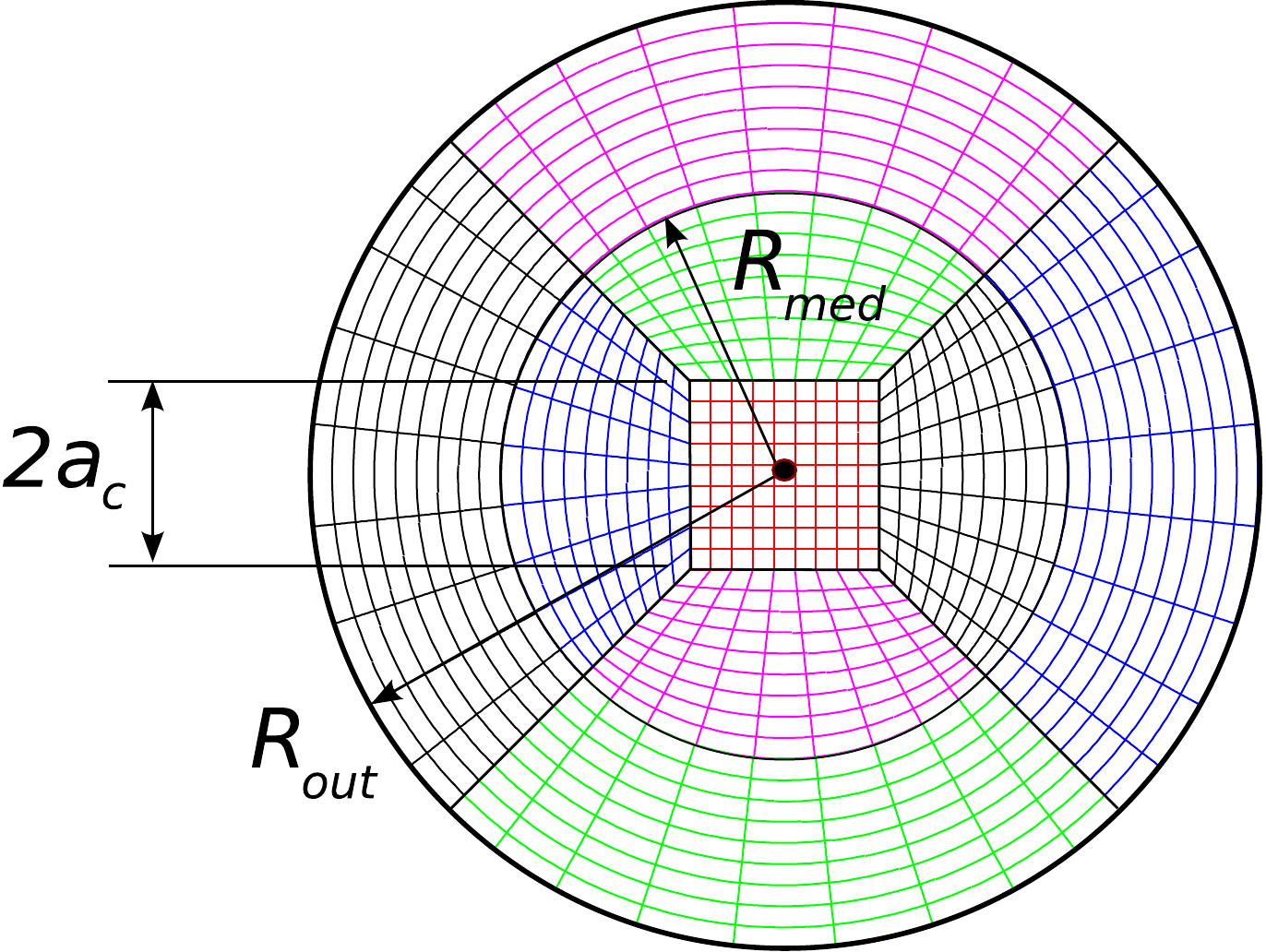} &
  \includegraphics[width=0.20\textwidth]{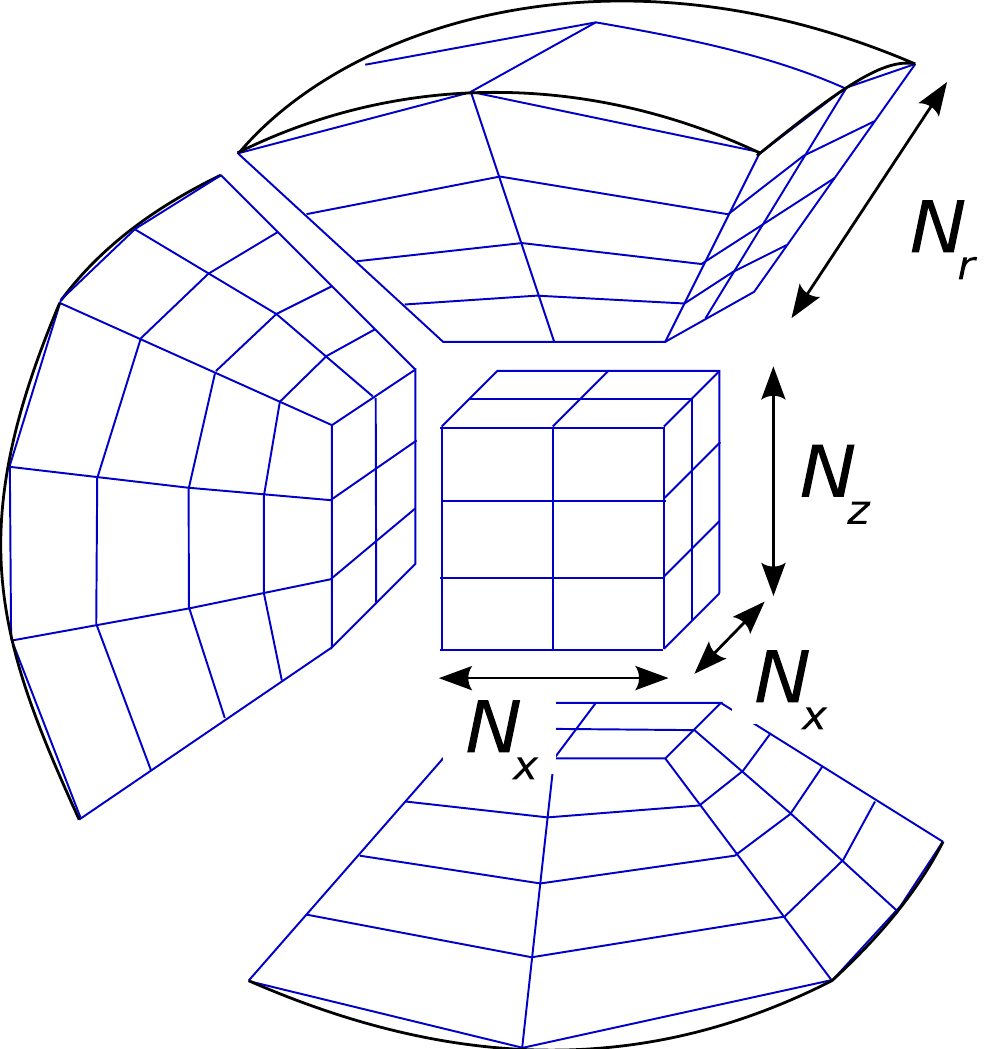} \\
  {\tiny (a)} & {\tiny (b)} & {\tiny (c)}
  \end{tabular}

  \caption{
    An equatorial cut of (a) seven-block and (b) thirteen-block systems.
    (c) Grid dimensions for the seven-block system.
  }
  \label{F:patchsystems}
  \end{center}
  \end{figure}

  \begin{figure}[htbp]
  \begin{center}
  \begin{tabular}{cc}
  \includegraphics[width=0.45\textwidth]{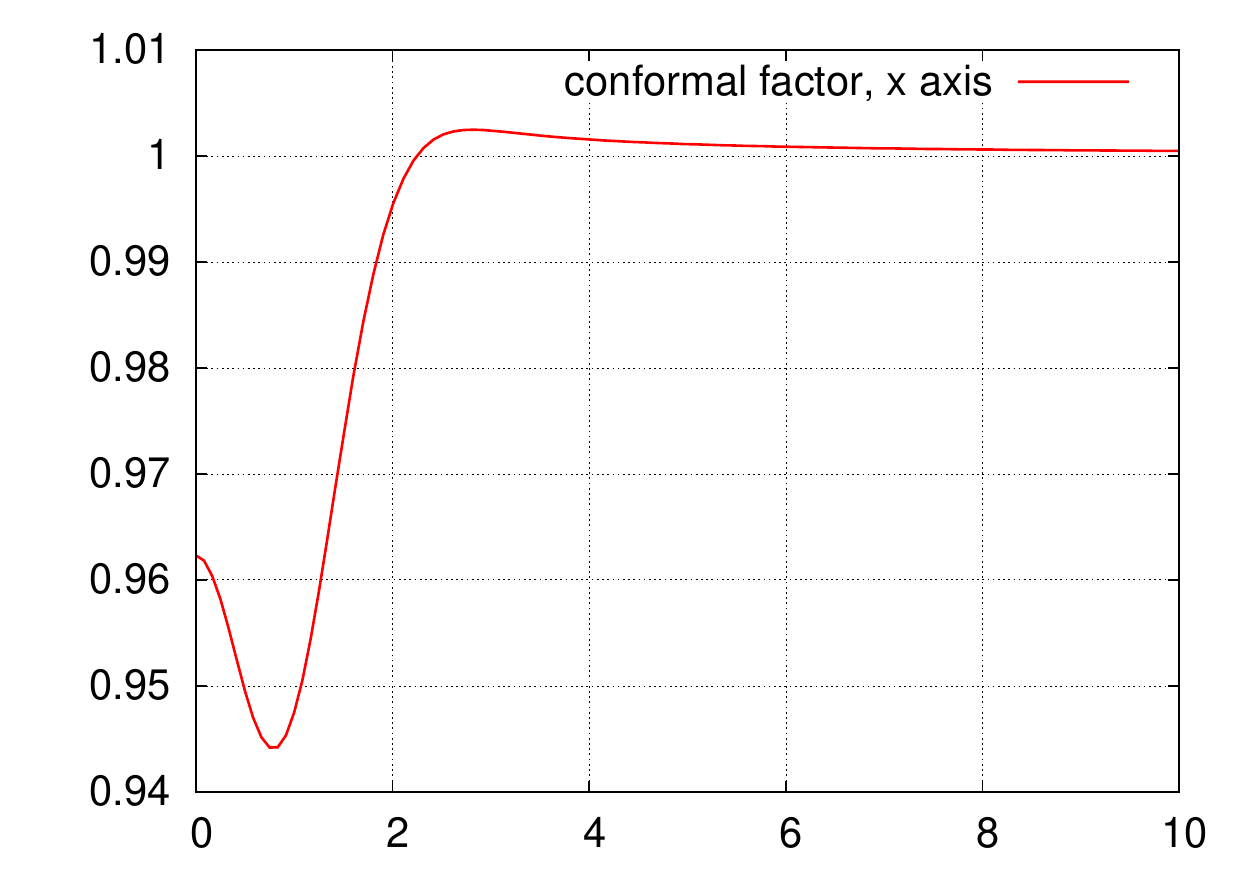} &
  \includegraphics[width=0.45\textwidth]{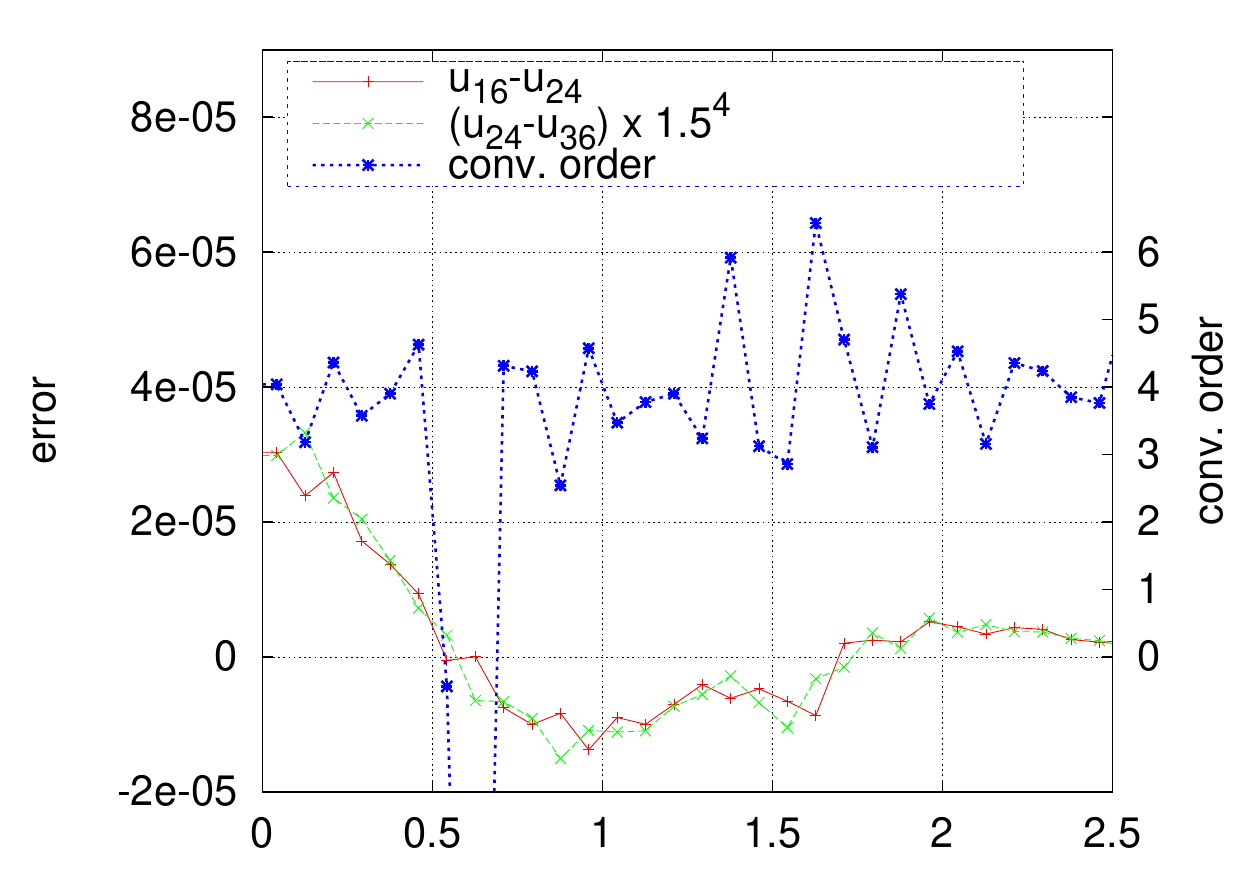} \\
  {\tiny (a)} & {\tiny (b)} \\
  \end{tabular}
  \caption{
    (a) 1-d cut through the Brill wave conformal factor $\psi_{fine}$
        for problem (a) along the $x$-axis.
    (b) Errors $\psi_{coarse}-\psi_{medium}$, 
        $\psi_{medium}-\psi_{fine}$, and pointwise convergence order on
        the 1-d cut along the $x$-axis.
  }
  \label{F:pointwise-conv-cpsi}
  \end{center}
  \end{figure}

Using the Hamiltonian constraint~(\ref{eqn:ham}) this is
\begin{eqnarray}
M & = & -\frac{1}{2 \pi} 
\int_{\calg{M}} \left(
\frac{1}{8} \hat{R} \phi 
+ \frac{1}{12} ({\rm tr} K)^2 \phi^5
\right. \\
 & & 
\left.
- \frac{1}{8} (\Ahatstar_{ab} + (\hat{L}W)_{ab})^2 \phi^{-7}
- 2 \pi \hat{\rho} \phi^{-3}
\right)
 \; dx
\\
 & & 
 + \frac{1}{2 \pi}
\int_{\partial \calg{M}-R} \hat{n}_a \hat{\nabla}^{a} \phi \; ds.
\label{eq:adm2}
\end{eqnarray}
Normally the inner boundaries will be highly refined (although one can 
construct cases where this does not happen) and so this method will give 
much more accurate results than the simple use of~(\ref{eq:adm1}).

\subsection{Brill waves initial data on multi-block domains}
\label{S:BrillWaveID}

As the final numerical experiment, we demonstrate the Brill wave initial
data generated by using~\FETK{} on a multi-block spherical domain.
The comprehensive treatment of this experiment can be found in~\cite{KAHPT07a}.

In General Relativity, initial data on a spatial 3D-slice has to satisfy the
Hamiltonian and momentum constraint equations~\cite{Arnowitt62},
\begin{eqnarray*}
{}^3R - K^{ij}K_{ij} + K^2 &= 0\\
\nabla_i(K^{ij} - g^{ij}K) &= 0
\end{eqnarray*}
where $K_{ij}$ and $K$ are the \textit{extrinsic curvature} of the 3D-slice and
its trace, respectively, and ${}^3R$ the Ricci scalar associated to the
spatial metric $g_{ij}$.

Brill waves~\cite{Bri59} 
constitute a simple yet rich example of initial data in numerical
relativity. In such case  the extrinsic curvature of the slice is zero, and
the above equations reduce to a single one, stating that the Ricci scalar has to
vanish:
\begin{equation}
 {}^3R = 0 \,. \label{Eq:Ris0}
\end{equation}

If the spatial metric is given up to one unknown function, Eq.~(\ref{Eq:Ris0})
in principle allows to solve for such function and thus complete the construction of
the initial data. The Brill equation is a special case of~(\ref{Eq:Ris0}), where the
3-metric is expressed through the conformal transformation $g_{ij} = \psi^4
\tilde{g}_{ij}$ of an unphysical metric $\tilde{g}_{ij}$, with an unknown
conformal factor $\psi$. Equation~(\ref{Eq:Ris0}) then becomes~\cite{B:Mur93}:
\begin{equation}
 (-\nabla_{\tilde{g}}^2 + \frac{1}{8}\tilde{R})\psi = 0
  \label{Eq:BrillWave2}
\end{equation}
where $\tilde{R}$ and $\nabla_{\tilde{g}}^2$ are the Ricci scalar curvature
and Laplacian of the unphysical metric $\tilde{g}_{ij}$, respectively.

  \begin{figure}[htbp]
  \begin{center}
  \begin{tabular}{cc}
  \includegraphics[width=0.45\textwidth]{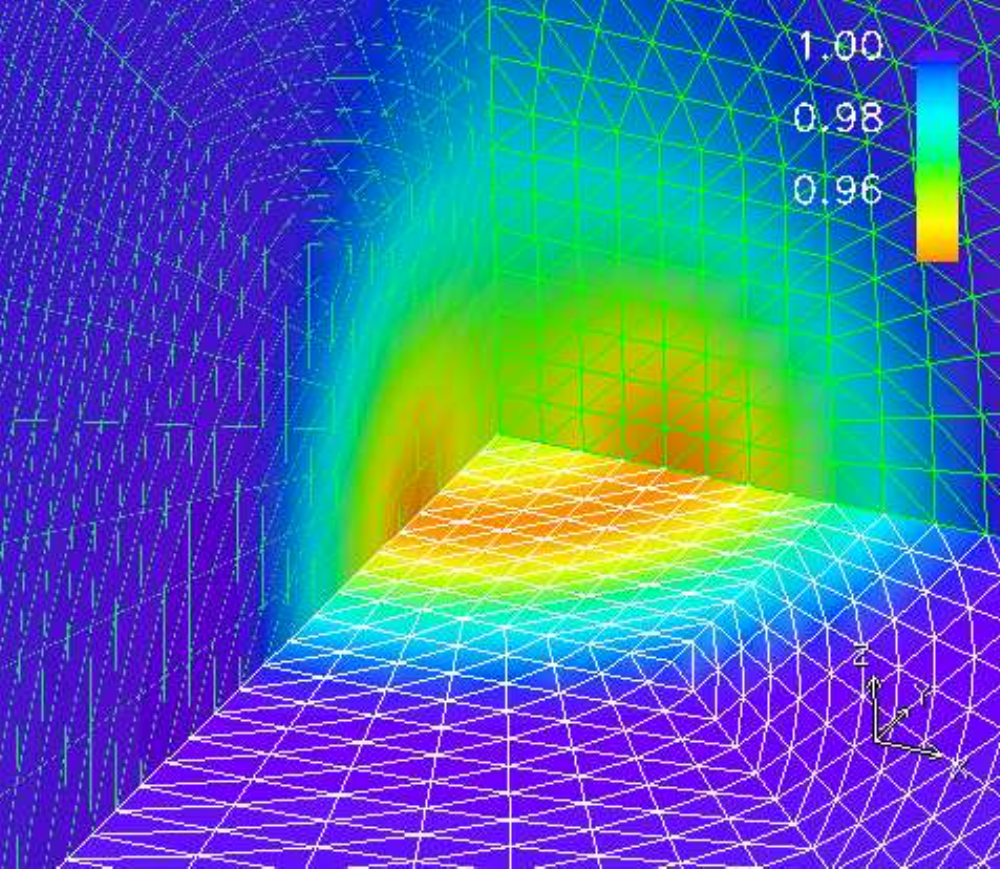}     &
  \includegraphics[width=0.45\textwidth]{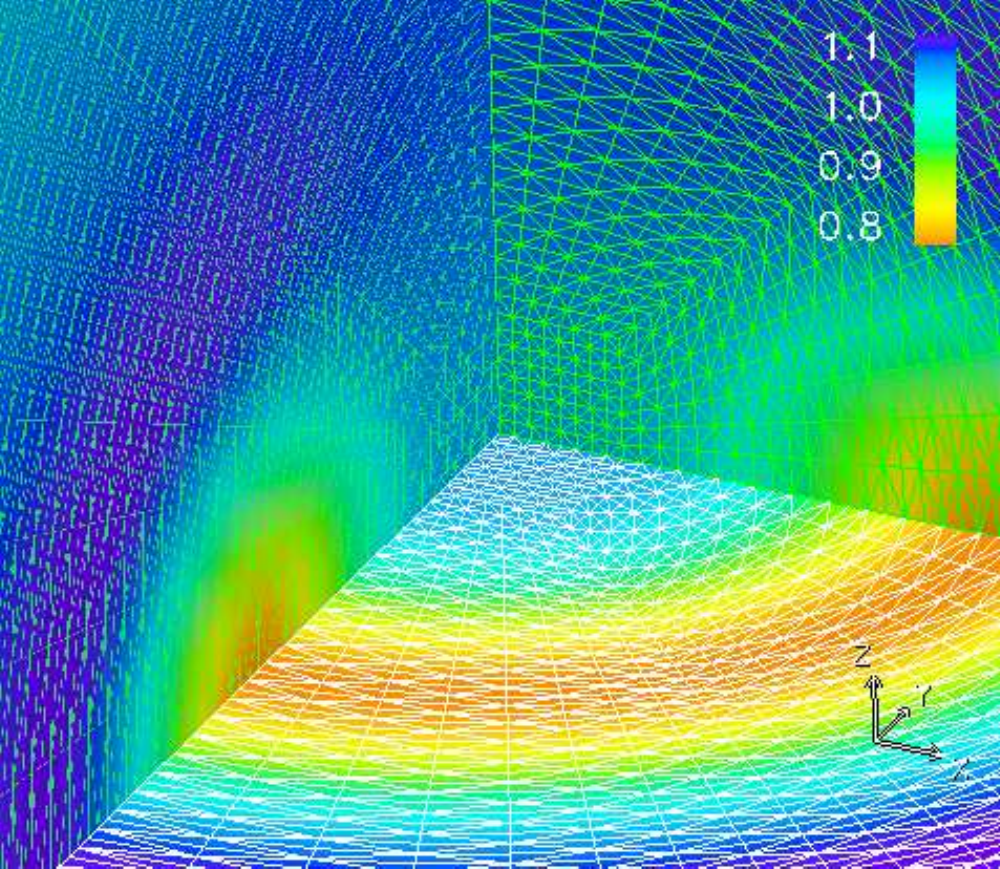} \\
  {\tiny (a)} & {\tiny (b)}
  \end{tabular}
  \caption{Potentials for the two types of Brill waves considered: 
           Holz' (a) and  toroidal (b) forms.}
  \label{F:bw-examples}
  \end{center}
  \end{figure}

We work with two specific choices for
$q(\rho,z)$:
\begin{itemize}
\item[(a)] Holz' form~\cite{alcubierre-2000-61}: 
           $q_H(\rho,z) = a_H \rho^2 e^{-r^2}$, with amplitude $a_H = 0.5$;
\item[(b)] toroidal form:  
           $q_t(\rho,z) = a_t \rho^2 \exp{\left(
               -\frac{(\rho-\rho_0)^2}{\sigma_{\rho}^2}
               -\frac{z^2}{\sigma_z^2}
           \right)}$, 
           with amplitude $a_t = 0.05$, radius $\rho_0=5$, width in
           $\rho$-direction $\sigma_{\rho}=3.0$ and width in
           $z$-direction $\sigma_z=2.5$.
\end{itemize}

Here we will focus on the axisymmetric case with the conformal metric given in
cylindrical coordinates by

\begin{equation}
 \tilde{g}_{ij} = e^{2q(\rho,z)}(d\rho^2+dz^2) + \rho^2 d\varphi^2 \,,
 \label{Eq:UnphMetric}
\end{equation}
where $q(\rho, z)$ is a function satisfying the following conditions:
\begin{enumerate}
\item regularity at the axis: $q(\rho=0,z) = 0$, $\frac{\partial q}{\partial\rho}|_{\rho=0} = 0$,
\item asymptotic flatness: $q(\rho,z)|_{r\to\infty} < O(1/r^2)$, 
      where $r$ is the spherical radius $r~=~\sqrt{\rho^2+z^2}$ \,.
\label{L:AsymptoticFlatnessBC}
\end{enumerate}    

The Hamiltonian constraint equation~(\ref{Eq:Ris0}) becomes a second order
elliptic PDE, which with asymptotically
flat boundary conditions at $r\to\infty$ takes the form
\begin{eqnarray}
\label{Eq:BrillWave}
-\nabla^2\psi(\rho,z) + V(\rho,z)\psi(\rho,z) & = & 0, \\
\label{Eq:PsiBoundaryCond}
 \psi|_{r\to\infty} &=& 1 + \frac{M}{2r} + O(1/r^2),
\end{eqnarray}
with the potential $V(\rho,z)$ given by 
$$
 V = -\frac{1}{4}(q''_{\rho\rho} + q''_{zz}).
$$

We numerically solve this equation using~\FETK{} on the 13-patch multi-block
spherical domain (see figure~\ref{F:patchsystems}).  We use domain 
 parameters $R_{out}~=~30$, $R_{med}~=~7$, $a_c~=~1.5$, and grid dimension
ratios $N:N_{r,inner}:N_{r,outer}=2:3:12$. Our low-medium-high resolution
triple is $N=32$, $N=36$ and $N=40$, except for pointwise convergence tests on
the $x$-axis (see figure~\ref{F:pointwise-conv-cpsi}), where we use $N=16$,
$N=24$ and $N=36$ (since they all differ by powers of $1.5$).

\section{Conclusion}
  \label{sec:conclusions}

In this paper we have described the use of finite element methods
to construct numerical solutions to the initial
value problem of general relativity.
We first reviewed the classical York conformal decomposition,
and gave a basic framework for deriving weak formulations.
We briefly outlined the notation used for the relevant function spaces,
and gave a simple weak formulation example.
We then derived an appropriate symmetric weak formulation of the 
coupled constraint equations, and summarized a number of basic 
theoretical properties of the constraints.
We also derived the linearization bilinear form of the weak form for use with
stability analysis or Newton-like numerical methods.
A brief introduction to adaptive finite element methods for
nonlinear elliptic systems was then presented,
and residual-type error indicators were derived.
We presented several general {\em a priori} error estimates 
from~\cite{Hols2001a,HoTs07a,HoTs07b}
for general Galerkin approximations to solutions equations such as the 
momentum and Hamiltonian constraints.
The numerical methods employed by MC were described in detail,
including the finite element discretization,
the residual-based {\em a posteriori} error estimator, the adaptive
simplex bisection strategy, the algebraic multilevel solver, and the
Newton-based continuation procedure for the solution of the nonlinear
algebraic equations which arise.
We described a mesh generation algorithm for modeling
compact binary objects,
outlined an algorithm for computing conformal Killing vectors, 
describes the numerical approximation of the ADM mass, and
gave an example showing the use of MC
for solution of the coupled constraints in the setting of a binary
compact object collision.

The implementation of these methods in the ANSI C finite element code
named MC was discussed in detail, including descriptions of the
algorithms and data structures it employs.
MC was designed specifically for solving general second-order
nonlinear elliptic systems of tensor equations on Riemannian manifolds
with boundary.  
The key feature of MC which makes it particularly useful for
relativity applications is the unusually high degree of abstraction
with which it can be used.
The user need only supply two functions (one for a linear problem)
in the form of a short C code file.
These functions are generally coded exactly as the weak form of the
equation and its linearization are written down
(our initial data constraint specification in MC is close to a
cut-and-paste of equations~(\ref{eqn:weakHamMom})
and~(\ref{eqn:weakHamMom_linearized}) into a C source file).
The user does not have to provide the elliptic system in
discrete form as is usually required in finite difference
implementations, and does not normally have to supply detailed coefficient
information.
In particular, the user provides only the two forms
$\langle F(u),v \rangle$ and $\langle DF(u)w,v \rangle$.
If {\em a posteriori} error estimation is to be used, then the user must
provide a third function $F(u)$, which is essentially the strong form of
the differential equation as needed for the error estimator given
in Section~\ref{sec:fem_residual}.
MC is also able to handle systems on a manifold whose atlas has
more than one chart.

\ack

M.~Holst thanks K.~Thorne, L. Lindblom, and H.~Keller for many fruitful
discussions over several years at Caltech.
D.~Bernstein thanks H.~Keller and Caltech Applied Mathematics for their
generous support during the initial phases of this research at Caltech.
M. Holst was supported in part by NSF Awards~0715146 and 0511766,
and DOE Awards DE-FG02-05ER25707 and DE-FG02-04ER25620.

B.~Aksoylu and M.~Holst thank O. Korobkin of Louisiana State University
for providing the numerical results on Brill wave initial data.

\section*{References}
\addcontentsline{toc}{section}{References}
\bibliographystyle{unsrt}
\bibliography{../bib/papers,../bib/books,../bib/mjh,../bib/burak}
\end{document}